\documentclass[a4paper,11pt]{article}
\usepackage{fullpage}
\usepackage{graphics,amsmath,amssymb,amsthm,accents,epic}
\usepackage{mathrsfs}
\usepackage{cite}
\usepackage[top=1 in,bottom=1 in,left=1.0 in,right=1.0 in]{geometry}
\usepackage{bm}
\usepackage{subfigure}
\usepackage{latexsym}
\usepackage{xcolor}
\usepackage[colorlinks,linkcolor=black,anchorcolor=blue,citecolor=blue]{hyperref}
\usepackage{amsfonts}
\usepackage{mathtools}
\usepackage{multirow}
\usepackage{caption}
\usepackage{ulem}
\usepackage{titlesec}

\numberwithin{equation}{section}

\def \c#1{\accentset{\circ}{#1}}

\newtheorem{prop}{Proposition}[section]
\newtheorem{theorem}{Theorem}[section]

\newtheorem{remark}{Remark}[section]

\newcommand{\wh}{\widehat}
\newcommand{\wt}{\widetilde}

\newcommand{\bA}{\boldsymbol{A}}
\newcommand{\bB}{\boldsymbol{B}}
\newcommand{\bM}{\boldsymbol{M}}
\newcommand{\bL}{\boldsymbol{L}}
\newcommand{\br}{\boldsymbol{r}}

\newcommand{\ba}{\boldsymbol{a}}
\newcommand{\bC}{\boldsymbol{C}}
\newcommand{\bK}{\boldsymbol{K}}
\newcommand{\bS}{\boldsymbol{S}}
\newcommand{\bT}{\boldsymbol{T}}
\newcommand{\bV}{\boldsymbol{V}}
\newcommand{\bW}{\boldsymbol{W}}
\newcommand{\bu}{\boldsymbol{u}}
\newcommand{\bI}{\boldsymbol{I}}
\newcommand{\bv}{\boldsymbol{v}}
\newcommand{\bw}{\boldsymbol{w}}
\newcommand{\bG}{\boldsymbol{G}}

\newcommand{\be}{\boldsymbol{e}}
\newcommand{\bH}{\boldsymbol{H}}
\newcommand{\bJ}{\boldsymbol{J}}
\newcommand{\bF}{\boldsymbol{F}}

\newcommand{\bs}{\boldsymbol{s}}
\newcommand{\bP}{\boldsymbol{P}}
\newcommand{\bQ}{\boldsymbol{Q}}
\newcommand{\bp}{\boldsymbol{p}}
\newcommand{\bq}{\boldsymbol{q}}
\newcommand{\bU}{\boldsymbol{U}}
\newcommand{\bX}{\boldsymbol{X}}

\newcommand{\mV}{\mathcal{V}}

\newcommand{\tc}{\,^t \hskip -2pt {\boldsymbol{c}}}

\newcommand{\nn}{\nonumber}
\newcommand{\st}{\hbox{\tiny\it{T}}}

\title{Discrete Gerdjikov–Ivanov models and their higher-order counterparts from the Cauchy matrix scheme}	

\author{Xiao-gang Mu,~~Song-lin Zhao\footnote{
Corresponding Author: songlinzhao@zjut.edu.cn}\\
\small{School of Mathematical Sciences, Zhejiang University of Technology, Hangzhou 310023,  China}}

\begin{document}

\maketitle

\begin{abstract}

The Gerdjikov–Ivanov (GI) equation is an important model in the derivative nonlinear Schr\"{o}dinger system, 
yet its fully discrete integrable analogues remain unexplored. In this paper, we systematically construct 
discrete versions of both the GI equation and its higher-order counterpart (hGI equation) within the Cauchy matrix framework. 
Starting from the Sylvester equation equipped with two distinct sets of discrete dispersion relations, 
we derive the shift dynamics of the master functions and eliminate auxiliary variables to obtain closed lattice systems. 
Since the elimination step admits several equally valid algebraic identities, this procedure yields four conjugate-symmetric 
families of discrete GI (dGI) models and four families of discrete higher-order GI (dhGI) models. 
For each discrete model, we provide explicit $N$-soliton and multiple-pole solutions 
via the Cauchy matrix method with diagonal and Jordan-block spectral matrices, respectively. We verify through a two-step continuum limit, 
contracting one lattice direction at a time, that all four dGI models reduce to the same continuous GI equation and all four dhGI models 
reduce to the same continuous hGI equation. Finally, we investigate reductions: local complex conjugate reductions yield scalar dGI and dhGI 
equations with explicit solutions. Moreover, in the higher-order case, pairwise recombinations of the dhGI lattice equations admit nonlocal reductions 
that produce nonlocal dhGI equations and their solutions.

\begin{description}
\item[Keywords:] Discrete Gerdjikov–Ivanov models, Discrete higher-order Gerdjikov–Ivanov models, Cauchy matrix scheme, solutions, reduction
\end{description}

\end{abstract}

\section{Introduction}

The nonlinear Schr\"{o}dinger (NLS) equation constitutes one of the most important families 
of integrable systems in mathematical physics, with applications ranging from nonlinear optics and plasma physics to Bose--Einstein condensates 
\cite{Ablowitz2004,Mio1976,Kodama1985}. Considering higher order nonlinear effects, three celebrated equations with derivative–type
nonlinearities have been proposed, namely, the Kaup-Newell (KN) equation \cite{KN1978} , the Chen-Lee-Liu (CLL) equation \cite{CLL1979} 
and the GI equation \cite{GI1983,GerdjikovKulish1983}. It is known that these three equations may be transformed into each other by a chain of gauge
transformations. Among these three equations, the GI equation takes the form
\begin{align}
\label{GI-eq}
\mathrm{i}u_{t}+u_{xx}-\mathrm{i}u^2u^*_{x}+\frac{1}{2}|u|^4u=0,
\end{align}
where $\mathrm{i}$ is the imaginary unit, asterisk denotes complex conjugation and $|\cdot|$ means the complex modulus. 
This equation has been studied extensively in terms of its Darboux transformation \cite{Fan2000}, Riemann–Hilbert approach \cite{GuoLing2012}, 
$\bar{\partial}$-dressing method \cite{LuoFan2020}, algebro-geometric solutions \cite{HeQiu2012} and Hamiltonian structures \cite{Fan2001}. 
The integrability of the GI equation is guaranteed by its Lax pair \cite{ZhuChen2021}
\begin{subequations}
\label{GI-Lax}
\begin{align}
\label{GI-Lax-x}
&\mathcal{X}_{x}=\mathcal{M}\mathcal{X}, \quad \mathcal{M}=\begin{pmatrix}
-\mathrm{i}\lambda^{2}+\frac{\mathrm{i}}{2}|u|^2 & \lambda u \\
-\lambda u^{*} & \mathrm{i}\lambda^{2}-\frac{\mathrm{i}}{2}|u|^2
\end{pmatrix},\\
&\mathcal{X}_{t}=\mathcal{N}\mathcal{X}, \quad \mathcal{N}=\begin{pmatrix}
N_{11} & N_{12} \\
N_{21} & -N_{11}
\end{pmatrix},
\end{align}
\end{subequations}
where $\lambda$ is the spectral parameter and
\begin{align*}
&N_{11}=-2\mathrm{i}\lambda^{4}+\mathrm{i}|u|^2\lambda^{2}-\frac{1}{2}(u^{*}u_{x}-uu^{*}_{x})+\frac{\mathrm{i}}{4}|u|^4,\\
&N_{12}=2\lambda^{3}u+\mathrm{i}\lambda u_{x},\\
&N_{21}=-2\lambda^{3}u^{*}+\mathrm{i}\lambda u^{*}_{x}.
\end{align*}
As the third-order flow in the GI hierarchy, the hGI equation reads \cite{Fan2001}
\begin{align}\label{hGI-eq}
u_{t}=-\frac{1}{2}u_{xxx}+\frac{3}{2}\mathrm{i}u|u_{x}|^2-\frac{3}{4}|u|^4u_{x},
\end{align}
which involves third-order dispersion and quintic nonlinearity. The Lax pair for the hGI equation shares the same spatial part \eqref{GI-Lax-x} but 
has a temporal part of the form $\mathcal{X}_{t}=\mathcal{N}\mathcal{X}$ with $\mathcal{N}$ being a polynomial matrix with $\lambda$ of
degree six \cite{ZhuChen2021}. In \cite{Zuo-Guo}, Zou et al. were concerned with the Riemann--Hilbert approach for 
the hGI equation \eqref{hGI-eq} with nonzero boundary conditions. Subsequently,
the soliton matrices of the hGI equation \eqref{hGI-eq} with simple zeros and elementary high-order zeros of Riemann--Hilbert 
problem were constructed through the standard dressing process. Besides, 
higher-order dispersion term affecting the propagation velocity, propagation direction and
amplitude of the soliton were revealed \cite{ZhuChen2021}. Recently, Shen et al. used the unified transformation approach to 
investigate the initial-boundary value problems of the model on the half-line \cite{Shen}. 

In recent decades, discrete integrable systems have emerged as a central topic in the theory of integrability \cite{AL1976,AL1977,Hirota1977}. 
These systems arise naturally in numerical analysis, combinatorics, and theoretical physics, and they often reveal 
structural features that are hidden in their continuous counterparts \cite{HJN2016}. 
Discrete analogues of classical soliton equations preserve the essential hallmarks of integrability, including multi-soliton solutions, 
Lax pairs, and B\"{a}cklund transformations, while living on lattice domains rather than continuous spacetime 
\cite{DJM,NQC,NCWQ1984,NC1995,ABS,ABS-2003,Atk,HZ-2009,But,Hie-BSQ,ZZN,NSZ}. 
Recently, a great deal of interest has been directed to the creation and study of discrete models of the NLS type equations, including 
nonlinear superposition principle of auto-B\"{a}cklund transformation \cite{Kono}, the discretized bilinear identity \cite{DJM-JPSJ}, 
symmetry-constraint method \cite{WH}, soliton solutions \cite{Fis}, a dNLS type Nijhoff--Quispel--Capel equation \cite{QNCL}, a dNLS equation and its modified version 
arising from the Cauchy matrix approach \cite{ZFJ-CNSNS,ZZS-TMPH}, and so on.
Compared with the numerous achievements in the discretization of NLS-type equations,
little work has been done on the discrete derivative–type NLS equations. Through reduction of the two-component KP hierarchy, 
a discrete CLL system is constructed, and its bilinear form is obtained accordingly \cite{DJM-JPSJ}. 
In \cite{Xu-KN}, Xu et al. constructed algebro-geometric solutions for this system. Besides, Xu et al. 
also propose a discrete KN system from the compatibility of two Darboux transformations of the KN spectral
problem \cite{Xu-CLL}. Recently, the Mikhailov model (negative order KN system) was discretized by means of the Cauchy matrix approach \cite{Mik}, in which 
one equation in the system comes from the compatibility of the two
Miura transformations and the other is transformed from the discrete negative order Ablowitz--Kaup--Newell--Segur (AKNS) system by using the Miura transformations.  

The Cauchy matrix method provides a powerful and systematic tool for constructing integrable equations and their solutions simultaneously. 
This approach was introduced for soliton solutions of the Adler-Bobenko-Suris list \cite{ABS} by Nijhoff et al. \cite{NAH2009} and later adapted to 
continuous, semi-discrete and discrete settings \cite{XZZ,ZhangZhao2013,Zhao2016,ZS-ZNA,Zhao2018,Mes,HLZ2026,SLZZ2025}. Its key idea is to start from the Sylvester equation
\[\bA\bX-\bX\bB=\bC,\]
together with appropriately designed dispersion relations for the matrix $\bC$, and to derive 
closed-form integrable equations through master functions defined via the resolvent of the Sylvester equation. 
The same framework automatically yields multi-soliton solutions when the spectral matrices are chosen to be diagonal,
and multiple-pole (or higher-order pole) solutions when Jordan-block spectral matrices are employed. 
The Cauchy matrix approach has been successfully applied to lattice versions of the Korteweg-de Vries, Boussinesq, and AKNS-type 
equations \cite{ZhangZhao2013,Zhao2016,ZZS-CMB,ZS-ZNA}, 
demonstrating its versatility and algorithmic nature. More recently, the method has also been extended to construct nonlocal integrable 
equations and their solutions \cite{ZhangZhao2022nonlocal,XZ2022,HSZ2025}.

In this paper, we systematically construct discrete versions of both the GI equation and the hGI equation within the 
Cauchy matrix framework. Starting from the Sylvester equation equipped with block matrix structures, we introduce two distinct sets of 
discrete dispersion relations, denoted by SDE-I and SDE-II, which produce different shift dynamics for the master functions. By specializing 
these shift relations to low-order master functions and eliminating auxiliary variables through algebraic identities, we arrive at closed 
lattice systems. A key observation is that the elimination step is not unique: the expressions for certain shift differences admit multiple 
equally valid representations, each leading to a different but conjugate-symmetric discrete system. This procedure yields four families of 
dGI models from SDE-I and four families of dhGI models from SDE-II. This multiplicity reveals an 
intrinsic non-uniqueness in the discretization of derivative-type integrable equations, a phenomenon that deserves further investigation.

For each of the discrete models obtained, we provide explicit $N$-soliton solutions and $N$-th 
order multiple-pole solutions via the Cauchy matrix method. The soliton solutions correspond to diagonal 
spectral matrices, while the multiple-pole solutions arise from Jordan-block choices. We further verify the 
consistency of our discrete models by performing a two-step continuum limit, contracting one lattice direction at a time. 
In the first step, the fully discrete model reduces to a semi-discrete GI (or semi-discrete hGI (sdhGI)) model; in the second step, the continuous GI (or hGI) system 
is recovered. This limiting procedure confirms that all four dGI models reduce to the same continuous GI equation and all four dhGI models reduce 
to the same continuous hGI equation.

Finally, we investigate symmetry reductions. All four dGI models and dhGI models are conjugate-symmetric,
which allows one to impose local complex conjugate reductions that produce scalar dGI and dhGI equations with explicit solutions.
Moreover, in the higher-order case, we discover that suitable pairwise recombinations of the dhGI lattice equations admit nonlocal reductions, 
leading to nonlocal dhGI equations. This feature is absent in the standard dGI case and highlights a structural distinction of the higher-order hierarchy.

The remainder of this paper is organized as follows. In Section \ref{2}, we introduce the Cauchy matrix scheme based on the Sylvester equation, 
define the master functions, and present their fundamental properties including recurrence relations and similarity invariance. In Section \ref{3},
we construct the four dGI models from SDE-I, derive their soliton and multiple-pole solutions, establish their continuum limits 
leading to the GI equation, and discuss local conjugate reductions. In Section \ref{4}, we carry out the analogous construction for the 
dhGI models from SDE-II, including solutions, continuum limits, and both local and nonlocal reductions. Concluding remarks are given 
in Section \ref{5}. There are four appendices. The first one presents the KP-type Cauchy matrix scheme, which offers an alternative derivation for the dhGI models, 
while the remaining three appendices contain supplementary expressions for the scalar dGI equations, scalar dhGI equations, and nonlocal dhGI equations.

\section{Cauchy matrix scheme}
\label{2}

The Cauchy matrix scheme starts from a Sylvester equation together with
properly defined plane wave vectors and a set of master functions.
In what follows, we show how this approach works in the discretization of GI type equations.

We start from the following Sylvester equation
\begin{equation}
\label{SE}
\bK\bM-\bM\bK=\br\tc,
\end{equation}
where $\bK\in \mathbb{C}^{2 N\times 2N}$ and  $\tc \in \mathbb{C}^{2\times 2N}$ are constant matrices, 
$\bM\in \mathbb{C}^{2N\times 2N}[n,m]$,
$\br \in \mathbb{C}^{2N\times 2}[n,m]$ are matrix functions of $(n,m)\in \mathbb{Z}^2$. These matrices are all block matrices with the following form 
\begin{align}\label{KMrc}
\bK=\begin{pmatrix}
\bK_{1}&\\
&\bK_{2}
\end{pmatrix},\quad\bM=\begin{pmatrix}
&\bM_{1}\\
\bM_{2}&
\end{pmatrix},\quad\br=\begin{pmatrix}
\br_{1}&\\
&\br_{2}
\end{pmatrix},\quad\tc=\begin{pmatrix}
&\tc_{2}\\
\tc_{1}&
\end{pmatrix},
\end{align}
in which for $i=1,2$, $\bK_{i}$ and $\bM_{i}$ are $N\times N$ matrices, $\br_{i}$ are $N$-dimensional column vectors, $\tc_{i}$ are $N$-dimensional row vectors.
 Based on the above framework, the Sylvester equation \eqref{SE} can be decoupled as
\begin{align}
\label{SE12}
\bK_{1}\bM_{1}-\bM_{1}\bK_{2}=\br_{1}\tc_{2},\quad\bK_{2}\bM_{2}-\bM_{2}\bK_{1}=\br_{2}\tc_{1}
\end{align}
by employing algebraic block matrix operations. Thus, $\bM_{1}$ and $\bM_{2}$ can be uniquely determined respectively if we assume 
$\bK_{1}$ and $\bK_{2}$ do not share eigenvalues \cite{Syl}. 
In the light of the Sylvester equation \eqref{SE}, we state the following proposition. 
\begin{prop}
For the matrix $\bM$ satisfying the Sylvester equation \eqref{SE}, the following recurrence relations hold
\begin{subequations}
\begin{align}
\label{ksm}
&\bK^{s}\bM-\bM\bK^{s}=\sum_{l=0}^{s-1}\bK^{s-1-l}\br\tc\bK^{l},\quad s\in\mathbb{Z}^{+},\\
\label{-ksm}
&\bK^{-s}\bM-\bM\bK^{-s}=-\sum_{l=-1}^{-s}\bK^{-s-1-l}\br\tc\bK^{l},\quad s\in\mathbb{Z}^{+},
\end{align}
\end{subequations}
where $\bK^0$ denotes the $2N\times 2N$ identity matrix.
\end{prop}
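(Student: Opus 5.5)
Both identities follow from the Sylvester equation \eqref{SE} by a telescoping argument. No use of the block structure \eqref{KMrc} is needed beyond the invertibility of $\bK$ in the second identity.

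For \eqref{ksm} I write the commutator with $\bK^s$ as a telescoping sum:
\begin{equation*}
\bK^{s}\bM-\bM\bK^{s}=\sum_{l=0}^{s-1}\left(\bK^{s-l}\bM\bK^{l}-\bK^{s-1-l}\bM\bK^{l+1}\right).
\end{equation*}
Consecutive terms cancel, leaving only the $l=0$ first term $\bK^s\bM$ and the $l=s-1$ second term $\bM\bK^s$. Each summand factors as $\bK^{s-1-l}(\bK\bM-\bM\bK)\bK^{l}$. Substituting \eqref{SE} turns this into $\bK^{s-1-l}\br\tc\bK^{l}$, which gives \eqref{ksm} at once. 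Alternatively one can induct on $s$: the case $s=1$ is \eqref{SE}, and the step uses $\bK^{s+1}\bM-\bM\bK^{s+1}=\bK(\bK^s\bM-\bM\bK^s)+(\bK\bM-\bM\bK)\bK^s$.

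For \eqref{-ksm} I assume $\bK$ is invertible, which is implicit since $\bK^{-s}$ appears in the statement. The base case $s=1$ comes from multiplying \eqref{SE} by $\bK^{-1}$ on both sides:
\begin{equation*}
\bK^{-1}(\bK\bM-\bM\bK)\bK^{-1}=\bM\bK^{-1}-\bK^{-1}\bM=\bK^{-1}\br\tc\bK^{-1}.
\end{equation*}
Hence $\bK^{-1}\bM-\bM\bK^{-1}=-\bK^{-1}\br\tc\bK^{-1}$. This matches the right-hand side of \eqref{-ksm} for $s=1$, where the only index is $l=-1$ and the exponent is $-s-1-l=-1$.

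For general $s$ I apply the same telescoping identity with $\bK^{-1}$ in place of $\bK$:
\begin{equation*}
\bK^{-s}\bM-\bM\bK^{-s}=\sum_{j=0}^{s-1}\bK^{-(s-1-j)}\left(\bK^{-1}\bM-\bM\bK^{-1}\right)\bK^{-j}.
\end{equation*}
Inserting the base case gives $-\sum_{j=0}^{s-1}\bK^{-s+j}\br\tc\bK^{-j-1}$. Reindexing with $l=-j-1$, which runs from $-1$ down to $-s$, gives $\bK^{-s-1-l}\br\tc\bK^{l}$, and this is exactly \eqref{-ksm}.

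The only real care needed is this bookkeeping of the summation range and exponents in the negative case. The argument itself is purely algebraic.
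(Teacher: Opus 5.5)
Your proof is correct. The telescoping decomposition, the base case $\bK^{-1}\bM-\bM\bK^{-1}=-\bK^{-1}\br\tc\bK^{-1}$, and the reindexing $l=-j-1$ all check out. Assuming $\bK$ invertible is consistent with how the paper uses $\bK^{-1}$ in $\bS^{(i,j)}$ for negative indices.

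The paper states this proposition without proof and treats it as a standard consequence of the Sylvester equation. Your telescoping argument, or equivalently induction on $s$, is the natural verification it leaves implicit.
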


For the sake of brevity, we appoint that $\bI$ is the identity matrix, whose index indicating the size is omitted.
Introduce a set of functions
\begin{equation}
\label{Sij}
\bS^{(i,j)}=\tc\bK^{j}(\bI+\bM)^{-1}\bK^{i}\br=\begin{pmatrix}
s_{1}&s_{2}\\
s_{3}&s_{4}
\end{pmatrix}, \quad i,j\in\mathbb{Z},
\end{equation}
where we require that the matrix $\bI + \bM$ be formally invertible.
Note that each $\bS^{(i,j)}$ is a $2\times 2$ matrix.
We call $\{\bS^{(i,j)}\}$ master functions because they are the main objects
used to characterize integrable equations in the Cauchy matrix approach. With the structures of $\bM,\bK,\br$ and $\tc$ in \eqref{KMrc} specified, 
it follows that $s_{i}~(i=1,2,3,4)$ admit explicit expressions
\begin{subequations}
\label{si}
\begin{align}
&s_{1}=-\tc_{2}\bK_{2}^{j}\bM_{2}(\bI-\bM_{1}\bM_{2})^{-1}\bK_{1}^{i}\br_{1},\\
&s_{2}=\tc_{2}\bK_{2}^{j}(\bI-\bM_{2}\bM_{1})^{-1}\bK_{2}^{i}\br_{2},\\
&s_{3}=\tc_{1}\bK_{1}^{j}(\bI-\bM_{1}\bM_{2})^{-1}\bK_{1}^{i}\br_{1},\\
&s_{4}=-\tc_{1}\bK_{1}^{j}\bM_{1}(\bI-\bM_{2}\bM_{1})^{-1}\bK_{2}^{i}\br_{2}.
\end{align}
\end{subequations}

In what follows, we list two important properties of the master
functions $\{\bS^{(i,j)}\}$: recurrence relations and similarity invariance (see \cite{Zhao2016}). 
In the context of dGI and dhGI models, these are essential for the derivation and solution processes, respectively.
\begin{prop}
For the master functions $\{\bS^{(i,j)}\}$ defined in \eqref{Sij} with $\bM, \bK,\br$ and $\tc$
satisfying the Sylvester equation \eqref{SE},
the following recurrence relations hold
\begin{subequations}
\label{Sij-re}
\begin{align}
\label{Sij-re-po}
\bS^{(i,j+s)}&=\bS^{(i+s,j)}-\sum_{l=0}^{s-1}\bS^{(s-l-1,j)}\bS^{(i,l)},\quad s\in\mathbb{Z}^{+},\\
\label{Sij-re-ne}
\bS^{(i,j-s)}&=\bS^{(i-s,j)}+\sum_{l=-1}^{-s}\bS^{(-s-l-1,j)}\bS^{(i,l)},\quad s\in\mathbb{Z}^{+}.
\end{align}
\end{subequations}
In particular, when $s=1$ we have
\begin{subequations}
\label{Sij-re=1}
\begin{align}
\label{Sij-re=1-po}
\bS^{(i,j+1)}=&\bS^{(i+1,j)}-\bS^{(0,j)}\bS^{(i,0)},\\
\label{Sij-re=1-ne}
\bS^{(i,j-1)}=&\bS^{(i-1,j)}+\bS^{(-1,j)}\bS^{(i,-1)}.
\end{align}
\end{subequations}
\end{prop}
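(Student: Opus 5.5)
We prove \eqref{Sij-re-po} and \eqref{Sij-re-ne} by inserting the recurrence relations \eqref{ksm} and \eqref{-ksm} between $(\bI+\bM)^{-1}$ factors, exactly as in the standard Cauchy matrix argument. Write $\bS^{(i,j)}=\tc\bK^{j}(\bI+\bM)^{-1}\bK^{i}\br$. Since $\bK^s(\bI+\bM)-(\bI+\bM)\bK^s=\bK^s\bM-\bM\bK^s$, relation \eqref{ksm} can be rewritten as
\begin{equation*}
\bK^{s}(\bI+\bM)-(\bI+\bM)\bK^{s}=\sum_{l=0}^{s-1}\bK^{s-1-l}\br\tc\bK^{l}.
\end{equation*}
Multiplying on the left by $(\bI+\bM)^{-1}$ and on the right by $(\bI+\bM)^{-1}$ gives the resolvent commutator identity
\begin{equation*}
(\bI+\bM)^{-1}\bK^{s}-\bK^{s}(\bI+\bM)^{-1}=\sum_{l=0}^{s-1}(\bI+\bM)^{-1}\bK^{s-1-l}\br\tc\bK^{l}(\bI+\bM)^{-1}.
\end{equation*}

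Next we sandwich this identity between $\tc\bK^{j}$ on the left and $\bK^{i}\br$ on the right. The left-hand side becomes $\tc\bK^{j}(\bI+\bM)^{-1}\bK^{s+i}\br-\tc\bK^{j+s}(\bI+\bM)^{-1}\bK^{i}\br=\bS^{(i+s,j)}-\bS^{(i,j+s)}$. On the right-hand side, each summand factors through the rank-two product $\br\tc$:
\begin{equation*}
\bigl(\tc\bK^{j}(\bI+\bM)^{-1}\bK^{s-1-l}\br\bigr)\bigl(\tc\bK^{l}(\bI+\bM)^{-1}\bK^{i}\br\bigr)=\bS^{(s-l-1,j)}\bS^{(i,l)}.
\end{equation*}
Rearranging yields \eqref{Sij-re-po}.

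The negative case is identical, starting from \eqref{-ksm} with $\bK^{-s}$. This requires $\bK$ to be invertible, which is implicit once negative powers appear in \eqref{Sij}. Sandwiching as before gives $\bS^{(i-s,j)}-\bS^{(i,j-s)}=-\sum_{l=-1}^{-s}\bS^{(-s-l-1,j)}\bS^{(i,l)}$, which is \eqref{Sij-re-ne}.

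Setting $s=1$ leaves a single term in each sum, namely $l=0$ in the positive case and $l=-1$ in the negative case. This gives \eqref{Sij-re=1}.

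The only delicate point is the bookkeeping of signs and index ranges when passing from \eqref{-ksm} to \eqref{Sij-re-ne}. We also need to confirm that $\bK$ and $(\bI+\bM)^{-1}$ are moved in the correct order. Since $\bK$ does not commute with $\bM$, we must use the resolvent commutator identity above rather than commuting $\bK$ through $(\bI+\bM)^{-1}$ directly. The block structure \eqref{KMrc} plays no role: the identities hold for general matrices satisfying \eqref{SE}.
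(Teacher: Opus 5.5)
Your proposal is correct and is the standard Cauchy-matrix argument. You rewrite \eqref{ksm} and \eqref{-ksm} as commutator identities for $(\bI+\bM)^{-1}$ and sandwich them between $\tc\bK^{j}$ and $\bK^{i}\br$; the signs, the index ranges and the $s=1$ specializations all check out. The paper itself gives no proof here and defers to the earlier literature it cites, where this is essentially the argument used.
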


These recurrence relations will be used in our derivation.

\begin{prop}\label{prop-2-3}
It suffices to consider only the canonical forms of $\bK$, since any matrix
similar to $\bK$ leads to the same $\bS^{(i,j)}$.
In fact, let
\begin{align}
\label{trans-KLM}
\bar{\bK}=\bT\bK\bT^{-1},\quad \bar{\bM}=\bT\bM\bT^{-1},\quad\bar{\br}=\bT\br,\quad\bar{\tc}=\tc\bT^{-1}, 
\end{align}
where $\bT$ is the transform matrix. One can verify that
\begin{equation}
\label{SE-bar}
\bar{\bK}\bar{\bM}-\bar{\bM}\bar{\bK}=\bar{\br}\bar{\tc}
\end{equation}
and
\begin{equation}
\bS^{(i,j)}=\tc\bK^{j}(\bI+\bM)^{-1}\bK^{i}\br
=\bar{\tc}\bar{\bK}^{j}(\bI+\bar{\bM})^{-1}\bar{\bK}^{i}\bar{\br},
\end{equation}
which show that $\{\bS^{(i,j)}\}$ are similarity invariant for $\bK$ and its any similar matrix.
\end{prop}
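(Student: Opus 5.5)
The plan is a direct substitution argument, organized in two parts that mirror the two displayed claims of Proposition \ref{prop-2-3}.

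\textbf{The Sylvester equation \eqref{SE-bar}.} Conjugating \eqref{SE} by $\bT$ gives
\[
\bT(\bK\bM-\bM\bK)\bT^{-1}=\bT\br\tc\bT^{-1}.
\]
Inserting $\bT^{-1}\bT=\bI$ between adjacent factors yields
\[
\bT\bK\bT^{-1}\,\bT\bM\bT^{-1}-\bT\bM\bT^{-1}\,\bT\bK\bT^{-1}=(\bT\br)(\tc\bT^{-1}).
\]
By the definitions \eqref{trans-KLM}, this is exactly \eqref{SE-bar}.

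\textbf{Invariance of the master functions.} Two elementary identities will be used:
\[
\bar{\bK}^{k}=\bT\bK^{k}\bT^{-1},\qquad (\bI+\bar{\bM})^{-1}=\bT(\bI+\bM)^{-1}\bT^{-1}.
\]
The first holds for every $k\in\mathbb{Z}$: for $k>0$ by telescoping $\bT^{-1}\bT$ in the product, for $k=0$ trivially, and for $k<0$ by inverting the case $k>0$. Negative powers are needed because $i,j\in\mathbb{Z}$, and they exist because $\bK$ is implicitly invertible whenever $\bS^{(i,j)}$ with negative indices is considered. The second follows from $\bI+\bar{\bM}=\bT(\bI+\bM)\bT^{-1}$; in particular, $\bI+\bar{\bM}$ is invertible exactly when $\bI+\bM$ is.

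Substituting both identities into $\bar{\tc}\bar{\bK}^{j}(\bI+\bar{\bM})^{-1}\bar{\bK}^{i}\bar{\br}$ gives
\[
\tc\bT^{-1}\,\bT\bK^{j}\bT^{-1}\,\bT(\bI+\bM)^{-1}\bT^{-1}\,\bT\bK^{i}\bT^{-1}\,\bT\br .
\]
All the $\bT^{-1}\bT$ pairs cancel, leaving $\tc\bK^{j}(\bI+\bM)^{-1}\bK^{i}\br=\bS^{(i,j)}$.

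\textbf{Remaining point.} The only step that is not purely mechanical concerns the word ``suffices'' in the statement. The Sylvester equation determines $\bM$ uniquely when $\bK_1$ and $\bK_2$ share no eigenvalues. Under this assumption, $\bar{\bM}=\bT\bM\bT^{-1}$ is the unique solution of the transformed equation \eqref{SE-bar}. Hence one may choose $\bT$ so that $\bar{\bK}$ is in canonical (diagonal or Jordan) form, solve \eqref{SE-bar} directly, and obtain the same $\bS^{(i,j)}$.

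One caveat applies if $\bT$ is to respect the block structure \eqref{KMrc}. In that case I would take $\bT=\mathrm{diag}(\bT_1,\bT_2)$, which conjugates $\bK_1$ and $\bK_2$ separately and preserves the shapes of $\bM$, $\br$ and $\tc$. The dispersion relations imposed later on $\br$ and $\tc$ are then also transported, since they are linear in $\bK$.
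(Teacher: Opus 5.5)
Your proof is correct and is the same direct verification the paper intends; the paper offers nothing beyond ``one can verify''. Conjugating \eqref{SE} by $\bT$ and cancelling the $\bT^{-1}\bT$ pairs in $\bar{\tc}\bar{\bK}^{j}(\bI+\bar{\bM})^{-1}\bar{\bK}^{i}\bar{\br}$ is all that is required. One precision for your ``remaining point'': \eqref{SE-bar} on its own never has a unique solution, since adding to $\bar{\bM}$ any matrix that commutes with $\bar{\bK}$ (e.g.\ $\bI$) gives another solution. Uniqueness therefore holds only within the block-off-diagonal ansatz, and this needs the block-diagonal $\bT=\mathrm{diag}(\bT_1,\bT_2)$ from your caveat. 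That is exactly how the paper uses the proposition, bringing $\bK_1$ and $\bK_2$ to diagonal or Jordan form separately.
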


\begin{remark}\label{Re-2-1}
In the following investigation, we always assume that the matrices $\bK_1$ and $\bK_2$ in the 
Sylvester equation \eqref{SE12} do not share any common eigenvalues, ensuring the unique determinability 
of $\bM_{1}$ and $\bM_{2}$. By introducing different dispersion relations for $\br$, we will then proceed to 
construct the dGI and dhGI models within the Cauchy matrix framework.
\end{remark}

\section{Discrete GI models}\label{3}

In this section, we construct discrete analogues of the GI system within the Cauchy matrix framework. 
Starting from the Sylvester equation together with the first set of discrete dispersion relations, 
we first derive the shift dynamics of the master functions. These shift relations are then used to eliminate 
auxiliary quantities and to obtain four conjugate-symmetric dGI models. We further present their Cauchy matrix solutions, 
including soliton and multiple-pole solutions, and examine their continuum limits and local conjugate reductions. 
The methodology and results developed here will serve as the foundation for the parallel construction in Section \ref{4}, 
where a second set of dispersion relations leads to discrete analogues of the hGI system.

\subsection{Shift dynamics of master function}

In the discrete context, functions are defined on discrete independent variables $n,m,\cdots$,
and derivatives are replaced by shifts with respect to these independent variables.
Before we proceed, let us recall some conventional notations used in discrete equations.
Suppose $f(n,m)$ is a function (can be a matrix function) defined on $\mathbb{Z}^2$. We denote
\begin{equation}
f:=f(n,m),~~ \wt{f}:=f(n+1,m),~~ \wh{f}:=f(n,m+1),~~ \wh{\wt{f}}:=f(n+1,m+1).
\end{equation}

To construct the dGI models, we consider the Sylvester equation together with the discrete dispersion relations given below
\begin{subequations}
\label{SDE-I}
\begin{align}
\label{SE1}
&\bK\bM-\bM\bK=\br\tc,\\
\label{t-r}
& \bP(\bP^*-\bK)\wt{\br}=\bP^*(\bP-\bK)\br,\\
\label{h-r}
&\bQ(\bQ^*-\bK)\wh{\br}=\bQ^*(\bQ-\bK)\br,
\end{align}
\end{subequations}
where
\begin{align}
\label{bPbQ}
\bP=\begin{pmatrix}
p\bI&\boldsymbol{0}\\\boldsymbol{0}&p^*\bI
\end{pmatrix},\quad \bQ=\begin{pmatrix}
q\bI&\boldsymbol{0}\\\boldsymbol{0}&q^*\bI
\end{pmatrix},
\end{align}
and $\bK,\bM,\br$ and $\tc$ are block matrices defined in \eqref{KMrc}. We refer to \eqref{SDE-I} as the first set of 
determining equations (SDE-I). To prepare for the proof of the proposition \ref{prop-3-1} below, we first establish the identities that connect $(\bP,\bQ)$ with $(\bM,\br,\tc)$:
\begin{align}
\label{commu}
\bP\bM=\bM\bP^*,\quad \bQ\bM=\bM\bQ^*, \quad \bP\br=\br\bp, \quad \bQ\br=\br\bq,\quad \tc\bP=\bp^*\tc, \quad \tc\bQ=\bq^*\tc,
\end{align}
where
\begin{subequations}
\begin{align}\label{bpbq}
\bp=\begin{pmatrix}
p&0\\0&p^*
\end{pmatrix},\quad\bq=\begin{pmatrix}
q&0\\0&q^*
\end{pmatrix}.
\end{align}
\end{subequations}
Based on the SDE-I \eqref{SDE-I}, we obtain the following propositions.
\begin{prop}\label{prop-3-1}
For $\bK,\bM,\br$ and $\tc$ satisfying the SDE-I \eqref{SDE-I}, one has
\begin{subequations}
\begin{align}
\label{t-M}
& \bP(\bP^*-\bK)\wt{\bM}=\bP^*(\bP-\bK)\bM,\\
\label{h-M}
&\bQ(\bQ^*-\bK)\wh{\bM}=\bQ^*(\bQ-\bK)\bM,
\end{align}
\end{subequations}
\begin{subequations}
and
\begin{align}
\label{t-M-1}
&\wt{\bM}\bP^*(\bP-\bK)-\bP^*(\bP-\bK)\bM=\wt{\br}\bp\tc,\\
\label{t-M-2}
&\bM\bP(\bP^*-\bK)-\bP(\bP^*-\bK)\wt{\bM}=\br\bp^*\tc,\\
\label{h-M-1}
&\wh{\bM}\bQ^*(\bQ-\bK)-\bQ^*(\bQ-\bK)\bM=\wh{\br}\bq\tc,\\
\label{h-M-2}
&\bM\bQ(\bQ^*-\bK)-\bQ(\bQ^*-\bK)\wh{\bM}=\br\bq^*\tc.
\end{align}
\end{subequations}
\end{prop}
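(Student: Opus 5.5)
The plan is to derive \eqref{t-M} and \eqref{h-M} from a uniqueness argument for the Sylvester equation, and then obtain \eqref{t-M-1}--\eqref{h-M-2} by direct algebraic manipulation using \eqref{commu}. Throughout I use the structural facts behind \eqref{commu}. First, $\bP,\bQ,\bP^*,\bQ^*$ are block diagonal with scalar blocks, so they commute with the block diagonal $\bK$ and with each other. Second, since $\bM$ is block antidiagonal, $\bP\bM=\bM\bP^*$ and also $\bP^*\bM=\bM\bP$; the same holds for $\widetilde{\bM}$, $\widehat{\bM}$ and for $\bQ$. I only treat the tilde direction; the hat direction is identical with $(\bP,\bp)$ replaced by $(\bQ,\bq)$.

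For \eqref{t-M}, set $\bA=\bP(\bP^*-\bK)$, $\bB=\bP^*(\bP-\bK)$ and $\bm{D}=\bA\widetilde{\bM}-\bB\bM$. Both $\bA$ and $\bB$ commute with $\bK$. Using the shifted Sylvester equation $\bK\widetilde{\bM}-\widetilde{\bM}\bK=\widetilde{\br}\tc$ ($\bK$ and $\tc$ are constant) together with \eqref{SE1}, one gets
\[
\bK\bm{D}-\bm{D}\bK=\bA\widetilde{\br}\tc-\bB\br\tc=0
\]
by \eqref{t-r}. Since $\bA,\bB$ are block diagonal and $\bM,\widetilde{\bM}$ are block antidiagonal, $\bm{D}$ is block antidiagonal with blocks $\bm{D}_1,\bm{D}_2$. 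These satisfy $\bK_1\bm{D}_1-\bm{D}_1\bK_2=0$ and $\bK_2\bm{D}_2-\bm{D}_2\bK_1=0$. By Remark \ref{Re-2-1}, $\bK_1$ and $\bK_2$ share no eigenvalues, so $\bm{D}_1=\bm{D}_2=0$ by uniqueness of solutions of the Sylvester equation. I expect this to be the only genuinely non-formal step: one must notice that $\bm{D}$ inherits the antidiagonal structure, because $\bK$ itself obviously commutes with nonzero matrices.

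For \eqref{t-M-1}, I rewrite $\widetilde{\bM}\bP^*(\bP-\bK)$ by moving scalars through, using $\widetilde{\bM}\bP^*=\bP\widetilde{\bM}$ and $\widetilde{\bM}\bP=\bP^*\widetilde{\bM}$:
\[
\widetilde{\bM}\bP^*(\bP-\bK)=\bP\bP^*\widetilde{\bM}-\bP\widetilde{\bM}\bK .
\]
Then I substitute $\widetilde{\bM}\bK=\bK\widetilde{\bM}-\widetilde{\br}\tc$, which gives $\bP(\bP^*-\bK)\widetilde{\bM}+\bP\widetilde{\br}\tc$. Applying \eqref{t-M} to the first term and $\bP\widetilde{\br}=\widetilde{\br}\bp$ to the second yields exactly \eqref{t-M-1}.

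For \eqref{t-M-2}, the same manipulation applies to $\bM\bP(\bP^*-\bK)=\bP^*\bP\bM-\bP^*\bM\bK$. Substituting $\bM\bK=\bK\bM-\br\tc$ gives $\bP^*(\bP-\bK)\bM+\bP^*\br\tc$. By \eqref{t-M} the first term equals $\bP(\bP^*-\bK)\widetilde{\bM}$, and $\bP^*\br=\br\bp^*$ (the conjugate form of \eqref{commu}), which gives \eqref{t-M-2}. The hat relations \eqref{h-M-1} and \eqref{h-M-2} follow verbatim.
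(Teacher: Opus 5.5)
Your proposal is correct and follows essentially the same route as the paper. The shifted Sylvester equation together with \eqref{t-r} gives $\bK\bm{D}-\bm{D}\bK=\bm{0}$ for $\bm{D}=\bP(\bP^*-\bK)\wt{\bM}-\bP^*(\bP-\bK)\bM$, and \eqref{t-M-1}--\eqref{h-M-2} then follow by substituting the Sylvester equation and using \eqref{commu}; the paper differs only in the order of these substitutions. Your explicit observation that $\bm{D}$ is block antidiagonal, so that the disjoint-spectra assumption on $\bK_1,\bK_2$ actually forces $\bm{D}=\bm{0}$, supplies a step the paper leaves implicit.
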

\begin{proof}
Taking $\wt{\phantom{a}}$-shift of \eqref{SE1} and incorporating equation \eqref{t-r}, we obtain
\begin{align}
\bK\bP(\bP^*-\bK)\wt{\bM}-\bP(\bP^*-\bK)\wt{\bM}\bK=\bP^*(\bP-\bK)\br\tc.
\end{align}
Substituting \eqref{SE1} for $\br\tc$ in the above equation yields
\begin{align}
\bK\big[ \bP(\bP^*-\bK)\wt{\bM}-\bP^*(\bP-\bK)\bM\big] -\big[\bP(\bP^*-\bK)\wt{\bM}-\bP^*(\bP-\bK)\bM\big]\bK=\bm 0,
\end{align}
which indicates that \eqref{t-M} holds under the remark \ref{Re-2-1}. Equation \eqref{h-M} is obtained in a similar manner.

We now proceed to prove the relations \eqref{t-M-1}–\eqref{h-M-2}. 
Left-multiplying $\wt{\eqref{SE1}}$ by the matrix $\bP$ and identifying \eqref{commu}, we have
\begin{align}
\bP\bK\wt{\bM}-\bP\wt{\bM}\bK=\wt{\br}\bp\tc.
\end{align}
Replacing $\bP\bK\wt{\bM}$ by \eqref{t-M} gives
\begin{align}
\wt{\br}\bp\tc=&\bP\bP^*\wt{\bM}-\bP^*(\bP-\bK)\bM-\bP\wt{\bM}\bK \nn \\
=&\wt{\bM}\bP^*(\bP-\bK)-\bP^*(\bP-\bK)\bM.
\end{align}
We can also directly left-multiply \eqref{SE1} by $\bP^*$ and then replace $\bP^*\bK\bM$ with \eqref{t-M} to get
\begin{align}
\br\bp^*\tc=&\bP\bP^*\bM-\bP(\bP^*-\bK)\wt{\bM}-\bP^*\bM\bK \nn \\
=&\bM\bP(\bP^*-\bK)-\bP(\bP^*-\bK)\wt{\bM}.
\end{align}
The same proof method applies to \eqref{h-M-1} and \eqref{h-M-2}.
\end{proof}

\begin{prop}\label{prop-3-2}
Suppose that matrices $\bK, \bM, \br,$ and $\tc$ satisfy the SDE-I \eqref{SDE-I}, 
and thereby the following shift relations for $\bS^{(i,j)}$ as defined in \eqref{Sij} hold 
\begin{subequations}
\label{sh-Sij}
\begin{align}
\label{p-Sij-1}
&|p|^2\wt{\bS}^{(i,j)}-\bp^*\wt{\bS}^{(i,j+1)}
=|p|^2\bS^{(i,j)}-\bS^{(i+1,j)}\bp^*+\bS^{(0,j)}\bp^*\wt{\bS}^{(i,0)},\\
\label{p-Sij-2}	
&|p|^2\bS^{(i,j)}-\bp\bS^{(i,j+1)}=|p|^2\wt{\bS}^{(i,j)}-\wt{\bS}^{(i+1,j)}\bp+\wt{\bS}^{(0,j)}\bp\bS^{(i,0)},\\
\label{q-Sij-1}
&|q|^2\wh{\bS}^{(i,j)}-\bq^*\wh{\bS}^{(i,j+1)}
=|q|^2\bS^{(i,j)}-\bS^{(i+1,j)}\bq^*+\bS^{(0,j)}\bq^*\wh{\bS}^{(i,0)},\\
\label{q-Sij-2}
&|q|^2\bS^{(i,j)}-\bq\bS^{(i,j+1)}=|q|^2\wh{\bS}^{(i,j)}-\wh{\bS}^{(i+1,j)}\bq+\wh{\bS}^{(0,j)}\bq\bS^{(i,0)}.
\end{align}
\end{subequations}
\end{prop}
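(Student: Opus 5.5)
The plan is to obtain all four relations in \eqref{sh-Sij} from the "mixed" Sylvester-type identities \eqref{t-M-1}--\eqref{h-M-2} of Proposition \ref{prop-3-1}. Each of these is converted into an identity between resolvents and then sandwiched between $\tc\bK^{j}$ on the left and a suitable $\bK^{i}\br$ or $\bK^{i}\wt{\br}$ on the right. Only \eqref{p-Sij-1} and \eqref{p-Sij-2} need to be treated. The $\bQ$-relations \eqref{q-Sij-1}, \eqref{q-Sij-2} then follow verbatim from \eqref{h-M-1}, \eqref{h-M-2} by replacing $(p,\bP,\bp,\wt{\phantom{a}})$ with $(q,\bQ,\bq,\wh{\phantom{a}})$.

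For \eqref{p-Sij-1}, I will add and subtract $\bP^*(\bP-\bK)$ in \eqref{t-M-1} to write it as
\begin{equation*}
(\bI+\wt{\bM})\bP^*(\bP-\bK)-\bP^*(\bP-\bK)(\bI+\bM)=\wt{\br}\bp\tc .
\end{equation*}
Multiplying by $(\bI+\wt{\bM})^{-1}$ on the left and $(\bI+\bM)^{-1}$ on the right gives
\begin{equation*}
\bP^*(\bP-\bK)(\bI+\bM)^{-1}-(\bI+\wt{\bM})^{-1}\bP^*(\bP-\bK)=(\bI+\wt{\bM})^{-1}\wt{\br}\bp\tc(\bI+\bM)^{-1}.
\end{equation*}
I will then left-multiply by $\tc\bK^{j}$ and right-multiply by a vector of the form $\bK^{i}\br$ or $\bK^{i}\wt{\br}$, choosing whichever places the tildes as in \eqref{p-Sij-1}. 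The scalar-like factors are moved outward with \eqref{commu}. Since $\bP$ commutes with $\bK$ (both are block diagonal with scalar diagonal blocks), we have $\tc\bK^{j}\bP^*=\bp\,\tc\bK^{j}$ and $\bP\bK^{i}\br=\bK^{i}\br\bp$, so each product $\bP^*(\bP-\bK)$ turns into $|p|^2$ minus a $\bp^{(*)}$ times one extra power of $\bK$. This produces exactly $\bS^{(i+1,j)}$ or $\bS^{(i,j+1)}$ (shifted or unshifted). The term $\wt{\br}\bp\tc$ splits the right-hand side into the product $\wt{\bS}^{(0,j)}\bp\,\bS^{(i,0)}$ or $\bS^{(0,j)}\bp^*\wt{\bS}^{(i,0)}$.

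Where the sandwich leaves $\bP^*(\bP-\bK)$ acting on the unshifted $\br$ while the target contains $\wt{\br}$, I will use \eqref{t-r}, $\bP(\bP^*-\bK)\wt{\br}=\bP^*(\bP-\bK)\br$, to trade one for the other. Relation \eqref{p-Sij-2} is obtained the same way from \eqref{t-M-2}, written as $(\bI+\bM)\bP(\bP^*-\bK)-\bP(\bP^*-\bK)(\bI+\wt{\bM})=\br\bp^*\tc$, with the roles of $\bM$ and $\wt{\bM}$ interchanged.

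The main obstacle is bookkeeping rather than any conceptual step. First, one must choose correctly which of $\br$ and $\wt{\br}$ to sandwich with, so that \eqref{t-r} removes every unwanted factor. Second, one must track precisely on which side the $2\times2$ matrices $\bp$ and $\bp^*$ emerge: they commute with each other but not with the master functions, and left versus right placement is governed by $\tc\bP=\bp^*\tc$ versus $\bP\br=\br\bp$. I would verify the final placement against the displayed form of \eqref{p-Sij-1}. If a mismatch appears, I would try the alternative normalisation of \eqref{t-M-1} in which $\bP$ is pulled to the other side before inverting.
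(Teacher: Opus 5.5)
Your approach is correct and is essentially the argument the paper relies on. The paper cites \cite{ZZS-TMPH} and prepares exactly the identities \eqref{t-M-1}--\eqref{h-M-2} in Proposition \ref{prop-3-1} for this purpose: one rewrites them as resolvent identities and sandwiches them between $\tc\bK^{j}$ and $\bK^{i}\br$ or $\bK^{i}\wt{\br}$, using \eqref{commu} and \eqref{t-r}.

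The only slip is the labelling, which is swapped. The right side of \eqref{t-M-1} produces $\wt{\bS}^{(0,j)}\bp\,\bS^{(i,0)}$, so it can only be sandwiched with $\bK^{i}\br$, and it yields \eqref{p-Sij-2}. Sandwiching \eqref{t-M-2} with $\bK^{i}\wt{\br}$ yields \eqref{p-Sij-1}. Likewise \eqref{h-M-1} gives \eqref{q-Sij-2} and \eqref{h-M-2} gives \eqref{q-Sij-1}. No alternative normalisation is needed; just exchange the assignments.
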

This proposition first appeared in \cite{ZZS-TMPH}.
The relations in Proposition \ref{prop-3-2} describe how the master functions evolve under the two lattice shifts. 
They provide the algebraic basis for the construction of the dGI models. In the next subsection, we specialize 
these relations to several low-order master functions and eliminate the auxiliary variables so 
that closed systems for the essential dependent variables can be obtained.

\subsection{Construction of the dGI models}\label{3-2}
To construct closed dGI systems, we focus on the low-order master functions $\bS^{(0,0)}$, $\bS^{(-1,0)}$ and $\bS^{(0,-1)}$. 
These functions contain the dependent variables from which the dGI models will be extracted. For convenience, we introduce
\begin{align}
\label{uvw-def}
&\bu=\bS^{(0,0)}=\begin{pmatrix}
u_{1}&u_{2}\\
u_{3}&u_{4}
\end{pmatrix}, \quad \bv=\bI-\bS^{(-1,0)}=\begin{pmatrix}
v_{1}&v_{2}\\
v_{3}&v_{4}
\end{pmatrix},\quad \bw=\bI+\bS^{(0,-1)},
\end{align}
where $u_i, v_{i}~(i=1,2,3,4)$ are scalar functions.
Next we work on the variables $\bu, \bv$ and $\bw$.

We observe that equation \eqref{Sij-re=1-ne} with specific
index combinations $(i,j)=(0,0)$,  $(1,0)$ and $(0,1)$ yields
the following relations in terms of $\bu, \bv$ and $\bw$:
\begin{align}
\label{vw}
\bv\bw=\bI, \quad\bS^{(1,-1)}=\bw\bu,\quad \bS^{(-1,1)}=\bu\bv. 
\end{align}
Among them, the first relation shows that $\bw$ is the inverse of $\bv$.
In addition, setting $(i, j)=(0, 0)$ in system \eqref{sh-Sij} yields
\begin{subequations}
\begin{align}
&|p|^2\wt{\bu}-\bp^*\wt{\bS}^{(0,1)}=|p|^2\bu-\bS^{(1,0)}\bp^*+\bu\bp^*\wt{\bu},\\
&|p|^2\bu-\bp\bS^{(0,1)}=|p|^2\wt{\bu}-\wt{\bS}^{(1,0)}\bp+\wt{\bu}\bp\bu,\\
&|q|^2\wh{\bu}-\bq^*\wh{\bS}^{(0,1)}=|q|^2\bu-\bS^{(1,0)}\bq^*+\bu\bq^*\wh{\bu},\\
&|q|^2\bu-\bq\bS^{(0,1)}=|q|^2\wh{\bu}-\wh{\bS}^{(1,0)}\bq+\wh{\bu}\bq\bu.
\end{align}
\end{subequations}
We eliminate $\bS^{(0,1)}$ and $\bS^{(1,0)}$ from the above relations to derive a matrix equation that contains only $\bu$:
\begin{align}\label{bu-close}
& \bp(|p|^2(\bu-\wt{\bu})\bq^*+|q|^2(\wh{\bu}-\bu)\bp^*+\bu(\bp^*\wt{\bu}\bq^*-\bq^*\wh{\bu}\bp^*))\bq \nn \\	
& \qquad\qquad\qquad\quad +\bq^*(|p|^2(\wh{\wt{\bu}}-\wh{\bu})\bq+|q|^2(\wt{\bu}-\wh{\wt{\bu}})\bp+\wh{\wt{\bu}}(\bp\wh{\bu}\bq-\bq\wt{\bu}\bp))\bp=\bm 0.
\end{align}
Returning to the system \eqref{sh-Sij} and taking $(i, j) = (-1, 0)$, we obtain
\begin{subequations}
\label{th-uvw}
\begin{align}
\label{t-uvw-1}
&|p|^2\bv\wt{\bw}=|p|^2\bI+\bp^*\wt{\bu}-\bu\bp^*,\\
\label{t-uvw-2}
&|p|^2\wt{\bv}\bw=|p|^2\bI+\bp\bu-\wt{\bu}\bp,\\
\label{h-uvw-1}
&|q|^2\bv\wh{\bw}=|q|^2\bI+\bq^*\wh{\bu}-\bu\bq^*,\\
\label{h-uvw-2}
&|q|^2\wh{\bv}\bw=|q|^2\bI+\bq\bu-\wh{\bu}\bq,
\end{align}
\end{subequations}
in light of \eqref{vw}, which give rise to
\begin{subequations}
\begin{align}
\label{t-u}
&(|p|^2\bI+\bp^*\wt{\bu}-\bu\bp^*)(|p|^2\bI+\bp\bu-\wt{\bu}\bp)=|p|^4\bI,\\
\label{h-u}
&(|q|^2\bI+\bq^*\wh{\bu}-\bu\bq^*)(|q|^2\bI+\bq\bu-\wh{\bu}\bq)=|q|^4\bI.
\end{align}
\end{subequations}
These exhibit the evolution of $\bu$ along $n$- and $m$-directions, respectively.

The derivations thus far have been performed at the matrix level. Our objective, however, is to obtain scalar relations. 
To this end, we decompose certain matrix relations derived above by equating their corresponding components. 
Prior to this, we require explicit expressions for the four components of the matrix $\bw$. 
Invoking the definition of $\bv$ in \eqref{uvw-def} and using the Weinstein-Aronszajn formula (see Appendix D in \cite{HJN2016}) and 
the Sylvester equation \eqref{SE1}, one can find
\begin{align}
\label{det-v}
\det(\bv) &=\left| \bI-\bS^{(-1,0)}\right| \nn\\
&=\left|\bI-\tc(\bI+\bM)^{-1}\bK^{-1}\br\right| \nn \\
&= \left|\bI-(\bI+\bM)^{-1}\bK^{-1}\br\tc\right| \nn \\
&=\left|(\bI+\bM)^{-1}\bK^{-1}(\bI+\bM)\bK\right| \nn \\
&=1,
\end{align}
in light of \eqref{vw}. Moreover, $\bw$ defined in \eqref{uvw-def} can be expressed as
\begin{align}
\label{vw-def1}
\bw=\bI+\bS^{(0,-1)}=\begin{pmatrix}
v_{4}&-v_{2}\\
-v_{3}&v_{1}
\end{pmatrix}.
\end{align}
As a result, equations \eqref{h-uvw-1} and \eqref{h-uvw-2} can be represented in terms of $\{u_i\}$ and $\{v_i\}$:
\begin{subequations}
\label{h-uivi}
\begin{align}
\label{h-uivi-1}
&a_{1}+q^*(1+\wh{v}_{2}v_{3}-\wh{v}_{1}v_{4})=0,\\
\label{h-uivi-2}
&a_{1}-q(1+v_{2}\wh{v}_{3}-v_{1}\wh{v}_{4})=0,\\
\label{h-uivi-3}
&a_{2}-|q|^2(v_{1}\wh{v}_{2}-\wh{v}_{1}v_{2})=0,\\
\label{h-uivi-4}
&a_{3}-|q|^2(v_{3}\wh{v}_{4}-\wh{v}_{3}v_{4})=0,
\end{align}
where
\begin{align}
a_{1}:=u_{1}-\wh{u}_{1}=\wh{u}_{4}-u_{4},\quad a_{2}:=qu_{2}-q^*\wh{u}_2,\quad a_{3}:=q\wh{u}_{3}-q^*u_{3}.
\end{align}
\end{subequations}
In the next step, we introduce a new variable
\begin{equation}\label{nu}
\nu\coloneqq v_{2}/v_{4}.
\end{equation}
Then we revisit \eqref{h-uivi} by substituting  $v_{2}=v_{4}\nu$.
At first, for \eqref{h-uivi-3} we have
\begin{align}
\label{h-v5-1}
a_{2}=|q|^2(v_{1}\wh{v}_{4}\wh{\nu}-\wh{v}_{1}v_{4}\nu).
\end{align}
Combining \eqref{h-uivi-1} and \eqref{h-uivi-2} and eliminating $a_{1}$ yields
\begin{align}
q^*(1+\wh{v}_{2}v_{3}-\wh{v}_{1}v_{4})+q(1+v_{2}\wh{v}_{3}-v_{1}\wh{v}_{4})=0.
\end{align}
Then, using the above equation to replace $v_{1}\wh{v}_{4}$ in \eqref{h-v5-1} leads to
\begin{align}\label{h-v5-2}
a_{2}=|q|^2\big(\wh{\nu}(1+\theta_{q})+\nu\wh{\nu}v_{4}\wh{v}_{3}+\theta_{q}\wh{\nu}^2v_{3}\wh{v}_{4}-v_{4}\wh{v}_{1}(\theta_{q}\wh{\nu}+\nu)\big), 
\quad \theta_{q}=q^*/q.
\end{align}
Furthermore, we use \eqref{h-uivi-4} to replace $v_{3}\wh{v}_{4}$ and have
\begin{align}
\label{h-v5-3}
a_{2}=|q|^2\big(\wh{\nu}(1+\theta_{q})+(\wh{v}_{3}v_{4}\wh{\nu}-\wh{v}_{1}v_{4})(\theta_{q}\wh{\nu}+\nu)+a_{3}\wh{\nu}^2/q^{2}\big).
\end{align}
Again, replacing the term $\wh{v}_{3}$ by using \eqref{h-uivi-2} and performing some algebraic manipulations, 
we arrive at
\begin{align}
\label{a2}
a_{2}=|q|^2(\wh{\nu}-\nu)-a_{1}(q^*\wh{\nu}+q\nu)-a_{3}\nu\wh{\nu}.
\end{align}
On the other hand, expanding  \eqref{h-u} gives rise to the same relation in terms of $\{a_{i}\}$:
\begin{align}
\label{ai}
|q|^2(q-a_{1})(q^*+a_{1})+a_{2}a_{3}=|q|^4.
\end{align}
Thus we can insert \eqref{a2} into \eqref{ai} and then solve the equation for solution $a_{1}$, 
which is expressed as
\begin{align}
\label{a1}
2a_{1}=q-q^*-(\wh{u}_{3}-\theta_{q}u_{3})(\theta_{q}^{-1}\nu+\wh{\nu})+Q,
\end{align}
where
\begin{align}
\label{Q}
Q:=-\mathrm{i}[4|q|^2-(q+q^*+(\wh{u}_{3}-\theta_{q}u_{3})(\wh{\nu}-\theta_{q}^{-1}\nu))^2]^{\frac{1}{2}}.
\end{align}
Note that all relations that
we have derived for the  $\wh{\phantom{a}}$-shift (involving the lattice parameter $q$ and the lattice variable $m$)
hold also for the $\wt{\phantom{a}}$-shift, simply by replacing $q$ by $p$ and $\wh{\phantom{a}}$-shift by $\wt{\phantom{a}}$-shift. We therefore proceed to 
derive the relations for $\wt{\phantom{a}}$-shift directly
\begin{subequations}
\begin{align}
\label{b2}
&b_{2}=|p|^2(\wt{\nu}-\nu)-b_{1}(p^*\wt{\nu}+p\nu)-b_{3}\nu\wt{\nu},\\
\label{bi}
&|p|^2(p-b_{1})(p^*+b_{1})+b_{2}b_{3}=|p|^4,\\
\label{b1}
&2b_{1}=p-p^*-(\wt{u}_{3}-\theta_{p}u_{3})(\theta_{p}^{-1}\nu+\wt{\nu})+P,\quad\theta_{p}=p^*/p,
\end{align}
\end{subequations}
where
\begin{subequations}
\begin{align}
&b_{1}:=u_{1}-\wt{u}_{1}=\wt{u}_{4}-u_{4},\quad b_{2}:=pu_{2}-p^*\wt{u}_2,\quad b_{3}:=p\wt{u}_{3}-p^*u_{3},\\
\label{P}
& P:=-\mathrm{i}[4|p|^2-(p+p^*+(\wt{u}_{3}-\theta_{p}u_{3})(\wt{\nu}-\theta_{p}^{-1}\nu))^2]^{\frac{1}{2}}.
\end{align}
\end{subequations}

\begin{remark}
A careful explanation is in order here. Previously, we defined $Q$ as given in \eqref{Q}, where the sign preceding the imaginary unit $\mathrm{i}$ is negative. In fact, the choice of this sign depends on the sign of the imaginary part of the complex number $q$. A detailed summary is as follows. 
\begin{itemize}
\item When $\operatorname{Im}(q)>0$, $Q$ is defined as
\begin{align}
\label{Q-1}
Q:=-\mathrm{i}[4|q|^2-(q+q^*+(\wh{u}_{3}-\theta_{q}u_{3})(\wh{
		\nu}-\theta_{q}^{-1}\nu))^2]^{\frac{1}{2}};
\end{align}
\item 
When $\operatorname{Im}(q)<0$, it is defined as
\begin{align}
\label{Q-2}
Q:=\mathrm{i}[4|q|^2-(q+q^*+(\wh{u}_{3}-\theta_{q}u_{3})(\wh{
		\nu}-\theta_{q}^{-1}\nu))^2]^{\frac{1}{2}}.
\end{align}
\end{itemize}
The same convention applies to $P$. According to the above statement, the expressions for $Q$ and $P$ in \eqref{Q} and \eqref{P} imply that $\operatorname{Im}(q)>0$ and $\operatorname{Im}(p)>0$.

\end{remark}

Using the relations derived above, we construct a discrete model that involves only $u_{3}$ and $\nu$, 
this constitutes a discrete version of the GI model. In fact, one relation involving only $u_{3}$ and $\nu$ can be derived 
from the matrix equation \eqref{bu-close} by taking its $(2, 1)$-component. To derive a second, 
independent equation of these two variables, we taking \eqref{a2} and \eqref{b2} into the following identity 
\begin{align}
pa_{2}-qb_{2}=p^*\wt{a}_{2}-q^*\wh{a}_{2}.
\end{align}
As a result, the following two equations can be obtained
\begin{subequations}
\label{u3-v-u1}
\begin{align}
& |p|^2(q(\wh{u}_{3}-\wh{\wt{u}}_{3})+q^*(\wt{u}_{3}-u_{3}))+|q|^2(p(\wh{\wt{u}}_{3}-\wt{u}_{3})+p^*(u_{3}-\wh{u}_{3})) \nn \\
& \qquad +(p^*q\wh{u}_{3}-pq^*\wt{u}_{3})(\wh{\wt{u}}_{1}-u_{1})+(p^*q^*u_{3}-pq\wh{\wt{u}}_{3})(\wh{u}_{1}-\wt{u}_{1})=0, \\
& |p|^2(q^*(\wh{\nu}-\wh{\wt{\nu}})+q(\wt{\nu}-\nu))+|q|^2(p^*(\wh{\wt{\nu}}-\wt{\nu})+p(\nu-\wh{\nu}))+(pq^*\wh{\nu}-p^*q\wt{\nu})\nn \\
& \qquad \times (u_{1}-\wh{\wt{u}}_{1}+\wh{\wt{\nu}}\wh{\wt{u}}_{3}-\nu u_{3})
+(pq\nu-p^*q^*\wh{\wt{\nu}})(\wt{u}_{1}-\wh{u}_{1}+\wh{\nu}\wh{u}_{3}-\wt{\nu}\wt{u}_{3})=0.
\end{align}
\end{subequations}
Both equations, however, involve terms containing $u_{1}$; therefore, to proceed, it is necessary to eliminate these terms by introducing relations
that depend exclusively on $u_{3}$ and $\nu$. We find that
\begin{subequations}
\label{u1-1}
\begin{align}
\label{u1-1a}
& u_{1}-\wh{\wt{u}}_{1}=a_{1}+\wh{b}_{1}, \\
\label{u1-1b}
& \wt{u}_{1}-\wh{u}_{1}=a_{1}-b_{1}.
\end{align} 
\end{subequations}

By replacing \eqref{a1} and \eqref{b1} with the identity above, finally we construct a lattice system with dependent variables $(u_3,\nu):=(u_{3},v_{2}/v_{4})$:
\begin{subequations}
\label{dGI}
\begin{align}
\label{dGI1}
&\begin{aligned}
&2|p|^2(q(\wh{u}_{3}-\wh{\wt{u}}_{3})+q^*(\wt{u}_{3}-u_{3}))+2|q|^2(p(\wh{\wt{u}}_{3}-\wt{u}_{3})+p^*(u_{3}-\wh{u}_{3}))+(p^*q\wh{u}_{3}-pq^*\wt{u}_{3})\times\\
&(q^*-q+p^*-p+\theta_{q}^{-1}\wh{u}_{3}\nu-\theta_{q}u_{3}\wh{\nu}+\theta_{p}^{-1}\wh{\wt{u}}_{3}\wh{\nu}-\theta_{p}\wh{u}_{3}\wh{\wt{\nu}}+\wh{\wt{u}}_{3}\wh{\wt{\nu}}-u_{3}\nu-Q-\wh{P})+(p^*q^*u_{3}-pq\wh{\wt{u}}_{3})\\
&\times(q^*-q+p-p^*-\theta_{p}^{-1}\wt{u}_{3}\nu+\theta_{p}u_{3}\wt{\nu}+\theta_{q}^{-1}\wh{u}_{3}\nu-\theta_{q}u_{3}\wh{\nu}+\wh{u}_{3}\wh{\nu}-\wt{u}_{3}\wt{\nu}-Q+P)=0,
\end{aligned}\\
\label{dGI2}
&\begin{aligned}
&2|p|^2(q^*(\wh{\nu}-\wh{\wt{\nu}})+q(\wt{\nu}-\nu))+2|q|^2(p^*(\wh{\wt{\nu}}-\wt{\nu})+p(\nu-\wh{\nu}))+(pq^*\wh{\nu}-p^*q\wt{\nu})\times\\
&(q-q^*+p-p^*+\theta_{q}u_{3}\wh{\nu}-\theta_{q}^{-1}\wh{u}_{3}\nu+\theta_{p}\wh{u}_{3}\wh{\wt{\nu}}-\theta_{p}^{-1}\wh{\wt{u}}_{3}\wh{\nu}+\wh{\wt{u}}_{3}\wh{\wt{\nu}}-u_{3}\nu+Q+\wh{P})+(pq\nu-p^*q^*\wh{\wt{\nu}})\\
&\times(q-q^*+p^*-p-\theta_{p}u_{3}\wt{\nu}+\theta_{p}^{-1}\wt{u}_{3}\nu+\theta_{q}u_{3}\wh{\nu}-\theta_{q}^{-1}\wh{u}_{3}\nu+\wh{u}_{3}\wh{\nu}-\wt{u}_{3}\wt{\nu}+Q-P)=0,
\end{aligned}
\end{align}
\end{subequations}
where $P$ and $Q$ are defined in \eqref{P} and \eqref{Q}. The above system may be interpreted as a dGI model, 
since its continuous limit yields the GI equation (see Subsection 3.4). Observing that this dGI model is conjugate symmetric, 
it is of interest to consider its local reduction (see Subsection 3.5).
\begin{remark}
In \eqref{u1-1}, the expressions for $(u_{1}-\wh{\wt{u}}_{1})$ and $(\wt{u}_{1}-\wh{u}_{1})$ in terms of $a_{1}$ and $b_{1}$ are not unique. 
Beyond the relation given in \eqref{u1-1}, we also have the following representation of $(u_{1}-\wh{\wt{u}}_{1})$ and $(\wt{u}_{1}-\wh{u}_{1})$:
\begin{subequations} 
\label{u1-2}
\begin{align}
\label{u1-2a}
& u_{1}-\wh{\wt{u}}_{1}=\wt{a}_{1}+b_{1}, \\
\label{u1-2b}
& \wt{u}_{1}-\wh{u}_{1}=\wt{a}_{1}-\wh{b}_{1}.
\end{align}
\end{subequations}
Consequently, substituting $(u_{1}-\wh{\wt{u}}_{1})$ and $(\wt{u}_{1}-\wh{u}_{1})$ in \eqref{u3-v-u1} leaves us with four feasible possibilities, 
from which we derive four sets of dGI models. We elaborate them in the following.

\noindent$\bullet$~~For (\eqref{u1-1a}, \eqref{u1-1b}): the corresponding dGI model is presented in \eqref{dGI}.

\noindent $\bullet$~~For (\eqref{u1-1a}, \eqref{u1-2b}):
\begin{subequations}
\label{dGI-2}
\begin{align}
\label{dGI1-2}
&\begin{aligned}
&2|p|^2(q(\wh{u}_{3}-\wh{\wt{u}}_{3})+q^*(\wt{u}_{3}-u_{3}))+2|q|^2(p(\wh{\wt{u}}_{3}-\wt{u}_{3})+p^*(u_{3}-\wh{u}_{3}))+(p^*q\wh{u}_{3}-pq^*\wt{u}_{3})\times\\
&(q^*-q+p^*-p+\theta_{q}^{-1}\wh{u}_{3}\nu-\theta_{q}u_{3}\wh{\nu}+\theta_{p}^{-1}\wh{\wt{u}}_{3}\wh{\nu}-\theta_{p}\wh{u}_{3}\wh{\wt{\nu}}+\wh{\wt{u}}_{3}\wh{\wt{\nu}}-u_{3}\nu-Q-\wh{P})+(p^*q^*u_{3}-pq\wh{\wt{u}}_{3})\\
&\times(q^*-q+p-p^*-\theta_{p}^{-1}\wh{\wt{u}}_{3}\wh{\nu}+\theta_{p}\wh{u}_{3}\wh{\wt{\nu}}+\theta_{q}^{-1}\wh{\wt{u}}_{3}\wt{\nu}-\theta_{q}\wt{u}_{3}\wh{\wt{\nu}}+\wh{u}_{3}\wh{\nu}-\wt{u}_{3}\wt{\nu}-\wt{Q}+\wh{P})=0,
\end{aligned}\\
\label{dGI2-2}
&\begin{aligned}
&2|p|^2(q^*(\wh{\nu}-\wh{\wt{\nu}})+q(\wt{\nu}-\nu))+2|q|^2(p^*(\wh{\wt{\nu}}-\wt{\nu})+p(\nu-\wh{\nu}))+(pq^*\wh{\nu}-p^*q\wt{\nu})\times\\
&(q-q^*+p-p^*+\theta_{q}u_{3}\wh{\nu}-\theta_{q}^{-1}\wh{u}_{3}\nu+\theta_{p}\wh{u}_{3}\wh{\wt{\nu}}-\theta_{p}^{-1}\wh{\wt{u}}_{3}\wh{\nu}+\wh{\wt{u}}_{3}\wh{\wt{\nu}}-u_{3}\nu+Q+\wh{P})+(pq\nu-p^*q^*\wh{\wt{\nu}})\\
&\times(q-q^*+p^*-p+\theta_{p}^{-1}\wh{\wt{u}}_{3}\wh{\nu}-\theta_{p}\wh{u}_{3}\wh{\wt{\nu}}-\theta_{q}^{-1}\wh{\wt{u}}_{3}\wt{\nu}+\theta_{q}\wt{u}_{3}\wh{\wt{\nu}}+\wh{u}_{3}\wh{\nu}-\wt{u}_{3}\wt{\nu}+\wt{Q}-\wh{P})=0.
\end{aligned}
\end{align}
\end{subequations}

\noindent $\bullet$~~For (\eqref{u1-2a}, \eqref{u1-1b}):
\begin{subequations}
\label{dGI-3}
\begin{align}
\label{dGI1-3}
&\begin{aligned}
&2|p|^2(q(\wh{u}_{3}-\wh{\wt{u}}_{3})+q^*(\wt{u}_{3}-u_{3}))+2|q|^2(p(\wh{\wt{u}}_{3}-\wt{u}_{3})+p^*(u_{3}-\wh{u}_{3}))+(p^*q\wh{u}_{3}-pq^*\wt{u}_{3})\times\\
&(q^*-q+p^*-p+\theta_{q}^{-1}\wh{\wt{u}}_{3}\wt{\nu}-\theta_{q}\wt{u}_{3}\wh{\wt{\nu}}+\theta_{p}^{-1}\wt{u}_{3}\nu-\theta_{p}u_{3}\wt{\nu}+\wh{\wt{u}}_{3}\wh{\wt{\nu}}-u_{3}\nu-\wt{Q}-P)+(p^*q^*u_{3}-pq\wh{\wt{u}}_{3})\\
&\times(q^*-q+p-p^*-\theta_{p}^{-1}\wt{u}_{3}\nu+\theta_{p}u_{3}\wt{\nu}+\theta_{q}^{-1}\wh{u}_{3}\nu-\theta_{q}u_{3}\wh{\nu}+\wh{u}_{3}\wh{\nu}-\wt{u}_{3}\wt{\nu}-Q+P)=0,
\end{aligned}\\
\label{dGI2-3}
&\begin{aligned}
&2|p|^2(q^*(\wh{\nu}-\wh{\wt{\nu}})+q(\wt{\nu}-\nu))+2|q|^2(p^*(\wh{\wt{\nu}}-\wt{\nu})+p(\nu-\wh{\nu}))+(pq^*\wh{\nu}-p^*q\wt{\nu})\times\\
&(q-q^*+p-p^*-\theta_{q}^{-1}\wh{\wt{u}}_{3}\wt{\nu}+\theta_{q}\wt{u}_{3}\wh{\wt{\nu}}-\theta_{p}^{-1}\wt{u}_{3}\nu+\theta_{p}u_{3}\wt{\nu}+\wh{\wt{u}}_{3}\wh{\wt{\nu}}-u_{3}\nu+\wt{Q}+P)+(pq\nu-p^*q^*\wh{\wt{\nu}})\\
&\times(q-q^*+p^*-p-\theta_{p}u_{3}\wt{\nu}+\theta_{p}^{-1}\wt{u}_{3}\nu+\theta_{q}u_{3}\wh{\nu}-\theta_{q}^{-1}\wh{u}_{3}\nu+\wh{u}_{3}\wh{\nu}-\wt{u}_{3}\wt{\nu}+Q-P)=0.
\end{aligned}
\end{align}
\end{subequations}

\noindent $\bullet$~~For (\eqref{u1-2a}, \eqref{u1-2b}):
\begin{subequations}
\label{dGI-4}
\begin{align}
\label{dGI1-4}
&\begin{aligned}
&2|p|^2(q(\wh{u}_{3}-\wh{\wt{u}}_{3})+q^*(\wt{u}_{3}-u_{3}))+2|q|^2(p(\wh{\wt{u}}_{3}-\wt{u}_{3})+p^*(u_{3}-\wh{u}_{3}))+(p^*q\wh{u}_{3}-pq^*\wt{u}_{3})\times\\
&(q^*-q+p^*-p+\theta_{q}^{-1}\wh{\wt{u}}_{3}\wt{\nu}-\theta_{q}\wt{u}_{3}\wh{\wt{\nu}}+\theta_{p}^{-1}\wt{u}_{3}\nu-\theta_{p}u_{3}\wt{\nu}+\wh{\wt{u}}_{3}\wh{\wt{\nu}}-u_{3}\nu-\wt{Q}-P)+(p^*q^*u_{3}-pq\wh{\wt{u}}_{3})\\
&\times(q^*-q+p-p^*-\theta_{p}^{-1}\wh{\wt{u}}_{3}\wh{\nu}+\theta_{p}\wh{u}_{3}\wh{\wt{\nu}}+\theta_{q}^{-1}\wh{\wt{u}}_{3}\wt{\nu}-\theta_{q}\wt{u}_{3}\wh{\wt{\nu}}+\wh{u}_{3}\wh{\nu}-\wt{u}_{3}\wt{\nu}-\wt{Q}+\wh{P})=0,
\end{aligned}\\
\label{dGI2-4}
&\begin{aligned}
&2|p|^2(q^*(\wh{\nu}-\wh{\wt{\nu}})+q(\wt{\nu}-\nu))+2|q|^2(p^*(\wh{\wt{\nu}}-\wt{\nu})+p(\nu-\wh{\nu}))+(pq^*\wh{\nu}-p^*q\wt{\nu})\times\\
&(q-q^*+p-p^*-\theta_{q}^{-1}\wh{\wt{u}}_{3}\wt{\nu}+\theta_{q}\wt{u}_{3}\wh{\wt{\nu}}-\theta_{p}^{-1}\wt{u}_{3}\nu+\theta_{p}u_{3}\wt{\nu}+\wh{\wt{u}}_{3}\wh{\wt{\nu}}-u_{3}\nu+\wt{Q}+P)+(pq\nu-p^*q^*\wh{\wt{\nu}})\\
&\times(q-q^*+p^*-p+\theta_{p}^{-1}\wh{\wt{u}}_{3}\wh{\nu}-\theta_{p}\wh{u}_{3}\wh{\wt{\nu}}-\theta_{q}^{-1}\wh{\wt{u}}_{3}\wt{\nu}+\theta_{q}\wt{u}_{3}\wh{\wt{\nu}}+\wh{u}_{3}\wh{\nu}-\wt{u}_{3}\wt{\nu}+\wt{Q}-\wh{P})=0.
\end{aligned}
\end{align}
\end{subequations}
The above three models, together with \eqref{dGI}, are all conjugate symmetric, with similar but not identical forms. 
We regard all four models as distinct dGI models, since each is derived from the unified discrete equation \eqref{u3-v-u1}, whose continuous limit yields the GI system.
\end{remark}

\begin{remark}
Alternatively, we may use a pair of new dependent variables $(v_{3}/v_{1},u_{2})$ to construct a discrete GI model identical to 
\eqref{dGI} (or \eqref{dGI-2}, \eqref{dGI-3}, \eqref{dGI-4}). This indicates that $(v_{3}/v_{1},u_{2})$ are also solutions of the
discrete GI models \eqref{dGI}, \eqref{dGI-2}, \eqref{dGI-3}, \eqref{dGI-4}, where $v_{3}/v_{1}$ corresponds to 
$u_{3}$ and $u_{2}$ to $\nu$ in the original models. In the following study, we present only the solution expressed in terms of the 
dependent variables $(u_{3},\nu)$ without loss of generality.
\end{remark}	

\subsection{Solutions to the dGI models}
\label{3-3}

Having derived the dGI models, we now turn to their explicit solutions. The Cauchy matrix framework not only yields the equations but also provides their 
solutions once the SDE-I \eqref{SDE-I} is solved. In particular, diagonal form of the spectral matrices $\bK_{1}$ and $\bK_{2}$ 
lead to soliton solutions, whereas Jordan-block form give multiple-pole solutions. At this stage we introduce the discrete plane wave factors
\begin{subequations}
\label{plane-def}
\begin{align}
&\rho(z)=\left(\theta_{p}\frac{p-z}{p^*-z}\right)^{n}
\left(\theta_{q}\frac{q-z}{q^*-z}\right) ^{m}\rho^{0}(z),\\
&\sigma(z)=\left(\theta_{p}\frac{p-z}{p^*-z}\right)^{-n}
\left(\theta_{q}\frac{q-z}{q^*-z}\right)^{-m}\sigma^{0}(z),
\end{align}
\end{subequations}
where $\rho^{0}(z),~\sigma^{0}(z)$ are phase parameters independent of $(n,m)$. 

Returning to SDE-I \eqref{SDE-I} and in light of the setting \eqref{KMrc}, the SDE-I decouples into the following form
\begin{subequations}
\label{SDE-I-1}
\begin{align}
\label{SE-1}
&\bK_{1}\bM_{1}-\bM_{1}\bK_{2}=\br_{1}\tc_{2},\quad \bK_{2}\bM_{2}-\bM_{2}\bK_{1}=\br_{2}\tc_{1}, \\
&p\bI(p^*\bI-\bK_{1})\wt{\br}_{1}=p^*\bI(p\bI-\bK_{1})\br_{1},\quad p^*\bI(p\bI-\bK_{2})\wt{\br}_{2}=p\bI(p^*\bI-\bK_{2})\br_{2},\\
&q\bI(q^*\bI-\bK_{1})\wh{\br}_{1}=q^*\bI(q\bI-\bK_{1})\br_{1},\quad q^*\bI(q\bI-\bK_{2})\wh{\br}_{2}=q\bI(q^*\bI-\bK_{2})\br_{2},
\end{align}
\end{subequations}
where it is assumed that $\bK_1$ and $\bK_2$ do not share any eigenvalues. The matrix functions $\bu$ and $\bv$ defined in equation \eqref{uvw-def} are formulated as
\begin{align}
\label{sym-uv-def}
\bu=\bS^{(0,0)}=\tc(\bI+\bM)^{-1}\br,\quad\bv=\bI-\bS^{(-1,0)}
=\bI-\tc(\bI+\bM)^{-1}\bK^{-1}\br.
\end{align}
For the solution $(u_{3},\nu)$ to the  dGI model \eqref{dGI},
their formulae are given by
\begin{align}
\label{uv1}
(u_{3},\nu)=\left(u_{3},v_{2}/v_{4}\right) =\left( \tc_{1}(\bI-\bM_{1}\bM_{2})^{-1}\br_{1},
\frac{-\tc_{2}(\bI-\bM_{2}\bM_{1})^{-1}\bK_{2}^{-1}\br_{2}}{1+\tc_{1}
\bM_{1}(\bI-\bM_{2}\bM_{1})^{-1}\bK_{2}^{-1}\br_{2}}\right),
\end{align}
where the involved elements are determined by the system \eqref{SDE-I-1}.

Next, we present the explicit formulae for soliton and multiple-pole solutions
by solving \eqref{SDE-I-1}. These are summarized in the following theorems.
\begin{theorem}[\textbf{Soliton solutions}]
Functions $(u_{3},\nu)$ defined in \eqref{uv1} serve as $N$-soliton solutions for the dGI model \eqref{dGI} with
\begin{subequations}
\begin{align}
&\bK_{1}=\mathrm{diag}(k_{1},k_{2},\cdots,k_{N}), \quad \bK_{2}=\mathrm{diag}(l_{1},l_{2},\cdots,l_{N}),\\
&\br_{1}=(\rho(k_{1}),\rho(k_{2}),\cdots,\rho(k_{N}))^{\st}, \quad \br_{2}=(\sigma(l_{1}),\sigma(l_{2}),\cdots,\sigma(l_{N}))^{\st},\\
&\tc_{1}=(c_{1,1},c_{1,2},\cdots,c_{1,N}), \quad \tc_{2}=(c_{2,1},c_{2,2},\cdots,c_{2,N}),\\
&\bM_{1}=(M_{1,ij})_{N\times N},\quad M_{1,ij}=\frac{\rho(k_{i})c_{2,j}}{k_{i}-l_{j}},\\
&\bM_{2}=(M_{2,ij})_{N\times N},\quad M_{2,ij}=\frac{\sigma(l_{i})c_{1,j}}{l_{i}-k_{j}},
\end{align}
\end{subequations}
where the discrete plane wave factors $\rho(z)$ and $\sigma(z)$ are defined in \eqref{plane-def}.
\end{theorem}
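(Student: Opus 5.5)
The plan is to reduce the theorem to checking that the proposed data solve the decoupled system \eqref{SDE-I-1}. Once that is done, everything follows from the construction in Subsection \ref{3-2}. There the dGI model \eqref{dGI} was derived purely algebraically from the SDE-I \eqref{SDE-I}, through Propositions \ref{prop-3-1} and \ref{prop-3-2}, the relations \eqref{vw}, \eqref{th-uvw}, $\det\bv=1$, \eqref{a2}, \eqref{a1}, \eqref{b1} and \eqref{u1-1}. Hence any $(\bK,\bM,\br,\tc)$ with the block structure \eqref{KMrc} that satisfies \eqref{SDE-I-1} makes $(u_3,\nu)$, given by \eqref{uv1}, a solution of \eqref{dGI}. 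Two further requirements are that $\bI+\bM$ is invertible and that $v_4\neq 0$, which hold generically, and that the branch of $Q$ and $P$ matches $\operatorname{Im}q>0$ and $\operatorname{Im}p>0$ as in the remark.

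The first check is the dispersion relations. Since $\bK_1$ and $\bK_2$ are diagonal, the relations for $\br_1$ and $\br_2$ decouple componentwise. For $\br_1$ they read $p(p^*-k_i)\wt{\rho}(k_i)=p^*(p-k_i)\rho(k_i)$, which gives $\wt{\rho}(k_i)/\rho(k_i)=\theta_p(p-k_i)/(p^*-k_i)$. This is exactly the $n$-shift factor in \eqref{plane-def}. For $\br_2$ the relation is $p^*(p-l_i)\wt{\sigma}(l_i)=p(p^*-l_i)\sigma(l_i)$, which gives the reciprocal factor, consistent with the exponent $-n$ in $\sigma$. The $m$-direction is identical with $q$ in place of $p$.

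The second check is the Sylvester equations \eqref{SE-1}. Because $\bK_1$ and $\bK_2$ are diagonal with $k_i\neq l_j$, the equation $\bK_1\bM_1-\bM_1\bK_2=\br_1\tc_2$ reads $(k_i-l_j)M_{1,ij}=\rho(k_i)c_{2,j}$ entrywise. Its unique solution is the stated Cauchy-type matrix, and the same holds for $\bM_2$. Uniqueness under Remark \ref{Re-2-1} also means the shifted Sylvester equation is automatically compatible, which is what Proposition \ref{prop-3-1} used.

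The remaining step is to make sure the derivation in Subsection \ref{3-2} relied only on \eqref{SDE-I} and Remark \ref{Re-2-1}, and not on any extra structure. In particular, the commutation identities \eqref{commu} hold here because $\bP$ is block-scalar while $\bM$ is block-off-diagonal, and $\br,\tc$ have the given block pattern. Finally, I would express $u_3$ and $\nu$ through \eqref{si}: $u_3=s_3$ with $(i,j)=(0,0)$; $v_2=-s_2$ with $(i,j)=(-1,0)$; and $v_4=1-s_4$ with $(i,j)=(-1,0)$. This reproduces \eqref{uv1}. The only delicate point I expect is the square-root branch in $Q$ and $P$. The derivation fixed $a_1$ as one root of the quadratic obtained from \eqref{ai} and \eqref{a2}, so one must confirm that the actual $a_1=u_1-\wh{u}_1$ coincides with the chosen branch. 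I would settle this by continuity from the vacuum: with $N=0$, or $\bM\to0$, we have $a_1=0$, and the stated sign convention for $\operatorname{Im}q>0$ gives $Q=q^*-q$ there. The identity then persists on the connected parameter domain where the radicand stays away from the branch cut.
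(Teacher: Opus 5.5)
Your proposal is correct and follows the same route as the paper. The paper proves the theorem implicitly: it notes that the diagonal data solve \eqref{SDE-I-1}, via the entrywise dispersion relations for $\rho,\sigma$ and the Cauchy-type solution of the Sylvester equations, and then relies on the purely algebraic derivation of Subsection \ref{3-2}. Your vacuum-continuity treatment of the square-root branch in $P$ and $Q$ goes slightly beyond the paper, which only fixes the sign by the convention $\operatorname{Im}p,\operatorname{Im}q>0$ in its remark. You should also add $k_i,l_j\neq 0$ to your genericity assumptions, since $\bK^{-1}$ enters $\bS^{(-1,0)}$, $\det\bv=1$ and $\nu$.
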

\begin{theorem}[\textbf{Multiple-pole solutions}]
Functions $(u_{3},\nu)$ defined in \eqref{uv1} serve as $N$-th order multiple-pole solutions for the dGI model \eqref{dGI} with
\begin{subequations}
\begin{align}
&\bK_{1}=\bJ[k,1],\quad\bK_{2}=\bJ[l,1],\quad \mathrm{with}\quad k\neq l,\\
&\br_{1}=\bF_{k}[\rho(k)]\be_{N},\quad\br_{2}=\bF_{l}[\sigma(l)]\be_{N},\quad \quad \be_{N}=(\underbrace{1, 0, \cdots, 0}_{N\text{-dimensional}})^{\st},\\
&\tc_{1}=\be^{\st}_{N}\bH[c_{1,j}],\quad\tc_{2}=\be^{\st}_{N}\bH[c_{2,j}],\\
&\bM_{1}=\bF_{k}[\rho(k)]\bG_{1}\bH[c_{2,j}],\quad\bM_{2}=\bF_{l}[\sigma(l)]\bG_{2}\bH[c_{1,j}],\\
\label{G1G2}
&\bG_{1}=\left( \frac{(-1)^{i-1}
\mathrm{C}^{i-1}_{i+j-2}}{(k-l)^{i+j-1}}\right) _{N\times N},\quad \bG_{2}=\left( \frac{(-1)^{i-1}
\mathrm{C}^{i-1}_{i+j-2}}{(l-k)^{i+j-1}}\right) _{N\times N},
\end{align}
\end{subequations}
where $C_{n}^m$ denotes the binomial coefficient and
\begin{align}
\label{J-def}
&\bJ[z,\delta]:=\begin{pmatrix}
z&0&0&\cdots&0\\
\delta&z&0&\cdots&0\\
0&\delta&z&\cdots&0\\
\vdots&\vdots&\vdots&\ddots&\vdots\\
0&0&0&\cdots&z
\end{pmatrix}_{N\times N}, \quad  \delta=\pm 1, \\
\label{F-def}
&\bF_{\delta z}[f(z)]=\begin{pmatrix}
f(z)&0&0&\cdots&0\\
\partial_{\delta z}f(z)&f(z)&0&\cdots&0\\
\frac{\partial^{2}_{\delta z}f(z)}{2!}&\partial_{\delta z}f(z)&f(z)
&\cdots&0\\
\vdots&\vdots&\vdots&\ddots&\vdots\\
\frac{\partial^{N-1}_{\delta z}f(z)}{(N-1)!}
&\frac{\partial^{N-2}_{\delta z}f(z)}{(N-2)!}
&\frac{\partial^{N-3}_{\delta z}f(z)}{(N-3)!}&\cdots&f(z)
\end{pmatrix}_{N\times N}, \\
\label{H-def}
&\bH[b_{j}]=
\begin{pmatrix}
b_{1} & \cdots 
& b_{N-2} & b_{N-1} &b_{N}\\
b_{2}
& \cdots & b_{N-1} & b_{N} & 0\\
b_{3} & \cdots &b_{N}& 0 & 0\\
\vdots & \vdots & \vdots & \ddots &\ddots\\
b_{N} & \cdots & 0 & 0 & 0
\end{pmatrix}_{N\times N}.
\end{align}
\end{theorem}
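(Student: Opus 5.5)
Since the dGI model \eqref{dGI} was derived from the SDE-I \eqref{SDE-I} using only the Sylvester equation, the dispersion relations \eqref{t-r}--\eqref{h-r} and the recurrence relations \eqref{Sij-re}, the plan is to show that the stated Jordan-block data give an exact solution of the decoupled system \eqref{SDE-I-1}. Formula \eqref{uv1} then solves \eqref{dGI} automatically. Two technical points are that $\det(\bv)=1$ and $\bI+\bM$ is invertible, which hold generically, and that the sign convention of $Q$, $P$ is fixed by $\operatorname{Im}(p),\operatorname{Im}(q)>0$.

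\emph{Dispersion relations.} I would first check the dispersion relations for $\br_1=\bF_k[\rho(k)]\be_N$ and $\br_2=\bF_l[\sigma(l)]\be_N$. The key fact is that lower-triangular Toeplitz matrices commute with one another and with $\bJ[k,1]$. Moreover, for analytic $g$ one has $g(\bJ[k,1])=\bF_k[g(k)]$, and $\bF_k[f]\bF_k[g]=\bF_k[fg]$ by the Leibniz rule. Since $\rho(k)$ satisfies the scalar relation $p(p^*-k)\wt{\rho}(k)=p^*(p-k)\rho(k)$ identically in $k$, the product rule gives
\[
\bF_k\big[p(p^*-k)\wt\rho(k)\big]=\bF_k\big[p^*(p-k)\rho(k)\big],
\]
and the left and right sides equal $p(p^*\bI-\bK_1)\wt{\bF_k[\rho]}$ and $p^*(p\bI-\bK_1)\bF_k[\rho]$ respectively. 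Multiplying by $\be_N$ gives the $\wt{\ }$-relation for $\br_1$. The $\wh{\ }$-relation and the relations for $\br_2$ with $\sigma(l)$ follow identically, using that $\sigma$ satisfies the inverse relation.

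\emph{Sylvester equation.} Next I would verify $\bK_1\bM_1-\bM_1\bK_2=\br_1\tc_2$ with $\bM_1=\bF_k[\rho]\bG_1\bH[c_{2,j}]$. Two commutation facts reduce this to a constant-matrix identity. First, $\bF_k[\rho]$ commutes with $\bK_1$. Second, the Hankel matrix satisfies $\bH[b]\bJ[l,1]=\bJ[l,1]^{\st}\bH[b]$, which is checked directly on entries. The equation therefore becomes
\[
\bF_k[\rho]\big(\bJ[k,1]\bG_1-\bG_1\bJ[l,1]^{\st}\big)\bH[c_{2,j}]=\bF_k[\rho]\be_N\be_N^{\st}\bH[c_{2,j}],
\]
so it suffices that $\bJ[k,1]\bG_1-\bG_1\bJ[l,1]^{\st}=\be_N\be_N^{\st}$. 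Because $\bJ[k,1]$ and $\bJ[l,1]$ share no eigenvalue ($k\neq l$), this equation has a unique solution. I would identify that solution as the coefficient array of the Taylor expansion of $1/(k-l)$, namely
\[
(\bG_1)_{ij}=\frac{1}{(i-1)!\,(j-1)!}\,\partial_k^{i-1}\partial_l^{j-1}\frac{1}{k-l}.
\]
Evaluating the derivatives gives exactly $(-1)^{i-1}\mathrm{C}^{i-1}_{i+j-2}/(k-l)^{i+j-1}$, which matches \eqref{G1G2}. Alternatively, the entry recursion $(\bG_1)_{i-1,j}+k(\bG_1)_{ij}-l(\bG_1)_{ij}-(\bG_1)_{i,j-1}=\delta_{i1}\delta_{j1}$ follows from Pascal's rule. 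The equation for $\bM_2$ is the same with $k\leftrightarrow l$ and $\bG_1\to\bG_2$.

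\emph{Conclusion and main obstacle.} Once all of \eqref{SDE-I-1} holds, Propositions \ref{prop-3-1}--\ref{prop-3-2} and the derivation of Subsection \ref{3-2} apply verbatim. Hence $(u_3,\nu)$ given by \eqref{uv1} solves \eqref{dGI}. Proposition \ref{prop-2-3} confirms that working with the canonical Jordan form loses no generality. I expect the main obstacle to be the index bookkeeping in the Sylvester step: fixing the lower-triangular orientation of $\bJ$ so that the Hankel intertwining produces $\bJ^{\st}$ on the correct side, and matching the binomial entries of $\bG_1$ and their signs.
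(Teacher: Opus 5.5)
Your proposal is correct and takes the same route the paper implicitly takes; the paper simply states the theorem after ``solving \eqref{SDE-I-1}''. You check the dispersion relations via $g(\bJ[k,1])=\bF_k[g(k)]$, reduce the Sylvester equation through $\bH\bJ=\bJ^{\st}\bH$ to $\bJ[k,1]\bG_1-\bG_1\bJ[l,1]^{\st}=\be_N\be_N^{\st}$, and verify $\bG_1$ by Pascal's rule, which is exactly what is needed; \eqref{uv1} then solves \eqref{dGI} by the general derivation of Subsection~\ref{3-2}. One small correction: $\det(\bv)=1$ is not merely generic but an identity, see \eqref{det-v}, valid whenever $\bK$ (i.e.\ $k,l\neq 0$) and $\bI+\bM$ are invertible.
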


\begin{remark}
$\bF_{\delta z}[f(z)]$ in \eqref{F-def} is a lower triangular Toeplitz matrix commuting with a Jordan block of same order.
All such matrices compose a commutative set $\c{G}^{[\mathcal{N}]}$ with respect to matrix multiplication
and the subset
\[G^{[\mathcal{N}]}=\big \{\mathcal{C} \big |~\big. \mathcal{C}\in \c{G}^{[\mathcal{N}]},~|\mathcal{C}|\neq 0 \big\}\]
is an Abelian group. Such kind of matrices play useful roles in the expression of multiple-pole solutions.
For more properties of such matrices one can refer to \cite{Z-arxiv-2006,ZZSZ-RMP-2014}.
$\bH[b_{j}]$ in \eqref{H-def} is a skew triangular Hankel matrix, which is symmetric. It is easy to find that
$\bH[b_{j}]\bF_{\delta z}[f(z)]$ is a symmetric matrix, i.e.
$\bH[b_{j}]\bF_{\delta z}[f(z)]=\bF^{\st}_{\delta z}[f(z)]\bH[b_{j}]$.
\end{remark}

\subsection{Continuum limits: from the dGI models to the GI system}
We next examine the continuum limits of the dGI models. By contracting one lattice direction, the fully discrete models reduce to a semi-discrete GI (sd-GI) system. A further contraction of the remaining lattice direction then yields the continuous GI system. This two-step limiting procedure also shows that the four dGI models obtained in Subsection \ref{3-2} share the same continuous GI system, despite their different lattice forms. We note that an analogous two-step procedure will be applied to the dhGI models in Section \ref{4}, where the different scaling of the dispersion relations leads to the higher-order GI system rather than the standard one.

For this purpose, we take $p=a+a\mathrm{i}$ and $q=b+b\mathrm{i}$ with $a,b>0$; consequently, the discrete plane wave factors $\rho(z)$ and $\sigma(z)$
from \eqref{plane-def} may be expressed as:
\begin{subequations}
\label{plane-def-1}
\begin{align}
&\rho(z)=\left(\frac{a-a\mathrm{i}+z\mathrm{i}}{a-a\mathrm{i}-z}\right)^{n}
\left(\frac{b-b\mathrm{i}+z\mathrm{i}}{b-b\mathrm{i}-z}\right) ^{m}\rho^{0}(z),\\
&\sigma(z)=\left(\frac{a-a\mathrm{i}-z}{a-a\mathrm{i}+z\mathrm{i}}\right)^{n}
\left(\frac{b-b\mathrm{i}-z}{b-b\mathrm{i}+z\mathrm{i}}\right)^{m}\sigma^{0}(z).
\end{align}
\end{subequations}

\subsubsection{The sd-GI model}
In the derivation of the dGI models, $p$ and $q$ are arbitrary complex numbers. Here, by setting $p=a+a\mathrm{i}$ and $q=b+b\mathrm{i}$, we present only the necessary relations that are directly relevant to the sd-GI model.

Upon setting $p=a+a\mathrm{i}$ and $q=b+b\mathrm{i}$, the discrete equations \eqref{u3-v-u1} admit a further simplification to
\begin{subequations}
\label{u3-v-u1-1}
\begin{align}
& b(1+\mathrm{i})(\mathrm{i}u_{3}-\mathrm{i}\wh{u}_{3}+\wt{u}_{3}-\wh{\wt{u}}_{3})-a(1+\mathrm{i})(\mathrm{i}u_{3}+\wh{u}_{3}-\mathrm{i}\wt{u}_{3}-\wh{\wt{u}}_{3}) \nn \\
& \qquad\qquad +(\wh{u}_{3}-\wt{u}_{3})(u_{1}-\wh{\wt{u}}_{1})+\mathrm{i}(u_{3}+\wh{\wt{u}}_{3})(\wh{u}_{1}-\wt{u}_{1})=0, \\
&  b(1+\mathrm{i})(\nu-\wh{\nu}+\mathrm{i}\wt{\nu}-\mathrm{i}\wh{\wt{\nu}})-a(1+\mathrm{i})(\nu+\mathrm{i}\wh{\nu}-\wt{\nu}-\mathrm{i}\wh{\wt{\nu}}) \nn \\
& \qquad\qquad  +(\wh{\nu}-\wt{\nu})(u_{1}-\wh{\wt{u}}_{1}-u_{3}\nu+\wh{\wt{u}}_{3}\wh{\wt{\nu}})+\mathrm{i}(\nu+\wh{\wt{\nu}})(\wt{u}_{1}-\wh{u}_{1}+\wh{u}_{3}\wh{\nu}-\wt{u}_{3}\wt{\nu})=0,
\end{align}
\end{subequations}
and equation \eqref{b1} reduces to
\begin{align}
\label{b1-1}
2b_{1}=2\mathrm{i}a+(u_{3}-\mathrm{i}\wt{u}_{3})(\nu-\mathrm{i}\wt{\nu})-\mathrm{i}\left[ 8a^2-(2a+(u_{3}-\mathrm{i}\wt{u}_{3})(\nu+\mathrm{i}\wt{\nu}))^2\right] ^{\frac{1}{2}}.
\end{align}
Next, our attention is directed to the continuum limits of \eqref{u3-v-u1-1} and \eqref{b1-1}.

We let both $b$ and $m$ tend to infinity but keep $m/b$ finite.
Introduce
\begin{equation}\label{xmq}
\xi:=m/b.
\end{equation}
At this point, $\xi$ is the new continuous coordinate.
With this setting we reinterpret the variables $u_1$, $u_{3}$ and $\nu$ as
\begin{align}
u_{1}(n,m)=u_{1}(n,\xi),\quad u_{3}(n,m)=u_{3}(n,\xi),\quad\nu(n,m)=\nu(n,\xi)
\end{align}
without making confusions,
while the shifted variables in $m$-direction give rise to
\begin{subequations}
\label{Taylor-expan-uv}
\begin{align}
\label{Taylor-expan-u}
\wh{u}_{3}=&u_{3}(n,\xi+1/b)=u_{3}+u_{3,\xi}/b+u_{3,\xi\xi}/(2b^{2})+\cdots,\\
\label{Taylor-expan-wu}
\wh{\wt{u}}_{3}=&u_{3}(n+1,\xi+1/b)=\wt{u}_{3}+\wt{u}_{3,\xi}/b+\wt{u}_{3,\xi\xi}/(2b^{2})+\cdots,
\end{align}
\end{subequations}
and a similar formula  for  $\wh{u}_{1},\wh{\wt{u}}_{1},\wh{\nu}$ and $\wh{\wt{\nu}}$.
Inserting them into the system \eqref{u3-v-u1-1} and then
from the leading term (in terms of $\mathcal{O}(b^0)$) we obtain a semi-discrete system:
\begin{subequations}
\label{sd-u3u1v}
\begin{align}
&a(1+\mathrm{i})(u_{3}-\wt{u}_{3})+\mathrm{i}u_{3,\xi}+\wt{u}_{3,\xi}-(u_{1}-\wt{u}_{1})(u_{3}+\mathrm{i}\wt{u}_{3})=0,\\
&a(1+\mathrm{i})(\nu-\wt{\nu})+\nu_{\xi}+\mathrm{i}\wt{\nu}_{\xi}+(u_{1}-\wt{u}_{1}+\wt{u}_{3}\wt{\nu}-u_{3}\nu)(\mathrm{i}\nu+\wt{\nu})=0.
\end{align}
\end{subequations}
Observing that the term $(u_{1}-\wt{u}_{1})$ in the semi-discrete system \eqref{sd-u3u1v} is in fact $b_{1}$, 
we replace it by means of equation \eqref{b1-1} and rearrange the resulting system into a semi-discrete form involving only $u_{3}$ and $\nu$:
\begin{subequations}
\label{sd-u3v}
\begin{align}
&2a(1+\mathrm{i})(u_{3}-\wt{u}_{3})+2\mathrm{i}u_{3,\xi}+2\wt{u}_{3,\xi}-(2a\mathrm{i}+(u_{3}-\mathrm{i}\wt{u}_{3})(\nu-\mathrm{i}\wt{\nu})-\mathrm{i}\mathcal{P})(u_{3}+\mathrm{i}\wt{u}_{3})=0,\\
&2a(1-\mathrm{i})(\nu-\wt{\nu})-2\mathrm{i}\nu_{\xi}+2\wt{\nu}_{\xi}+(2a\mathrm{i}-(u_{3}+\mathrm{i}\wt{u}_{3})(\nu+\mathrm{i}\wt{\nu})-\mathrm{i}\mathcal{P})(\nu-\mathrm{i}\wt{\nu})=0,
\end{align}
where
\begin{align}
\mathcal{P}:=\left[ 8a^2-(2a+(u_{3}-\mathrm{i}\wt{u}_{3})(\nu+\mathrm{i}\wt{\nu}))^2\right] ^{\frac{1}{2}}.
\end{align}
\end{subequations}
We refer to \eqref{sd-u3v} as the sd-GI model. In Section \ref{3-2}, four dGI models were derived. The continuum limits performed above show that all four dGI models reduce to the same sd-GI model \eqref{sd-u3v}. Below, we present the solutions of this sd-GI model and briefly describe the Cauchy matrix method adapted to it.

In the above limit, the discrete plane wave factors in \eqref{plane-def-1} give rise to
\begin{subequations}
\label{sd-plane}
\begin{align}
\label{sd-plane-1}
&\rho(z)=\left(\frac{a-a\mathrm{i}+z\mathrm{i}}{a-a\mathrm{i}-z}\right)^{n}e^{\mathrm{i}z\xi}\rho^{0}(z),\\
\label{sd-plane-2}
&\sigma(z)=\left(\frac{a-a\mathrm{i}-z}{a-a\mathrm{i}+z\mathrm{i}}\right)^{n}e^{-\mathrm{i}z\xi}\sigma^{0}(z),
\end{align}
\end{subequations}
which now serve for the sd-GI model \eqref{sd-u3v}.

\begin{prop}\label{Prop-4-1}
Formulae \eqref{uv1} provide solutions to
the sd-GI model \eqref{sd-u3v}, through the Cauchy matrix scheme:
\begin{subequations}
\begin{align}
&\bK_{1}\bM_{1}-\bM_{1}\bK_{2}=\br_{1}\tc_{2},\quad \bK_{2}\bM_{2}-\bM_{2}\bK_{1}=\br_{2}\tc_{1}, \\
&(a(1-\mathrm{i})\bI-\bK_{1})\wt{\br}_{1}=(a(1-\mathrm{i})\bI+\mathrm{i}\bK_{1})\br_{1},\quad\br_{1,\xi}=\mathrm{i}\bK_{1}\br_{1},\\
&(a(1-\mathrm{i})\bI+\mathrm{i}\bK_{2})\wt{\br}_{2}=(a(1-\mathrm{i})\bI-\bK_{2})\br_{2}, \quad\br_{2,\xi}=-\mathrm{i}\bK_{2}\br_{2}.
\end{align}
\end{subequations}
\end{prop}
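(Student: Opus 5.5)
The plan is to redo the derivation of Subsection \ref{3-2} in the semi-discrete setting rather than to pass to the limit in the solution formulae. There are two routes, and I would carry out the second one.

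The first route is to apply the limit $b,m\to\infty$ with $\xi=m/b$ fixed directly to the discrete Cauchy matrix data. With $q=b+b\mathrm{i}$, one computes
\[\theta_q\frac{q-z}{q^*-z}=1+\frac{\mathrm{i}z}{b}+\mathcal{O}(b^{-2}),\]
so the $m$-factor of $\rho(z)$ tends to $e^{\mathrm{i}z\xi}$, and $\sigma$ behaves similarly. Hence $\br_1,\br_2$ satisfy the stated $\xi$-equations, and the $n$-relations are just \eqref{h-r}-type relations rewritten with $p=a+a\mathrm{i}$. Since \eqref{sd-u3v} was itself obtained as the leading-order term of the dGI system, and the solutions of Theorems 3.1 and 3.2 converge smoothly, formulae \eqref{uv1} satisfy \eqref{sd-u3v}. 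This argument needs uniformity of the expansion, so I prefer the direct route.

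The direct route proceeds in four steps.

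\textbf{Step 1 ($\xi$-dynamics).} Define $\bL=\mathrm{diag}(\mathrm{i}\bI,-\mathrm{i}\bI)$, which satisfies $\bL\bK=\bK\bL$, $\bL\bM=-\bM\bL$ and $\tc\bL=-\bl\tc$ with $\bl=\mathrm{diag}(\mathrm{i},-\mathrm{i})$. From $\br_\xi=\bL\bK\br$, differentiating the Sylvester equation \eqref{SE1} and using uniqueness (Remark \ref{Re-2-1}) gives
\[\bM_\xi=\bL\bK\bM+\bM\bL\bK=\bL\br\tc.\]
Consequently
\[\bS^{(i,j)}_\xi=\bl\,\bS^{(i+1,j)}-\bS^{(i,j+1)}\bl\ \text{(up to arrangement)}-\bS^{(0,j)}\bl\,\bS^{(i,0)},\]
in the same way as the continuous AKNS-type Cauchy matrix results.

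\textbf{Step 2 (low orders).} Take $(i,j)=(0,0)$ and $(-1,0)$, and combine with \eqref{vw}. This yields $\bu_\xi$ and $\bv_\xi$ in terms of $\bu$, $\bv$ and $\bS^{(1,0)}$, $\bS^{(0,1)}$. Eliminating these using \eqref{Sij-re=1-po} gives the $(2,1)$ entry $u_{3,\xi}$ and $\nu_\xi$ in terms of $u_3,\nu,u_1$. The key relation I expect is
\[u_{1,\xi}=\mathrm{i}u_2u_3\]
or similar, together with $\nu_\xi$ involving $u_3\nu^2$.

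\textbf{Step 3 (mixed relations).} Combine these with the $\wt{\phantom{a}}$-relations \eqref{p-Sij-1}, \eqref{p-Sij-2} and the consequences \eqref{b2}, \eqref{bi}, \eqref{b1}, which remain valid verbatim because the $n$-direction data are unchanged. Eliminating $\bS^{(0,1)},\bS^{(1,0)}$ between the $\xi$-flow and the $n$-shift at $(i,j)=(0,0)$ yields a semi-discrete analogue of \eqref{bu-close}. Its $(2,1)$ entry is the first equation of \eqref{sd-u3u1v}. The second equation follows from the compatibility
\[\partial_\xi b_2=\text{(shift of the }\xi\text{-relation for }u_2),\]
used together with \eqref{b2}, mirroring the identity $pa_2-qb_2=p^*\wt a_2-q^*\wh a_2$.

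\textbf{Step 4 (closure).} Replace $u_1-\wt u_1=b_1$ by \eqref{b1-1}, with the branch of the square root fixed by $\operatorname{Im}p=a>0$. This produces \eqref{sd-u3v}.

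The main obstacle is Step 3, specifically obtaining the $\nu$-equation. This requires tracking $\bw=\bv^{-1}$ via $\det\bv=1$ and \eqref{vw-def1} under the $\xi$-flow, and checking that the quadratic relation \eqref{bi} selects the same branch as in the limit. If the direct elimination becomes unwieldy, I would instead verify \eqref{sd-u3v} by expanding the already-established discrete identities \eqref{u3-v-u1} and \eqref{b1} at order $b^0$. These identities hold exactly for the discrete data and depend analytically on $1/b$, so their leading coefficients give \eqref{sd-u3u1v} for the limiting data. That reduces the proof to the plane-wave computation of the first route.
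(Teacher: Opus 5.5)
Your fallback is in fact the whole of the paper's argument. The paper gives no separate proof. It only expands $\theta_q\frac{q-z}{q^*-z}=1+\mathrm{i}z/b+\mathcal{O}(b^{-2})$ to pass from \eqref{plane-def-1} to \eqref{sd-plane}, notes that the $n$-relations are \eqref{t-r} at $p=a+a\mathrm{i}$, and regards \eqref{sd-u3v} as the $\mathcal{O}(b^0)$ part of \eqref{u3-v-u1-1} closed by \eqref{b1-1}.

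The direct route you chose, however, breaks at Step 1. Since $\bL$ commutes with $\bK$, we have $\bK(\bL\br\tc)-(\bL\br\tc)\bK=\bL\bK\br\tc-\bL\br\tc\bK\neq\br_\xi\tc$, so $\bL\br\tc$ is not $\bM_\xi$. Differentiating \eqref{SE12} and using uniqueness gives instead $\bM_{1,\xi}=\mathrm{i}\bK_1\bM_1$ and $\bM_{2,\xi}=-\mathrm{i}\bK_2\bM_2$, i.e. $\bM_\xi=\bL\bK\bM$; this is also visible directly from $M_{1,ij}=\rho(k_i)c_{2,j}/(k_i-l_j)$. This yields $\bS^{(i,j)}_\xi=\tc\bK^j(\bI+\bM)^{-1}\bL\bK(\bI+\bM)^{-1}\bK^i\br$, which does not close on its own.

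The missing ingredient is $\bL(\bI+\bM)+(\bI+\bM)\bL=2\bL$, which follows from $\bL\bM=-\bM\bL$. Equivalently, $(\bI+\bM)^{-1}\bL(\bI+\bM)^{-1}=\frac12\big[(\bI+\bM)^{-1}\bL+\bL(\bI+\bM)^{-1}\big]$. Combining this with \eqref{SE1} and \eqref{Sij-re=1-po} gives
\[
\bS^{(i,j)}_\xi=\tfrac12\big(\bS^{(i+1,j)}\boldsymbol{l}-\boldsymbol{l}\,\bS^{(i,j+1)}-\bS^{(0,j)}\boldsymbol{l}\,\bS^{(i,0)}\big),\qquad \boldsymbol{l}=\mathrm{diag}(\mathrm{i},-\mathrm{i}).
\]
This is also the leading $\mathcal{O}(b)$ term of \eqref{q-Sij-1} at $q=b+b\mathrm{i}$. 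Both the factor $\frac12$ and the placement of $\boldsymbol{l}$ differ from your formula. With your version the $\xi$-derivative terms come out with wrong coefficients, so the plan as written would not reproduce \eqref{sd-u3v}.

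With the corrected flow, Steps 2--3 are short and the obstacle you anticipate disappears.

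First equation. Use $u_4=-u_1$, which holds because $\mathrm{tr}\,\bu=\mathrm{tr}\big((\bI+\bM)^{-1}\br\tc\big)=0$ by \eqref{SE1}. The $(2,1)$ entry at $(i,j)=(0,0)$ then gives $S^{(1,0)}_{21}=S^{(0,1)}_{21}=u_1u_3-\mathrm{i}u_{3,\xi}$. Substituting this and its $\wt{\phantom{a}}$-shift into the $(2,1)$ entry of \eqref{p-Sij-1} yields
\[
|p|^2(\wt u_3-u_3)+\mathrm{i}(p\,\wt u_{3,\xi}-p^*u_{3,\xi})+(u_1-\wt u_1)(p\,\wt u_3+p^*u_3)=0,
\]
which is the first equation of \eqref{sd-u3u1v} at $p=a+a\mathrm{i}$.

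Second equation. At $(i,j)=(-1,0)$, relation \eqref{vw} gives $\bv_\xi=\frac12(\boldsymbol{l}\bu-\bu\boldsymbol{l})\bv$, hence $\nu_\xi=\mathrm{i}u_2+\mathrm{i}u_3\nu^2$, with no need to track $\bw$. Insert $u_2=-\mathrm{i}\nu_\xi-u_3\nu^2$ into $b_2=pu_2-p^*\wt u_2$; this is an algebraic substitution, not $\partial_\xi b_2$. Equating the result with \eqref{b2} gives exactly the second equation of \eqref{sd-u3u1v}. Then \eqref{b1-1} closes the system; in either route its square-root branch must be fixed, e.g. by continuity from the vacuum where $b_1=0$.

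Repaired this way, your direct route is a genuinely different and more rigorous proof than the paper's. It shows that \eqref{sd-u3v} holds identically for any data solving the semi-discrete determining equations, with no interchange of the limit $b\to\infty$ and no appeal to uniformity. The paper's argument (your Route 1) is quicker, since it reuses the discrete identities, but it remains a formal limit argument.
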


\subsubsection{Continuous GI system}
Now, we apply the second continuum limit to the sd-GI model \eqref{sd-u3v} to obtain the continuous GI system. To this end, we set
\begin{align}
\label{lim-n}
n\rightarrow\infty,\quad a\rightarrow\infty,\quad\text{while}\quad \tau:=n/a  \quad\text{finite},
\end{align}
along with
\begin{align}
u_{1}(n,\xi)=u_{1}(\tau,\xi),\quad u_{3}(n,\xi)=u_{3}(\tau,\xi),\quad\nu(n,
\xi)=\nu(\tau,\xi),
\end{align}
where $\tau$ is another continuous coordinate. To obtain nontrivial equations, we introduce new coordinates defined in terms of their derivatives
\begin{align}
\label{xi-tau}	\partial_{\tau}=\frac{1}{2a}\partial_{t}+\partial_{x},\quad\partial_{\xi}=\partial_{x}.
\end{align}
We may expand
\begin{align}
\wt{u}_{3}=u_{3}+u_{3,\tau}/a+u_{3,\tau\tau}/(2a^2)+\cdots=u_{3}+u_{3,x}/a+(u_{3,t}+u_{3,xx})/(2a^2)+\cdots,
\end{align} 
and a similar formula  for $\wt{u}_{1}$ and $\wt{\nu}$. Inserting them into \eqref{b1-1} and \eqref{sd-u3u1v}, the leading order (in terms of 
$\mathcal{O}(a)$) yield a coupled system
\begin{subequations}
\begin{align}
&u_{1,x}-u_{3}\nu_{x}+\mathrm{i}u_{3}^2\nu^2=0,\\
&u_{3,t}+\mathrm{i}u_{3,xx}-2u_{3}u_{1,x}=0,\\
&\nu_{t}-\mathrm{i}\nu_{xx}-2\nu(u_{3}\nu_{x}+u_{3,x}\nu-u_{1,x})=0,
\end{align}
\end{subequations}
which imply
\begin{subequations}
\label{ctn-u3v}
\begin{align}
&u_{3,t}+\mathrm{i}u_{3,xx}-2u_{3}^2\nu_{x}+2\mathrm{i}u_{3}^3\nu^2=0,\\
&\nu_{t}-\mathrm{i}\nu_{xx}-2\nu^2u_{3,x}-2\mathrm{i}\nu^3u_{3}^2=0.
\end{align}
\end{subequations}
The system \eqref{ctn-u3v} is the well‑known GI system.

Solving the GI system \eqref{ctn-u3v} requires studying the variation of the plane wave factor \eqref{sd-plane} 
under the continuum limit \eqref{lim-n} and the coordinate transformation \eqref{xi-tau}. We must expand \eqref{sd-plane-1} as
\begin{align}
\begin{aligned}
\rho(z)=&\left(\frac{a-a\mathrm{i}+z\mathrm{i}}{a-a\mathrm{i}-z}\right)^{n}e^{\mathrm{i}z\xi}\rho^{0}(z)\\
=&\exp\left\lbrace \mathrm{i}z\xi+a\tau\left[ \ln\left(1+ \frac{z\mathrm{i}}{a-a\mathrm{i}}\right)-\ln\left(1- \frac{z}{a-a\mathrm{i}}\right)\right]  \right\rbrace\rho^{0}(z) \\
=&\exp\left\lbrace \mathrm{i}z(\xi+\tau)+\frac{\mathrm{i}z^2\tau}{2a}+\tau\mathcal{O}(a^{-2})\right\rbrace\rho^{0}(z) \\
=&\exp\left\lbrace \mathrm{i}zx+\mathrm{i}z^2t+\mathcal{O}(a^{-1})\right\rbrace\rho^{0}(z),
\end{aligned}
\end{align}
an analogous expansion form for \eqref{sd-plane-2} is given by
\begin{align}
\sigma(z)=\exp\left\lbrace -\mathrm{i}zx-\mathrm{i}z^2t+\mathcal{O}(a^{-1})\right\rbrace\sigma^{0}(z).
\end{align}
Consequently, solutions for the GI system \eqref{ctn-u3v} can also be obtained from
the Cauchy matrix scheme.
\begin{prop}
Formulae \eqref{uv1} provide solutions to
the continuous GI system \eqref{ctn-u3v}
through the Cauchy matrix scheme:
\begin{subequations}
\begin{align}
&\bK_{1}\bM_{1}-\bM_{1}\bK_{2}=\br_{1}\tc_{2},\quad \bK_{2}\bM_{2}-\bM_{2}\bK_{1}=\br_{2}\tc_{1}, \\
&\br_{1,x}=\mathrm{i}\bK_{1}\br_{1},\quad \br_{1,t}=\mathrm{i}\bK_{1}^2\br_{1},\\
&\br_{2,x}=-\mathrm{i}\bK_{2}\br_{2},\quad \br_{2,t}=-\mathrm{i}\bK_{2}^2\br_{2}.
\end{align}
\end{subequations}
\end{prop}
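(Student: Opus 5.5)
The plan is to verify the statement directly within a continuous Cauchy matrix scheme rather than pass to the limit inside the solution formulae. Write $\Gamma=\mathrm{diag}(\bI,-\bI)$ in the block structure \eqref{KMrc}. Then the dispersion relations read $\br_x=\mathrm{i}\Gamma\bK\br$ and $\br_t=\mathrm{i}\Gamma\bK^2\br$. Moreover $\Gamma$ commutes with $\bK$ and anticommutes with $\bM$, and $\Gamma\br=\br\sigma_3$, $\tc\,\Gamma=-\sigma_3\tc$ with $\sigma_3=\mathrm{diag}(1,-1)$.

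\textbf{Step 1: evolution of $\bM$.} Differentiating \eqref{SE12} and using the uniqueness guaranteed by Remark \ref{Re-2-1}, I expect
\[
\bM_{1,x}=\mathrm{i}\bK_1\bM_1,\quad \bM_{2,x}=-\mathrm{i}\bK_2\bM_2,\quad \bM_{1,t}=\mathrm{i}\bK_1^2\bM_1,\quad \bM_{2,t}=-\mathrm{i}\bK_2^2\bM_2 .
\]
That is, $\bM_x=\mathrm{i}\Gamma\bK\bM$ and $\bM_t=\mathrm{i}\Gamma\bK^2\bM$. To check the first one, note that $\bK_1(\mathrm{i}\bK_1\bM_1)-(\mathrm{i}\bK_1\bM_1)\bK_2=\mathrm{i}\bK_1\br_1\tc_2=(\br_1\tc_2)_x$, so both sides satisfy the same Sylvester equation and hence coincide.

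\textbf{Step 2: evolution of the master functions.} Next I will write $\Gamma\bK\bM=\Gamma\bK(\bI+\bM)-\Gamma\bK$ and differentiate \eqref{Sij}. Combined with the recurrence \eqref{Sij-re=1-po} and the $\sigma_3$-intertwining rules above, this should give closed formulas of the type
\[
\bS^{(i,j)}_x=\mathrm{i}\big(\sigma_3\bS^{(i+1,j)}+\bS^{(i,j+1)}\sigma_3\big)+\dots,
\]
with the correction involving products $\bS^{(0,j)}\sigma_3\bS^{(i,0)}$. There is an analogous $t$-formula with two extra levels of $\bK$. Specialising to $(i,j)=(0,0),(-1,0)$, and using \eqref{vw}, $\det\bv=1$ from \eqref{det-v} and \eqref{vw-def1}, I will obtain the continuous analogues of \eqref{th-uvw}. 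These should be the $x$-derivatives of $\bu$ and $\bv$ in terms of $\bu,\bv,\bS^{(1,0)},\bS^{(0,1)}$.

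\textbf{Step 3: closing the system.} From these I will extract three scalar relations in the components. The first is the constraint $u_{1,x}=u_3\nu_x-\mathrm{i}u_3^2\nu^2$; it comes from the $(1,1)$ entry together with $\nu=v_2/v_4$ and the relation for $v_{3}$, exactly mirroring the chain \eqref{h-v5-1}–\eqref{a2}. The second and third are the evolution equations $u_{3,t}+\mathrm{i}u_{3,xx}-2u_3u_{1,x}=0$ and $\nu_t-\mathrm{i}\nu_{xx}-2\nu(u_3\nu_x+u_{3,x}\nu-u_{1,x})=0$. For these, $\bS^{(1,0)}$, $\bS^{(0,1)}$ and $\bS^{(2,0)}$ are eliminated by comparing $\partial_x^2$ with $\partial_t$ of $u_3$ and of $\nu$. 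Substituting the constraint then yields \eqref{ctn-u3v}.

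The main obstacle is Step 3 for $\nu$. Because $\nu$ is a ratio, its derivatives mix $v_2,v_4$ with $\bS^{(1,-1)}=\bw\bu$, and the elimination requires the unimodularity $\det\bv=1$ at the right moment. I would first try to derive a Riccati-type equation $\nu_x=\dots$ directly from the $(1,2)$ and $(2,2)$ entries of the $\bv_x$ formula, and only then differentiate again. As a fallback, the statement can also be justified by the consistency of the limit. The plane wave factors converge to $e^{\pm\mathrm{i}(zx+z^2t)}$ as computed just before the statement, and the solution formulae \eqref{uv1} are smooth in those factors. Hence the leading-order expansion that produced \eqref{ctn-u3v} from \eqref{sd-u3v} is satisfied by the limiting formulae, with Proposition \ref{Prop-4-1} supplying the semi-discrete solutions.
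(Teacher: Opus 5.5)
Your main route, verifying the continuous scheme directly instead of passing to the limit, is genuinely different from the paper's and it does go through. However, Step 2 as you set it up does not close.

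\textbf{The problem in Step 2.} Take $\bX=(\bI+\bM)^{-1}$ and the split $\Gamma\bK\bM=\Gamma\bK(\bI+\bM)-\Gamma\bK$. The term $-\mathrm{i}\bS^{(i+1,j)}\sigma_3$ this produces only cancels the contribution of $\br_x$. You are left with $\bS^{(i,j)}_x=\mathrm{i}\,\tc\bK^{j}\bX\Gamma\bK\bX\bK^{i}\br$, where $\Gamma$ is trapped between two resolvents. The rules $\Gamma\br=\br\sigma_3$ and $\tc\Gamma=-\sigma_3\tc$ cannot remove it.

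\textbf{The fix.} Use the anticommutation together with the Sylvester equation once more, $\Gamma\bK\bM=-\bM\Gamma\bK+\br\sigma_3\tc$, and average the two representations:
\[
\bM_x=\tfrac{\mathrm{i}}{2}\big(\Gamma\bK(\bI+\bM)-(\bI+\bM)\Gamma\bK+\br\sigma_3\tc\big).
\]
Equivalently, use $\bX\Gamma=\Gamma(\bI-\bM)^{-1}$ and split $(\bI-\bM)^{-1}\bK\bX$. Either way you get
\[
\bS^{(i,j)}_x=\tfrac{\mathrm{i}}{2}\big(\bS^{(i+1,j)}\sigma_3-\sigma_3\bS^{(i,j+1)}-\bS^{(0,j)}\sigma_3\bS^{(i,0)}\big).
\]
The $t$-analogue has $\bS^{(i+2,j)}\sigma_3-\sigma_3\bS^{(i,j+2)}-\bS^{(1,j)}\sigma_3\bS^{(i,0)}-\bS^{(0,j)}\sigma_3\bS^{(i,1)}$ in the bracket. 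The factor $\tfrac12$ and the relative minus sign matter, and the placement of $\sigma_3$ is forced: right of the $i$-raised term, left of the $j$-raised term. With your guessed form $\mathrm{i}(\sigma_3\bS^{(i+1,j)}+\bS^{(i,j+1)}\sigma_3)$, the entry $\bS^{(1,0)}_{11}$ would survive in $u_{1,x}$ and no constraint could be extracted.

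\textbf{Step 3 is easier than you expect.} With the correct formula:
\begin{itemize}
\item The $(1,1)$ entry of $\bu_x$ gives $u_{1,x}=\mathrm{i}u_2u_3$.
\item Since $\bS^{(-1,1)}=\bu\bv$, you get the closed equation $\bv_x=\tfrac{\mathrm{i}}{2}[\sigma_3,\bu]\bv$. Its second column gives the Riccati equation $\nu_x=\mathrm{i}u_2+\mathrm{i}u_3\nu^2$, so the constraint $u_{1,x}=u_3\nu_x-\mathrm{i}u_3^2\nu^2$ is immediate. You need neither $v_3$, nor $\bS^{(1,-1)}$, nor $\det\bv=1$.
\item From $\bv_t$, eliminate $\bS^{(1,0)}_{12}$ and $\bS^{(1,0)}_{21}$ through $u_{2,x}$ and $u_{3,x}$. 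This gives $\nu_t=-u_{2,x}-2\mathrm{i}u_2u_3\nu+u_{3,x}\nu^2$. Combined with $\nu_{xx}$ from the Riccati equation, this is the second equation of \eqref{ctn-u3v}.
\item For $u_3$, the $\bS^{(2,0)}_{21}$ terms cancel in $\partial_t+\mathrm{i}\partial_x^2$. What remains is $u_{3,t}+\mathrm{i}u_{3,xx}=2\mathrm{i}u_2u_3^2=2u_3u_{1,x}$, which is the first equation of \eqref{ctn-u3v}.
\end{itemize}

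\textbf{Comparison with the paper.} The paper does none of this; its justification is exactly your fallback. It expands $\rho(z)$ and $\sigma(z)$ under $\tau=n/a$ and $\partial_\tau=\frac{1}{2a}\partial_t+\partial_x$, obtains $e^{\pm(\mathrm{i}zx+\mathrm{i}z^2t)}$, and carries the formulae \eqref{uv1} over from the semi-discrete level.
\begin{itemize}
\item The paper's argument is short and ties the continuous solutions to the lattice ones. It is, however, formal: it presupposes that the solution formulae admit the Taylor expansions and the $a$-dependent rescaling used in the limit, including the square-root branch in $\mathcal{P}$.
\item Your corrected direct route is an honest verification that never touches the lattice. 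It also exposes the mechanism: $(u_3,u_2)$ solve the AKNS system, and $\nu$ is attached to it by the Riccati relation above. So \eqref{ctn-u3v} arises from AKNS through a gauge-type map.
\end{itemize}
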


\subsection{Local conjugate reduction}

The four dGI models obtained in Subsection \ref{3-2} are complex conjugate symmetric. 
This symmetry allows one to impose the local reduction $\nu=u_{3}^*$, 
under which each two-component system reduces to a scalar dGI equation. In this subsection, 
we first write down the reduced equations and then derive the corresponding reduction constraints 
at the solution level. In this way, the solutions (including soliton and multipole solutions) 
of the scalar dGI equation are obtained. We remark that only local reductions are admissible here; 
the reason is that the complex conjugate lattice parameters in SDE-I break the parity symmetry required 
for nonlocal reductions. In contrast, the dhGI models constructed in Section \ref{4}, with their real lattice parameters, 
will be shown to support both local and nonlocal reductions.

We begin by presenting the scalar dGI equations. Applying the reduction $\nu=u_{3}^*$ to \eqref{dGI},  
\eqref{dGI-2}, \eqref{dGI-3} and \eqref{dGI-4}, we arrive at four scalar dGI equations. As a representative example, 
the reduction of the first model \eqref{dGI} yields
\begin{align}
\label{d-gi-1}
&2|p|^2(q(\wh{u}_{3}-\wh{\wt{u}}_{3})+q^*(\wt{u}_{3}-u_{3}))+2|q|^2(p(\wh{\wt{u}}_{3}-\wt{u}_{3})+p^*(u_{3}-\wh{u}_{3}))+(p^*q\wh{u}_{3}-pq^*\wt{u}_{3}) \nn \\
& \times(q^*-q+p^*-p+\theta_{q}^{-1}\wh{u}_{3}u_{3}^*-\theta_{q}u_{3}\wh{u}_{3}^*+\theta_{p}^{-1}\wh{\wt{u}}_{3}\wh{u}_{3}^*-\theta_{p}\wh{u}_{3}\wh{\wt{u}}_{3}^*
+|\wh{\wt{u}}_{3}|^2-|u_{3}|^2-Q_{1}-\wh{P}_{1}) \nn \\
&+(p^*q^*u_{3}-pq\wh{\wt{u}}_{3})
(q^*-q+p-p^*-\theta_{p}^{-1}\wt{u}_{3}u_{3}^*+\theta_{p}u_{3}\wt{u}_{3}^*+\theta_{q}^{-1}\wh{u}_{3}u_{3}^*-\theta_{q}u_{3}\wh{u}_{3}^* \nn \\
& +|\wh{u}_{3}|^2-|\wt{u}_{3}|^2-Q_{1}+P_{1})=0,
\end{align}
where 
\begin{subequations}
\label{PQ-1}
\begin{align}
P_{1}=-\mathrm{i}[4|p|^2-(p+p^*+|\wt{u}_{3}-\theta_{p}u_{3}|^2)^2]^{\frac{1}{2}}, \\
Q_{1}=-\mathrm{i}[4|q|^2-(q+q^*+|\wh{u}_{3}-\theta_{q}u_{3}|^2)^2]^{\frac{1}{2}}.
\end{align}
\end{subequations}

The remaining three scalar dGI equations, obtained from \eqref{dGI-2}--\eqref{dGI-4} under the same reduction, are listed in Appendix \ref{App-B}.
Although the four scalar equations are very similar, they differ from one another. Each constitutes a dGI equation, showing that the discretization of the GI equation 
is not unique. For the solutions derived from the Cauchy matrix method in Subsection \ref{3-3}, 
we present the reduction conditions and the corresponding solutions of the dGI equation in the theorem below.
\begin{theorem}\label{T-3-3}
The solution \eqref{uv1} admits the complex conjugate reduction $\nu=u_{3}^{*}$ under constraints
\begin{align}
\label{redu}
\bK_{2}=\bK_{1}^{*},\quad \br_{2}=-\bK_{2}\br_{1}^{*},\quad \tc_{2}=\tc_{1}^{*}.
\end{align}
\end{theorem}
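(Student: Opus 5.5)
The plan is to use the constraints \eqref{redu} to write $\bM_2$ in terms of $\bM_1^*$, substitute this into the formulae \eqref{uv1}, and then relate the numerator and denominator of $\nu$ to $u_3^*$ by a rank-one (Sherman--Morrison) identity.

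\emph{Step 1: the constraints are compatible with \eqref{SDE-I-1}.} Take the complex conjugate of the dispersion relation for $\br_1$, namely $p(p^*\bI-\bK_1)\wt{\br}_1=p^*(p\bI-\bK_1)\br_1$. Using $\bK_1^*=\bK_2$, this gives
\[
p^*(p\bI-\bK_2)\wt{\br}_1^*=p(p^*\bI-\bK_2)\br_1^*.
\]
Left-multiplying by $-\bK_2$, which commutes with $p\bI-\bK_2$ and $p^*\bI-\bK_2$, yields exactly the $\wt{\phantom{a}}$-relation for $\br_2=-\bK_2\br_1^*$. The $\wh{\phantom{a}}$-relation follows in the same way. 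Hence the constrained data are admissible.

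\emph{Step 2: express $\bM_2$ through $\bM_1^*$.} Conjugating the first Sylvester equation with $\tc_2=\tc_1^*$ gives
\[
\bK_2\bM_1^*-\bM_1^*\bK_1=\br_1^*\tc_1 .
\]
Now set $\bM_2=-\bK_2\bM_1^*$. Then
\[
\bK_2\bM_2-\bM_2\bK_1=-\bK_2\big(\bK_2\bM_1^*-\bM_1^*\bK_1\big)=-\bK_2\br_1^*\tc_1=\br_2\tc_1 .
\]
By uniqueness of the Sylvester solution (Remark \ref{Re-2-1}), $\bM_2=-\bK_2\bM_1^*$, and therefore $\bM_2^*=-\bK_1\bM_1$.

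\emph{Step 3: compare $\nu$ with $u_3^*$.} From $\bK_2^{-1}\br_2=-\br_1^*$ and \eqref{uv1} we get
\[
v_2=\tc_1^*(\bI+\bK_2\bM_1^*\bM_1)^{-1}\br_1^*,\qquad
v_4=1-\tc_1\bM_1(\bI+\bK_2\bM_1^*\bM_1)^{-1}\br_1^* .
\]
On the other hand, $u_3^*=\tc_1^*(\bI-\bM_1^*\bM_2^*)^{-1}\br_1^*=\tc_1^*\bA^{-1}\br_1^*$, where $\bA:=\bI+\bM_1^*\bK_1\bM_1$.

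The key observation, and the only real obstacle, is that the two inverses differ. By the conjugated Sylvester equation of Step 2, $\bK_2\bM_1^*=\bM_1^*\bK_1+\br_1^*\tc_1$, so
\[
\bI+\bK_2\bM_1^*\bM_1=\bA+\br_1^*(\tc_1\bM_1),
\]
a rank-one update of $\bA$. The Sherman--Morrison formula then gives
\[
(\bA+\br_1^*\tc_1\bM_1)^{-1}\br_1^*=\frac{\bA^{-1}\br_1^*}{1+s},\qquad s:=\tc_1\bM_1\bA^{-1}\br_1^* .
\]
Consequently $v_2=u_3^*/(1+s)$ and $v_4=1-s/(1+s)=1/(1+s)$. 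Hence $\nu=v_2/v_4=u_3^*$.

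The same argument applies verbatim to the soliton and Jordan-block choices, since only the structural identities above were used. By Proposition \ref{prop-2-3} on similarity invariance, it is legitimate to work with canonical forms of $\bK_1$ and $\bK_2$ satisfying $\bK_2=\bK_1^*$.
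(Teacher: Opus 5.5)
Your proof is correct and follows the paper's route. The paper also obtains $\bM_2^*=-\bK_1\bM_1$ from the conjugated Sylvester equations and uniqueness, then checks $u_3^*v_4=v_2$ directly using the same rank-one identity $\bI-\bM_2\bM_1=(\bI-\bM_1^*\bM_2^*)-\bK_2^{-1}\br_2\tc_1\bM_1$; you have packaged that step as Sherman--Morrison, where $1+s\neq0$ holds automatically because $\det(\bI-\bM_2\bM_1)=(1+s)\det\bA$, and your Step 1 compatibility check for $\br_2=-\bK_2\br_1^*$ adds something the paper leaves implicit.
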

\begin{proof}
By condition \eqref{redu}, from \eqref{SE-1} one immediately obtains
\begin{align}
\bK_{1}\bM_{1}-\bM_{1}\bK_{1}^{*}=\br_{1}\tc_{1}^{*},\quad
\bK_{1}(-\bK_{1}^{-1}\bM_{2}^{*})-(-\bK_{1}^{-1}\bM_{2}^{*})\bK_{1}^{*}=\br_{1}\tc_{1}^{*}.
\end{align}
Subtracting the second equation from the first one, we easily find the relations between $\bM_{1}$ and $\bM_{2}$: 
\begin{align}
\bM_{2}^{*}=-\bK_{1}\bM_{1},\quad\bM_{1}^{*}=-\bK_{2}^{-1}\bM_{2}\quad\Longrightarrow\quad\bM_{1}^{*}\bM_{2}^{*}=\bK_{2}^{-1}\bM_{2}\bK_{1}\bM_{1}.
\end{align}
Through straightforward calculations, it is demonstrated that
\begin{align*}
u_{3}^{*}v_{4}=&\tc_{1}^{*}(\bI-\bM_{1}^{*}\bM_{2}^{*})^{-1}\br_{1}^{*}\left[1+\tc_{1}\bM_{1}(\bI-\bM_{2}\bM_{1})^{-1}\bK_{2}^{-1}\br_{2}\right] \\
=&-\tc_{2}(\bI-\bM_{1}^{*}\bM_{2}^{*})^{-1}(\bI-\bM_{2}\bM_{1}+\bK_{2}^{-1}\br_{2}\tc_{1}\bM_{1})(\bI-\bM_{2}\bM_{1})^{-1}\bK_{2}^{-1}\br_{2}\\
=&-\tc_{2}(\bI-\bM_{1}^{*}\bM_{2}^{*})^{-1}(\bI-\bK_{2}^{-1}\bM_{2}\bK_{1}\bM_{1})(\bI-\bM_{2}\bM_{1})^{-1}\bK_{2}^{-1}\br_{2}\\
=&-\tc_{2}(\bI-\bM_{2}\bM_{1})^{-1}\bK_{2}^{-1}\br_{2}=v_{2}.
\end{align*}
Therefore, the proof is completed.
\end{proof}
\begin{theorem}
The function 
\begin{align}
u_{3}=\tc_{1}(\bI+\bM_{1}\bK^{*}_{1}\bM^{*}_{1})^{-1}\br_{1},
\end{align}
provides i)  $N$-soliton solutions for the dGI equation \eqref{d-gi-1} with
\begin{subequations}
\begin{align}
&\bK_{1}=\mathrm{diag}(k_{1},k_{2},\cdots,k_{N}),\quad k_{i}\in\mathbb{C}, \quad i=1,2,\cdots,N,\\
&\br_{1}=(\rho(k_{1}),\rho(k_{2}),\cdots,\rho(k_{N}))^{\st},\quad\tc_{1}=(c_{1,1},c_{1,2},\cdots,c_{1,N}),\\
&\bM_{1}=(M_{1,ij})_{N\times N},\quad M_{1,ij}=\frac{\rho(k_{i})c_{1,j}^*}{k_{i}-k_{j}^{*}};
\end{align}
\end{subequations}
and ii) the multiple-pole solutions for the dGI equation \eqref{d-gi-1} with
\begin{subequations}
\begin{align}
&\bK_{1}=\bJ[k,1],\quad\br_{1}=\bF_{k}[\rho(k)]\be_{N},\quad\tc_{1}=\be_{N}^{\st}\bH[c_{1,j}],\\
&\bM_{1}=\bF_{k}[\rho(k)]\bG\bH^{*}[c_{1,j}], \quad
\bG=(G_{i,j})_{N\times N},\quad G_{i,j}=\frac{(-1)^{i-1}C^{i-1}_{i+j-2}}{(k-k^{*})^{i+j-1}},
\end{align}
\end{subequations}
where the discrete plane wave factor $\rho(z)$ is defined in \eqref{plane-def},
and the matrices $\bJ,\bF,\bH$ are defined in \eqref{J-def}, \eqref{F-def} and \eqref{H-def}, respectively.
\end{theorem}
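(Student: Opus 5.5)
The plan is to obtain the statement from Theorem \ref{T-3-3}, the two solution theorems of Subsection \ref{3-3}, and the similarity invariance of Proposition \ref{prop-2-3}. Since \eqref{d-gi-1} is \eqref{dGI} with $\nu=u_3^*$ substituted, it suffices to prove two things. First, under the constraints \eqref{redu} the pair $(u_3,\nu)$ from \eqref{uv1} is still a solution of \eqref{dGI}; this holds because the Cauchy matrix scheme \eqref{SDE-I-1} needs only $\bK_1$, $\bK_2$ to have disjoint spectra and $\br_1$, $\br_2$ to satisfy their shift relations. Second, $\nu=u_3^*$, which is exactly Theorem \ref{T-3-3}.

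\emph{Step 1: the expression for $u_3$.} Using $\bM_2^*=-\bK_1\bM_1$ from the proof of Theorem \ref{T-3-3}, we get $\bM_2=-\bK_1^*\bM_1^*$. Substituting into \eqref{uv1},
\[u_3=\tc_1(\bI-\bM_1\bM_2)^{-1}\br_1=\tc_1(\bI+\bM_1\bK_1^*\bM_1^*)^{-1}\br_1,\]
which is the stated formula.

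\emph{Step 2: compatibility of the constraints with the dispersion relations.} Take $\bK_2=\bK_1^*$ and $\br_2=-\bK_2\br_1^*$. Conjugating the $\br_1$ relations
\[p(p^*-\bK_1)\wt{\br}_1=p^*(p-\bK_1)\br_1\]
gives
\[p^*(p-\bK_2)\wt{\br}_1^*=p(p^*-\bK_2)\br_1^*.\]
Multiplying on the left by $-\bK_2$, which commutes with every factor, yields exactly the $\br_2$ relation in \eqref{SDE-I-1}. The $q$-direction is handled the same way. The same computation shows that $\sigma(z)$ equals $\rho(z^*)^*$, which is consistent with \eqref{plane-def} at the level of plane waves.

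\emph{Step 3: the soliton case.} Take $\bK_1=\mathrm{diag}(k_i)$, $\br_1=(\rho(k_i))^{\st}$ and $\tc_1=(c_{1,j})$. We need $k_i\neq k_j^*$ so that the spectra of $\bK_1$ and $\bK_1^*$ are disjoint, which will be stated as a genericity assumption. Solving $\bK_1\bM_1-\bM_1\bK_1^*=\br_1\tc_1^*$ entrywise gives $M_{1,ij}=\rho(k_i)c^*_{1,j}/(k_i-k_j^*)$, matching the statement. Here $\tc_2=\tc_1^*$, so the general soliton theorem applies with $l_j=k_j^*$.

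\emph{Step 4: the multiple-pole case.} Take $\bK_1=\bJ[k,1]$, so $\bK_2=\bJ[k^*,1]$. We also take $\br_1=\bF_k[\rho(k)]\be_N$ and $\tc_1=\be_N^{\st}\bH[c_{1,j}]$, so that $\tc_2=\be_N^{\st}\bH^*[c_{1,j}]$. We must check that
\[\bM_1=\bF_k[\rho(k)]\bG\bH^*[c_{1,j}]\]
solves the Sylvester equation. I would do this by citing the multiple-pole theorem with $l=k^*$, under which $\bG_1$ becomes $\bG$. The vector $\br_2=-\bK_2\br_1^*$ differs in form from $\bF_l[\sigma(l)]\be_N$. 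However, $\bK_2\bF_{k^*}[\rho(k)^*]$ is itself a lower triangular Toeplitz matrix, so it lies in the same commutative set $\c{G}^{[\mathcal{N}]}$, and only $\bM_1$ and $\br_1$ enter $u_3$. This is where the main obstacle lies. I would verify directly, using the Toeplitz and Hankel symmetry $\bH\bF=\bF^{\st}\bH$, that $\bM_2$ defined by $\bK_2\bM_2-\bM_2\bK_1=\br_2\tc_1$ coincides with $-\bK_1^*\bM_1^*$. Uniqueness of the Sylvester solution, which holds since $k\neq k^*$, then makes this immediate.
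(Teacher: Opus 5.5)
Your proposal is correct and is essentially the argument the paper leaves implicit (it gives no separate proof). Theorem \ref{T-3-3} yields $\nu=u_3^*$ and, via $\bM_2=-\bK_1^*\bM_1^*$, the stated formula $u_3=\tc_1(\bI+\bM_1\bK_1^*\bM_1^*)^{-1}\br_1$, while the soliton and multiple-pole theorems of Subsection \ref{3-3} with $l_j=k_j^*$ (resp.\ $l=k^*$) supply $\bM_1$. The ``main obstacle'' you flag in Step 4 is not one: uniqueness of the Sylvester solution already gives $\bM_2=-\bK_1^*\bM_1^*$ without any Toeplitz--Hankel computation, and in fact $\br_2=-\bJ[k^*,1]\br_1^*=\bF_{k^*}[\sigma(k^*)]\be_N$ with $\sigma(z)=-z\,(\rho(z^*))^*$, which has the plane-wave form \eqref{plane-def}, so the general multiple-pole theorem applies verbatim.
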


\begin{remark}
The reduction described above is a local one. Nonlocal reduction of the solutions cannot be achieved here, 
and we shall not pursue them further.
\end{remark}

The dGI models constructed above demonstrate that the Cauchy matrix framework, combined with SDE-I, 
successfully produces integrable discretizations of the GI system. The dispersion relations in SDE-I involve complex 
lattice parameters $(p, q)$ and their conjugates $(p^*, q^*)$, reflecting the NLS-type nature of the GI equation. 
A natural question arises: can the same Cauchy matrix philosophy, when equipped with a qualitatively different set of 
dispersion relations, generate discrete analogues of higher-order members in the GI hierarchy? The next section answers this affirmatively. 
By replacing SDE-I with a new set of determining equations (SDE-II), whose dispersion relations feature real lattice parameters and an additional
sign matrix $\ba$, we obtain dhGI models whose continuum limits yield the hGI system. 
The structural parallel between the two constructions underscores the universality of the Cauchy matrix approach, while the key 
differences in the dispersion relations, the form of the resulting lattice equations, and particularly the emergence of nonlocal 
reductions highlight the richer algebraic structure inherent in the higher-order members in hierarchy.

\section{Discrete higher-order GI models derived from SDE-II}\label{4}

This section extends the construction in Section \ref{3} to the hGI equation in hierarchy. 
Instead of using the SDE-I, we introduce a SDE-II. Although the derivation 
follows the same Cauchy matrix philosophy as before, the new determining equations lead to different shift dynamics and consequently to the
dhGI models. We derive four such models, present their soliton and multiple-pole solutions, and analyze both their continuum limits and reductions. 
In contrast to the dGI case, the dhGI models also admit nonlocal reductions after a suitable recombination of the lattice equations.

Compared with Section \ref{3}, the present section follows the same overall strategy but differs in three aspects. 
First, the determining equations are replaced by SDE-II, leading to different shift dynamics for the master functions. 
Second, the resulting continuum limit is not the standard GI system but its higher-order counterpart, the hGI system. 
Third, the dhGI models admit not only local conjugate reductions but also nonlocal reductions after suitable recombinations of the lattice equations.

\subsection{Shift dynamics of master function}

To derive the dhGI models, we replace SDE-I by the following SDE-II
\begin{subequations}
\label{SDE-II}
\begin{align}
&\bK\bM-\bM\bK=\br\tc, \\
& p\wt{\br}-\bK\wt{\br}\ba=p\br+\bK\br\ba,\\
& q\wh{\br}-\bK\wh{\br}\ba=q\br+\bK\br\ba,
\end{align}
\end{subequations}
where $p,q \in\mathbb{R}$, $\bK,\bM,\br$ and $\tc$ are defined as in \eqref{KMrc}, and the matrix $\ba=\mathrm{diag}(1,-1)$.
The essential difference from SDE-I lies in the new dispersion relations for $\br$. This modification changes the shift relations of the master functions. 
Starting from SDE-II, the shift relations for the master function $\bS^{(i,j)}$ defined in \eqref{Sij} are obtained and stated in the following proposition.
\begin{prop} \cite{ZS-ZNA}
Suppose that matrices $\bK, \bM, \br,$ and $\tc$ satisfy the SDE-II \eqref{SDE-II}, thereby the following shift relations for $\bS^{(i,j)}$ hold:
\begin{subequations}
\label{Sij1-ht}
\begin{align}
\label{Sij1-t1}
p\wt{\bS}^{(i,j)}+\ba\wt{\bS}^{(i,j+1)}=&p\bS^{(i,j)}+\bS^{(i+1,j)}\ba-\bS^{(0,j)}\ba\wt{\bS}^{(i,0)},\\
\label{Sij1-t2}
p\bS^{(i,j)}-\ba\bS^{(i,j+1)}=&p\wt{\bS}^{(i,j)}-\wt{\bS}^{(i+1,j)}\ba+\wt{\bS}^{(0,j)}\ba\bS^{(i,0)},\\
\label{Sij1-h1}
q\wh{\bS}^{(i,j)}+\ba\wh{\bS}^{(i,j+1)}=&q\bS^{(i,j)}+\bS^{(i+1,j)}\ba-\bS^{(0,j)}\ba\wh{\bS}^{(i,0)},\\
\label{Sij1-h2}
q\bS^{(i,j)}-\ba\bS^{(i,j+1)}=&q\wh{\bS}^{(i,j)}-\wh{\bS}^{(i+1,j)}\ba+\wh{\bS}^{(0,j)}\ba\bS^{(i,0)}.
\end{align}
\end{subequations}
\end{prop}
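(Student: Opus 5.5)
The plan is to follow the route of Propositions \ref{prop-3-1} and \ref{prop-3-2}. First I rewrite the SDE-II dispersion relations so that $\ba$ sits next to $\bK$ rather than to the right of $\br$. Then I derive shift relations for $\bM$, and finally I feed these into the definition \eqref{Sij} of $\bS^{(i,j)}$.

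\emph{Bookkeeping for $\ba$.} Let $\bA=\mathrm{diag}(\bI_N,-\bI_N)$. From the block forms \eqref{KMrc} one checks directly that
\[
\br\ba=\bA\br,\qquad \ba\tc=-\tc\bA,\qquad \bA\bK=\bK\bA,\qquad \bA\bM=-\bM\bA .
\]
These play the role of \eqref{commu}. With $\bL:=\bK\bA$, the dispersion relations become
\[
(p-\bL)\wt{\br}=(p+\bL)\br,\qquad (q-\bL)\wh{\br}=(q+\bL)\br ,
\]
and $\bL$ commutes with $\bK$.

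\emph{Shift relations for $\bM$.} Take the $\wt{\phantom{a}}$-shift of the Sylvester equation and left-multiply by $(p-\bL)$. Using $[\bL,\bK]=0$ and then the original Sylvester equation, I obtain
\[
\bK\bY-\bY\bK=0,\qquad \bY:=(p-\bL)\wt{\bM}-(p+\bL)\bM .
\]
Since $\bL$ is block diagonal and $\bM$ is block off-diagonal, $\bY$ is block off-diagonal. Its blocks therefore satisfy $\bK_1\bY_1-\bY_1\bK_2=0$ and $\bK_2\bY_2-\bY_2\bK_1=0$. By Remark \ref{Re-2-1} this forces $\bY=0$, i.e.
\[
(p-\bK\bA)\wt{\bM}=(p+\bK\bA)\bM .
\]
Next, as in the proof of Proposition \ref{prop-3-1}, I left-multiply the shifted Sylvester equation by $p$ (respectively the unshifted one), eliminate $\bK\bA\wt{\bM}$ via the relation just obtained, and move $\bA$ through $\bM$ using $\bA\bM=-\bM\bA$. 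This should yield two ``mixed'' relations, of the form
\[
\wt{\bM}(p+\bA\bK)-(p+\bK\bA)\bM=\wt{\br}\,(\cdot)\,\tc,\qquad \bM(p-\bA\bK)-(p-\bK\bA)\wt{\bM}=\br\,(\cdot)\,\tc,
\]
with the $(\cdot)$ factors built from $\ba$ via $\br\ba=\bA\br$ and $\ba\tc=-\tc\bA$. The hat relations are identical with $p\to q$.

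\emph{Relations for $\bS^{(i,j)}$.} I sandwich the mixed relations as $\tc\bK^{j}(\bI+\wt{\bM})^{-1}[\cdots](\bI+\bM)^{-1}\bK^{i}\br$. To do so I first rewrite each one as
\[
(\bI+\wt{\bM})(p+\bA\bK)-(p+\bK\bA)(\bI+\bM)=\wt{\br}(\cdot)\tc+(\text{terms without }\bM),
\]
and then use $\bA\br=\br\ba$ and $\tc\bA=-\ba\tc$ to pull every $\bA$ out to the ends as $\ba$. The constant-in-$\bM$ pieces produce $p\wt{\bS}^{(i,j)}$ and $p\bS^{(i,j)}$. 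The $\bK\bA$ and $\bA\bK$ pieces produce $\ba\wt{\bS}^{(i,j+1)}$ and $\bS^{(i+1,j)}\ba$. The rank-two term factorises as $\bS^{(0,j)}\ba\wt{\bS}^{(i,0)}$. This gives \eqref{Sij1-t1}, and the second mixed relation gives \eqref{Sij1-t2} in the same way.

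\emph{Main obstacle.} The main difficulty is the sign bookkeeping with $\bA$. Because $\bA$ anticommutes with $\bM$ but commutes with $\bK$, one must get the correct placement ($\bK\bA$ versus $\bA\bK$) in the mixed relations so that it converts to $\ba$ on the correct side and with the correct sign in each of the four identities. I would fix these signs by checking the $N=1$ case explicitly before writing the general computation.
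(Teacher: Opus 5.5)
Your route is the right one: rewrite the dispersion relation as $(p-\bK\bA)\wt{\br}=(p+\bK\bA)\br$, derive $(p-\bK\bA)\wt{\bM}=(p+\bK\bA)\bM$, obtain two mixed relations, then sandwich. It is the SDE-II analogue of the proof of Proposition~\ref{prop-3-1}. Your observation that $\boldsymbol{Y}$ is block off-diagonal, so that disjoint spectra force $\boldsymbol{Y}=\boldsymbol{0}$, is correct and is exactly the justification needed there.

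The mixed-relation step is misstated, however. Multiplying by the scalar $p$ produces no $\bK\bA\wt{\bM}$ to eliminate; the multiplier playing the role of $\bP$ is $\bA$. From $\bA(\bK\wt{\bM}-\wt{\bM}\bK)=\bK\bA\wt{\bM}+\wt{\bM}\bA\bK=\wt{\br}\ba\tc$, and the same for the unshifted equation, you get
\[
\wt{\bM}(p+\bA\bK)-(p+\bK\bA)\bM=\wt{\br}\ba\tc,\qquad \bM(p-\bA\bK)-(p-\bK\bA)\wt{\bM}=-\br\ba\tc .
\]
After inserting $\bI+\bM$ and $\bI+\wt{\bM}$, the $\bM$-free terms cancel exactly because $\bA\bK=\bK\bA$.

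The genuine gap is in the sandwich, where you match terms one by one. Apply your sandwich $\tc\bK^{j}(\bI+\wt{\bM})^{-1}[\cdots](\bI+\bM)^{-1}\bK^{i}\br$ to the first relation.

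\begin{itemize}
\item One term is $\tc\bK^{j}(\bI+\wt{\bM})^{-1}\bK^{i}(p+\bK\bA)\br$. It pairs $\wt{\bM}$ with the unshifted $\br$, so neither its $p$-part nor its $\bK\bA$-part is a master function. Your claim that the ``constant pieces'' give $p\wt{\bS}^{(i,j)}$ therefore fails as stated.
\item You must use the dispersion relation a second time, on the whole combination $(p+\bK\bA)\br=(p-\bK\bA)\wt{\br}$. This turns the term into $p\wt{\bS}^{(i,j)}-\wt{\bS}^{(i+1,j)}\ba$.
\item The other term gives $p\bS^{(i,j)}-\ba\bS^{(i,j+1)}$, and the right-hand side gives $\wt{\bS}^{(0,j)}\ba\bS^{(i,0)}$. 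Together these yield \eqref{Sij1-t2}, not \eqref{Sij1-t1}.
\end{itemize}

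To get \eqref{Sij1-t1}, sandwich the second relation between $\tc\bK^{j}(\bI+\bM)^{-1}$ and $(\bI+\wt{\bM})^{-1}\bK^{i}\wt{\br}$.

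\begin{itemize}
\item The term $\tc\bK^{j}(p-\bA\bK)(\bI+\wt{\bM})^{-1}\bK^{i}\wt{\br}=p\wt{\bS}^{(i,j)}+\ba\wt{\bS}^{(i,j+1)}$ is immediate.
\item The cross term $\tc\bK^{j}(\bI+\bM)^{-1}\bK^{i}(p-\bK\bA)\wt{\br}$ becomes $p\bS^{(i,j)}+\bS^{(i+1,j)}\ba$ only after replacing $(p-\bK\bA)\wt{\br}$ by $(p+\bK\bA)\br$.
\item The right-hand side gives $-\bS^{(0,j)}\ba\wt{\bS}^{(i,0)}$.
\end{itemize}

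So the real obstacle is not only where $\bA$ sits, but this reuse of the $\br$-dispersion relation inside the sandwich. With that step added, both identities follow, and the hat relations follow by $p\to q$.
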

 Comparing the shift relations 
\eqref{Sij1-ht} with their SDE-I counterparts \eqref{sh-Sij}, the key structural difference becomes apparent: 
in SDE-I the shift relations involve conjugate lattice parameters $(\bp, \bp^*)$ and modular factors $|p|^2$, $|q|^2$, 
reflecting the complex nature of the dispersion, whereas in SDE-II the sign matrix $\ba$ takes their place and the lattice parameters $p, q$ remain real. 
This distinction propagates through the entire construction and ultimately determines whether the continuum limit yields the GI or the hGI system.

\subsection{Construction of dhGI models}

We now use the shift relations derived from SDE-II to construct closed lattice systems. 
The procedure parallels that of Subsection \ref{3-2}, but the resulting equations differ because the SDE-II dynamics 
involve the matrix $\ba$ and real lattice parameters $p,q$. We introduce the following variables 
\begin{align}
\label{UVW}
\bU=\bS^{(0,0)}=\begin{pmatrix}
U_{1}&U_{2}\\
U_{3}&U_{4}
\end{pmatrix}, \quad \bV=\bI-\bS^{(-1,0)}=\begin{pmatrix}
V_{1}&V_{2}\\
V_{3}&V_{4}
\end{pmatrix},\quad \bW=\bI+\bS^{(0,-1)},
\end{align}
where $U_i, V_{i}~(i=1,2,3,4)$ are scalar functions. It is evident that $\bU,\bV,\bW$ defined in this way still satisfy
\begin{align}
\label{VW}
\bV\bW=\bI, \quad\bS^{(1,-1)}=\bW\bU,\quad \bS^{(-1,1)}=\bU\bV,\quad \det(\bV)=1. 
\end{align}
These relations do not involve the $\wt{\phantom{a}}$-shift and the $\wh{\phantom{a}}$-shift, and thus bear a strong resemblance to those in \eqref{vw}. Nevertheless, one should distinguish them from $\bu,\bv,\bw$, respectively.

Proceeding further, we substitute $(i,j)=(0,0)$ into \eqref{Sij1-ht} and obtain
\begin{subequations}
\begin{align}
&p\wt{\bU}+\ba\wt{\bS}^{(0,1)}=p\bU+\bS^{(1,0)}\ba-\bU\ba\wt{\bU},\\
&p\bU-\ba\bS^{(0,1)}=p\wt{\bU}-\wt{\bS}^{(1,0)}\ba+\wt{\bU}\ba\bU,\\
&q\wh{\bU}+\ba\wh{\bS}^{(0,1)}=q\bU+\bS^{(1,0)}\ba-\bU\ba\wh{\bU},\\
&q\bU-\ba\bS^{(0,1)}=q\wh{\bU}-\wh{\bS}^{(1,0)}\ba+\wh{\bU}\ba\bU.
\end{align}
\end{subequations}
From these equations, we eliminate $\bS^{(0,1)}$ and $\bS^{(1,0)}$ to derive a closed‑form equation that involves only the matrix function $\bU$:
\begin{align}
\label{bU}
(p-q)(\wh{\wt{\bU}}-\bU)+(\wt{\bU}-\wh{\bU})((p+q)\bI+\ba\bU-\ba\wh{\wt{\bU}})=\bm 0.
\end{align}
Setting $(i,j)=(-1,0)$ in \eqref{Sij1-t1} and \eqref{Sij1-t2} and replacing $\bS^{(-1,1)}$ with equation \eqref{VW}, we have
\begin{subequations}
\begin{align}
\label{pV1}
&p(\wt{\bV}-\bV)=(\ba\wt{\bU}-\bU\ba)\wt{\bV},\\
\label{pV2}
&p(\wt{\bV}-\bV)=(\ba\bU-\wt{\bU}\ba)\bV.
\end{align}
\end{subequations}
Right‑multiplying equation \eqref{pV1} by $\wt{\bW}$ and equation \eqref{pV2} by $\bW$, and noting that $\bV\bW=\bI$, we get
\begin{subequations}
\begin{align}
p\bV\wt{\bW}=p\bI+\bU\ba-\ba\wt{\bU},\\
p\wt{\bV}\bW=p\bI+\ba\bU-\wt{\bU}\ba,
\end{align}
\end{subequations}
which imply
\begin{align}
(p\bI+\bU\ba-\ba\wt{\bU})(p\bI+\ba\bU-\wt{\bU}\ba)=p^2\bI.
\end{align}
Similarly, the evolution relation of $\bU$ in the $m$-direction is achieved by replacing $p$ with $q$ and the 
$\wt{\phantom{a}}$-shift with the $\wh{\phantom{a}}$-shift, leading to
\begin{subequations}
\label{th-UVW}
\begin{align}
&q\bV\wh{\bW}=q\bI+\bU\ba-\ba\wh{\bU},\\
&q\wh{\bV}\bW=q\bI+\ba\bU-\wh{\bU}\ba,\\
&(q\bI+\bU\ba-\ba\wh{\bU})(q\bI+\ba\bU-\wh{\bU}\ba)=q^2\bI.
\end{align}
\end{subequations}

To catch the dhGI model, the relations obtained above are far from sufficient. We must determine in detail the relationships between $U_{i}$ and $V_{i}$, 
which necessitates expanding some of the matrix relations mentioned above. Since $\bV\bW=\bI$ and $\det(\bV)=1$, 
each entry of the matrix $\bW$ defined in \eqref{UVW} can be explicitly expressed as
\begin{align}
\label{W}
\bW=\bI+\bS^{(0,-1)}=\begin{pmatrix}
V_{4}&-V_{2}\\
-V_{3}&V_{1}
\end{pmatrix}.
\end{align}
Under the specific definitions of $\bU,\bV$ in \eqref{UVW} and $\bW$ in \eqref{W}, we expand the 
matrix equation \eqref{th-UVW} to the scalar level and simplify it to obtain
\begin{subequations}
\label{h-UiVi}
\begin{align}
\label{h-UiVi1}
&V_{1}\wh{V}_{4}-V_{2}\wh{V}_{3}=\wh{V}_{1}V_{4}-\wh{V}_{2}V_{3},\\
\label{h-UiVi2}
&q+A_{1}=q(\wh{V}_{1}V_{4}-\wh{V}_{2}V_{3}),\\
\label{h-UiVi3}
&A_{2}=q(V_{1}\wh{V}_{2}-V_{2}\wh{V}_{1}),\\
\label{h-UiVi4}
&A_{3}=q(V_{3}\wh{V}_{4}-V_{4}\wh{V}_{3}),
\end{align}
\end{subequations}
and
\begin{align}
\label{Ai}
(q+A_{1})^2+A_{2}A_{3}=q^2,
\end{align}
where
\begin{align}
A_{1}:=U_{1}-\wh{U}_{1}=\wh{U}_{4}-U_{4},\quad A_{2}:=U_{2}+\wh{U}_{2},\quad A_{3}:=U_{3}+\wh{U}_{3}.
\end{align}
Similarly, we introduce a new variable
\begin{align}
\mV:=\mathrm{i}V_{2}/V_{4},
\end{align}
and revisit relation \eqref{h-UiVi}, which follows from substituting $-\mathrm{i}\mV V_{4}$ for $V_{2}$. Starting with \eqref{h-UiVi3}, we have
\begin{align}
\label{A2-1}
A_{2}=-\mathrm{i}q(V_{1}\wh{V}_{4}\wh{\mV}-\wh{V}_{1}V_{4}\mV).
\end{align}
Replacing $V_{1}\wh{V}_{4}$ in \eqref{A2-1} by \eqref{h-UiVi1} gives
\begin{align}
\label{A2-2}
A_{2}=q\big[\mathrm{i}\wh{V}_{1}V_{4}(\mV-\wh{\mV})-\mV\wh{\mV}\wh{V}_{3}V_{4}+\wh{\mV}^2V_{3}\wh{V}_{4}\big] .
\end{align}
Substituting $qV_{3}\wh{V}_{4}$ in \eqref{A2-2} with \eqref{h-UiVi4} then leads to
\begin{align}
\label{A2-3}
A_{2}=\mathrm{i}(\mV-\wh{\mV})(q\wh{V}_{1}V_{4}+\mathrm{i}q\wh{\mV}\wh{V}_{3}V_{4})+\wh{\mV}^2A_{3}.
\end{align}
Finally, replacing $q\wh{V}_{1}V_{4}$ in \eqref{A2-3} by \eqref{h-UiVi2} yields
\begin{align}
\label{Ai-V}
A_{2}=\mathrm{i}(\mV-\wh{\mV})(q+A_{1})+\mV\wh{\mV}A_{3},
\end{align}
which establishes the relation between $A_{i}~(i=1,2,3)$ and $\mV$. We take \eqref{Ai-V} into \eqref{Ai} to derive an equation for $(q+A_{1})$, and solving it indicates
\begin{align}
\label{A1}
2(q+A_{1})=\mathrm{i}A_{3}(\wh{\mV}-\mV)+\big[ 4q^2-A_{3}^2(\mV+\wh{\mV})^2\big]^{\frac{1}{2}}.
\end{align}
The symmetry between $(p,\wt{\phantom{a}})$ and $(q,\wh{\phantom{a}})$ allows us to obtain the relations for the $\wt{\phantom{a}}$-shift 
directly, as presented below
\begin{subequations}
\begin{align}
\label{Bi}
&(p+B_{1})^2+B_{2}B_{3}=p^2,\\
\label{Bi-V}
&B_{2}=\mathrm{i}(\mV-\wt{\mV})(p+B_{1})+\mV\wt{\mV}B_{3},\\
\label{B1}
&2(p+B_{1})=\mathrm{i}B_{3}(\wt{\mV}-\mV)+\big[4p^2-B_{3}^2(\mV+\wt{\mV})^2\big] ^{\frac{1}{2}},
\end{align}
\end{subequations}
where
\begin{align}
B_{1}:=U_{1}-\wt{U}_{1}=\wt{U}_{4}-U_{4},\quad B_{2}:=U_{2}+\wt{U}_{2},\quad B_{3}:=U_{3}+\wt{U}_{3}.
\end{align}

Next, utilizing the earlier relation connecting the $\wh{\phantom{a}}$-shift and the $\wt{\phantom{a}}$-shift derived previously, we construct  
closed dhGI models involving only the variables $U_{3}$ and $\mV$. To begin, we expand \eqref{bU} and consider its (2,1)-component
\begin{align}
\label{U3U1}
(p-q+\wh{U}_{1}-\wt{U}_{1})(\wh{\wt{U}}_{3}-U_{3})+(p+q+U_{1}-\wh{\wt{U}}_{1})(\wt{U}_{3}-\wh{U}_{3})=0.
\end{align}
Meanwhile, we replace equations \eqref{Ai-V} and \eqref{Bi-V} with the following identity
\begin{align}
A_{2}+\wt{A}_{2}=B_{2}+\wh{B}_{2},
\end{align}
and thereby obtain
\begin{align}
\label{VU1}
(p-q+\wh{U}_{1}-\wt{U}_{1}-\mathrm{i}\wt{\mV}\wt{U}_{3}+\mathrm{i}\wh{\mV}\wh{U}_{3})(\wh{\wt{\mV}}-\mV)+(p+q+U_{1}-\wh{\wt{U}}_{1}-\mathrm{i}\wh{\wt{\mV}}\wh{\wt{U}}_{3}+\mathrm{i}\mV U_{3})(\wt{\mV}-\wh{\mV})=0.
\end{align}
Note that equations \eqref{U3U1} and \eqref{VU1} contain several terms involving $U_{1}$. 
Therefore, our goal is to replace these terms with a relation that involves only $U_{3}$ and $\mV$. We find that the following identities hold 
\begin{subequations}
\label{U1-rela}
\begin{align}
p-q+\wh{U}_{1}-\wt{U}_{1}=
\label{U1-rela-a}
(p+B_{1})-(q+A_{1}),\\
\label{U1-rela-b}
=(p+\wh{B}_{1})-(q+\wt{A}_{1}), \\
p+q+U_{1}-\wh{\wt{U}}_{1}=
\label{U1-rela-c}
(p+\wh{B}_{1})+(q+A_{1}),\\
\label{U1-rela-d}
=(p+B_{1})+(q+\wt{A}_{1}).
\end{align}
\end{subequations}
With the aid of \eqref{A1} and \eqref{B1}, we construct four families of dhGI models, which are summarized below.

\noindent$\bullet$ If we set (\eqref{U1-rela-a}, \eqref{U1-rela-c}), then we have
\begin{subequations}
\label{dhGI1}
\begin{align}
\label{dhGI1-1}
& (U_{3}(\wt{\mV}-\wh{\mV})+ \mV(\wh{U}_{3}-\wt{U}_{3})-\wh{U}_{3}\wh{\mV}+\wt{U}_{3}\wt{\mV}-\mathrm{i}P^{'}+\mathrm{i}Q^{'}) (\wh{\wt{U}}_{3}-U_{3}) \nn \\
& \qquad +(\wh{U}_{3}(\wh{\wt{\mV}}-\mV)+\wh{\mV}(U_{3}-\wh{\wt{U}}_{3})+\wh{\wt{U}}_{3}\wh{\wt{\mV}}-U_{3}\mV-\mathrm{i}\wh{P}^{'}-\mathrm{i}Q^{'}) (\wt{U}_{3}-\wh{U}_{3})=0,\\
\label{dhGI1-2}
& (\mV(\wt{U}_{3}-\wh{U}_{3})+U_{3}(\wh{\mV}-\wt{\mV})-\wh{U}_{3}\wh{\mV}+\wt{U}_{3}\wt{\mV}+\mathrm{i}P^{'}-\mathrm{i}Q^{'}) (\wh{\wt{\mV}}-\mV) \nn \\
& \qquad +(\wh{\mV}(\wh{\wt{U}}_{3}-U_{3})+\wh{U}_{3}(\mV-\wh{\wt{\mV}})+\wh{\wt{U}}_{3}\wh{\wt{\mV}}-U_{3}\mV+\mathrm{i}\wh{P}^{'}+\mathrm{i}Q^{'}) (\wt{\mV}-\wh{\mV})=0,
\end{align}
\end{subequations}
where
\begin{align}
P^{'}:=[ 4p^2-(U_{3}+\wt{U}_{3})^2(\mV+\wt{\mV})^2]^{\frac{1}{2}},\quad Q^{'}:=[4q^2-(U_{3}+\wh{U}_{3})^2(\mV+\wh{\mV})^2]^{\frac{1}{2}}.
\end{align}
\noindent$\bullet$ If we set (\eqref{U1-rela-a}, \eqref{U1-rela-d}), then we have
\begin{subequations}
\label{dhGI2}
\begin{align}
\label{dhGI2-1}
&  (U_{3}(\wt{\mV}-\wh{\mV})+\mV(\wh{U}_{3}-\wt{U}_{3})-\wh{U}_{3}\wh{\mV}+\wt{U}_{3}\wt{\mV}-\mathrm{i}P^{'}+\mathrm{i}Q^{'}) (\wh{\wt{U}}_{3}-U_{3}) \nn \\
& \qquad +(\wt{U}_{3}(\wh{\wt{\mV}}-\mV)+\wt{\mV}(U_{3}-\wh{\wt{U}}_{3})+\wh{\wt{U}}_{3}\wh{\wt{\mV}}-U_{3}\mV-\mathrm{i}P^{'}-\mathrm{i}\wt{Q}^{'}) (\wt{U}_{3}-\wh{U}_{3})=0, \\
\label{dhGI2-2}
& (\mV(\wt{U}_{3}-\wh{U}_{3})+U_{3}(\wh{\mV}-\wt{\mV})-\wh{U}_{3}\wh{\mV}+\wt{U}_{3}\wt{\mV}+\mathrm{i}P^{'}-\mathrm{i}Q^{'}) (\wh{\wt{\mV}}-\mV) \nn \\
& \qquad +(\wt{\mV}(\wh{\wt{U}}_{3}-U_{3})+\wt{U}_{3}(\mV-\wh{\wt{\mV}})+\wh{\wt{U}}_{3}\wh{\wt{\mV}}-U_{3}\mV+\mathrm{i}P^{'}+\mathrm{i}\wt{Q}^{'}) (\wt{\mV}-\wh{\mV})=0.
\end{align}
\end{subequations}
\noindent$\bullet$~~If we set (\eqref{U1-rela-b}, \eqref{U1-rela-c}),  then we have
\begin{subequations}
\label{dhGI3}
\begin{align}
\label{dhGI3-1}
& (\wh{\wt{U}}_{3}(\wt{\mV}-\wh{\mV})+\wh{\wt{\mV}}(\wh{U}_{3}-\wt{U}_{3})-\wh{U}_{3}\wh{\mV}+\wt{U}_{3}\wt{\mV}-\mathrm{i}\wh{P}^{'}
+\mathrm{i}\wt{Q}^{'}) (\wh{\wt{U}}_{3}-U_{3}) \nn \\
&\qquad +(\wh{U}_{3}(\wh{\wt{\mV}}-\mV)+\wh{\mV}(U_{3}-\wh{\wt{U}}_{3})+\wh{\wt{U}}_{3}\wh{\wt{\mV}}-U_{3}\mV-\mathrm{i}\wh{P}^{'}-\mathrm{i}Q^{'}) (\wt{U}_{3}-\wh{U}_{3})=0, \\
\label{dhGI3-2}
& (\wh{\wt{\mV}}(\wt{U}_{3}-\wh{U}_{3})+\wh{\wt{U}}_{3}(\wh{\mV}-\wt{\mV})-\wh{U}_{3}\wh{\mV}+\wt{U}_{3}\wt{\mV}+\mathrm{i}\wh{P}^{'}-\mathrm{i}\wt{Q}^{'}) (\wh{\wt{\mV}}-\mV) \nn \\
&\qquad +(\wh{\mV}(\wh{\wt{U}}_{3}-U_{3})+\wh{U}_{3}(\mV-\wh{\wt{\mV}})+\wh{\wt{U}}_{3}\wh{\wt{\mV}}-U_{3}\mV+\mathrm{i}\wh{P}^{'}+\mathrm{i}Q^{'}) (\wt{\mV}-\wh{\mV})=0.
\end{align}
\end{subequations}
\noindent$\bullet$~~If we set (\eqref{U1-rela-b}, \eqref{U1-rela-d}), then we have
\begin{subequations}
\label{dhGI4}
\begin{align}
\label{dhGI4-1}
& (\wh{\wt{U}}_{3}(\wt{\mV}-\wh{\mV})+\wh{\wt{\mV}}(\wh{U}_{3}-\wt{U}_{3})-\wh{U}_{3}\wh{\mV}+\wt{U}_{3}\wt{\mV}-\mathrm{i}\wh{P}^{'}+\mathrm{i}\wt{Q}^{'})(\wh{\wt{U}}_{3}-U_{3}) \nn\\
& \qquad +(\wt{U}_{3}(\wh{\wt{\mV}}-\mV)+\wt{\mV}(U_{3}-\wh{\wt{U}}_{3})+\wh{\wt{U}}_{3}\wh{\wt{\mV}}-U_{3}\mV-\mathrm{i}P^{'}-\mathrm{i}\wt{Q}^{'}) (\wt{U}_{3}-\wh{U}_{3})=0, \\
\label{dhGI4-2}
& (\wh{\wt{\mV}}(\wt{U}_{3}-\wh{U}_{3})+\wh{\wt{U}}_{3}(\wh{\mV}-\wt{\mV})-\wh{U}_{3}\wh{\mV}+\wt{U}_{3}\wt{\mV}+\mathrm{i}\wh{P}^{'}-\mathrm{i}\wt{Q}^{'}) (\wh{\wt{\mV}}-\mV) \nn \\
& \qquad +(\wt{\mV}(\wh{\wt{U}}_{3}-U_{3})+\wt{U}_{3}(\mV-\wh{\wt{\mV}})+\wh{\wt{U}}_{3}\wh{\wt{\mV}}-U_{3}\mV+\mathrm{i}P^{'}+\mathrm{i}\wt{Q}^{'}) (\wt{\mV}-\wh{\mV})=0.
\end{align}
\end{subequations}

The models \eqref{dhGI1}, \eqref{dhGI2}, \eqref{dhGI3} and \eqref{dhGI4} presented above are the dhGI systems. 
Their continuum limits yield the continuous hGI system (see Subsection \ref{4-4}). Their soliton and multipole solutions will be provided in Subsection \ref{4-3}. 
Moreover, since these models are conjugate symmetric, their local reductions will be discussed. By recombining the lattice equations in these models, 
one finds that the combined models admit nonlocal reductions, which constitutes the main focus of Subsection \ref{4-5}.

\begin{remark}
Alternatively, we may use a pair of new dependent variables $(\mathrm{i}V_{3}/V_{1},U_{2})$ to construct a dhGI model identical to \eqref{dhGI1} (or \eqref{dhGI2}, \eqref{dhGI3}, \eqref{dhGI4}). This indicates that $(\mathrm{i}V_{3}/V_{1},U_{2})$ are also solutions of the discrete GI models \eqref{dhGI1}, \eqref{dhGI2}, \eqref{dhGI3}, \eqref{dhGI4}, where $\mathrm{i}V_{3}/V_{1}$ corresponds to $U_{3}$ and $U_{2}$ to $\mV$ in the original models. In the following study, we present only the solution expressed in terms of the dependent variables $(U_{3},\mV)$ without loss of generality.
\end{remark}

\subsection{Solutions to the dhGI models}\label{4-3}

We next solve SDE-II and thus construct explicit solutions of the dhGI models. As in the dGI case (Subsection \ref{3-3}), 
the Cauchy matrix framework converts the solution problem into the choice of spectral matrices. Diagonal spectral matrices yield soliton solutions, 
while Jordan-block spectral matrices generate multiple-pole solutions. The solution formulae share the same structural form \eqref{sym-uv-def} 
as in the dGI case, where essential difference lies in the plane wave factors, which are now determined by SDE-II rather than SDE-I. Under the setting 
of \eqref{KMrc}, we first expand SDE-II into
\begin{subequations}
\label{SDE-II-1}
\begin{align}
&\bK_{1}\bM_{1}-\bM_{1}\bK_{2}=\br_{1}\tc_{2},\quad \bK_{2}\bM_{2}-\bM_{2}\bK_{1}=\br_{2}\tc_{1}, \\
&(p\bI-\bK_{1})\wt{\br}_{1}=(p\bI+\bK_{1})\br_{1},\quad (p\bI+\bK_{2})\wt{\br}_{2}=(p\bI-\bK_{2})\br_{2},\\
&(q\bI-\bK_{1})\wh{\br}_{1}=(q\bI+\bK_{1})\br_{1},\quad (q\bI+\bK_{2})\wh{\br}_{2}=(q\bI-\bK_{2})\br_{2}.
\end{align}
\end{subequations}
The matrix functions $\bU$ and $\bV$, as defined in \eqref{UVW}, can be expressed as
\begin{align}
\bU=\bS^{(0,0)}=\tc(\bI+\bM)^{-1}\br,\quad\bV=\bI-\bS^{(-1,0)}
=\bI-\tc(\bI+\bM)^{-1}\bK^{-1}\br.
\end{align}
Since the solution of the dhGI model consists of the functions $(U_{3},\mV)$, we need to obtain their explicit expressions. 
With the aid of \eqref{si}, these are given by
\begin{align}
\label{U3V-def}
(U_{3},\mV)=(U_{3},\mathrm{i}V_{2}/V_{4}) =\left( \tc_{1}(\bI-\bM_{1}\bM_{2})^{-1}\br_{1},
\frac{-\mathrm{i}\tc_{2}(\bI-\bM_{2}\bM_{1})^{-1}\bK_{2}^{-1}\br_{2}}{1+\tc_{1}
\bM_{1}(\bI-\bM_{2}\bM_{1})^{-1}\bK_{2}^{-1}\br_{2}}\right),
\end{align}
where the relevant elements are defined in \eqref{SDE-II-1}.

Next, we present the explicit formulae for soliton and multiple-pole solutions by solving \eqref{SDE-II-1}.
Due to the similarity invariance of $\bS^{(i,j)}$ (see Proposition \ref{prop-2-3}), we consider only the cases where 
$\bK_{1}$ and $\bK_{2}$ are diagonal matrices or Jordan matrices, corresponding to soliton solutions and multiple-pole solutions, respectively.

\begin{theorem}[\textbf{Soliton solutions}]
Functions $(U_{3},\mV)$ defined in \eqref{U3V-def} serve as $N$-soliton solutions for the dhGI model \eqref{dhGI1}(or \eqref{dhGI2}-\eqref{dhGI4}) with
\begin{subequations}
\begin{align}
&\bK_{1}=\mathrm{diag}(k_{1},k_{2},\cdots,k_{N}), \quad \bK_{2}=\mathrm{diag}(l_{1},l_{2},\cdots,l_{N}),\\
&\br_{1}=(\phi(k_{1}),\phi(k_{2}),\cdots,\phi(k_{N}))^{\st}, \quad \br_{2}=(\psi(l_{1}),\psi(l_{2}),\cdots,\psi(l_{N}))^{\st},\\
&\tc_{1}=(c_{1,1},c_{1,2},\cdots,c_{1,N}), \quad \tc_{2}=(c_{2,1},c_{2,2},\cdots,c_{2,N}),\\
&\bM_{1}=(M_{1,ij})_{N\times N},\quad M_{1,ij}=\frac{\phi(k_{i})c_{2,j}}{k_{i}-l_{j}},\\
&\bM_{2}=(M_{2,ij})_{N\times N},\quad M_{2,ij}=\frac{\psi(l_{i})c_{1,j}}{l_{i}-k_{j}},
\end{align}
\end{subequations}
where
\begin{subequations}
\label{phi-psi}
\begin{align}
&\phi(z)=\left(\frac{p+z}{p-z}\right)^{n}
\left(\frac{q+z}{q-z}\right) ^{m}\phi^{0}(z),\\
&\psi(z)=\left(\frac{p-z}{p+z}\right)^{n}
\left(\frac{q-z}{q+z}\right)^{m}\psi^{0}(z).
\end{align}
\end{subequations}
\end{theorem}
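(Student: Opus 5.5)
The proof reduces to checking that the stated data solve the system \eqref{SDE-II-1}. Once they do, the whole derivation of Subsection 4.2 applies verbatim: the shift relations \eqref{Sij1-ht}, the identities \eqref{VW} and \eqref{W}, the relations \eqref{Ai-V}, \eqref{A1}, \eqref{Bi-V}, \eqref{B1}, and the identities \eqref{U1-rela}. Hence $(U_{3},\mV)$ given by \eqref{U3V-def} satisfies each of \eqref{dhGI1}–\eqref{dhGI4}. By Proposition \ref{prop-2-3} it suffices to work with the diagonal canonical forms of $\bK_1,\bK_2$, and we assume throughout that $k_i\neq l_j$ for all $i,j$, in accordance with Remark \ref{Re-2-1}.

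\textbf{Dispersion relations.} Since $\bK_1$ is diagonal, the relation $(p\bI-\bK_1)\wt{\br}_1=(p\bI+\bK_1)\br_1$ decouples into the scalar recursions $(p-k_i)\wt{\phi}(k_i)=(p+k_i)\phi(k_i)$. These are satisfied by $\phi$ in \eqref{phi-psi}, because the $n$-dependence enters only through the factor $\big(\frac{p+z}{p-z}\big)^n$. The same argument with the factor $\big(\frac{q+z}{q-z}\big)^m$ gives the $\wh{\phantom{a}}$-relation. For $\br_2$ the relations read $(p+l_i)\wt{\psi}(l_i)=(p-l_i)\psi(l_i)$, and these hold for $\psi$ in \eqref{phi-psi}; the $q$-direction is analogous.

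\textbf{Sylvester equations.} Next we verify the decoupled Sylvester equations \eqref{SE-1}. Entrywise, $(\bK_1\bM_1-\bM_1\bK_2)_{ij}=(k_i-l_j)M_{1,ij}=\phi(k_i)c_{2,j}=(\br_1\tc_2)_{ij}$, and similarly $(l_i-k_j)M_{2,ij}=\psi(l_i)c_{1,j}$. Because the spectra are disjoint, these $\bM_1,\bM_2$ are the unique solutions. We also need $\bM_i$ to be $(n,m)$-dependent in a way consistent with the shifted Sylvester equation. This is automatic: the entries are built from the shifted $\br_i$, so the argument in the proof of Proposition \ref{prop-3-1} goes through with SDE-II in place of SDE-I.

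\textbf{Assembling the result.} Substituting into \eqref{si} yields exactly the expressions \eqref{U3V-def} for $U_3=s_3$ with $i=j=0$ and for $\mV=\mathrm{i}V_2/V_4$. Here $V_2=-s_2$ and $V_4=1-s_4$, both taken with $(i,j)=(-1,0)$.

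The only genuinely delicate point is the branch of the square roots $P',Q'$ in \eqref{A1} and \eqref{B1}. The derivation selected one root of a quadratic, and we must check that the soliton data produce that root rather than the other. I would handle this by continuity. At the vacuum $\bM=0$ we have $U_3=0$ and $A_1=0$, so \eqref{A1} forces the positive root $2q=[4q^2]^{1/2}$. Choosing the principal branch near the vacuum, the identity persists for parameters with small $|c_{i,j}|$ and then extends by analytic continuation. This branch-consistency step is the main obstacle; the rest is routine.
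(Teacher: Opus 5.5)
Your proposal is correct and is essentially the paper's own implicit argument: the diagonal data satisfy the dispersion relations and Sylvester equations of \eqref{SDE-II-1} entrywise, after which the purely algebraic derivation of Subsection 4.2 gives \eqref{dhGI1}--\eqref{dhGI4}. Your branch discussion goes beyond the paper, but continuation from the vacuum does not show that the principal root works globally (that root is not analytic across its cut); it instead defines $P'$ and $Q'$ as the rational quantities $2(p+B_1)-\mathrm{i}B_3(\wt{\mV}-\mV)$ and $2(q+A_1)-\mathrm{i}A_3(\wh{\mV}-\mV)$, which equal the positive roots $2p$ and $2q$ at the vacuum only when $p,q>0$.
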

\begin{theorem}[\textbf{Multiple-pole solutions}]
Let $(U_{3},\mV)$ be defined by \eqref{U3V-def}. Then they constitute $N$-th order multiple-pole solutions of the dhGI model \eqref{dhGI1} if the parameters are chosen according to either of the following two cases.

\noindent$\bullet$~~Case-I.
\begin{subequations}
\begin{align}
&\bK_{1}=\bJ[k,1],\quad\bK_{2}=\bJ[l,1],\quad \mathrm{with}\quad k\neq l,\\
&\br_{1}=\bF_{k}[\phi(k)]\be_{N},\quad\br_{2}=\bF_{l}[\psi(l)]\be_{N},\\
&\tc_{1}=\be^{\st}_{N}\bH[c_{1,j}],\quad\tc_{2}=\be^{\st}_{N}\bH[c_{2,j}],\\
&\bM_{1}=\bF_{k}[\phi(k)]\bG_{1}\bH[c_{2,j}],\quad\bM_{2}=\bF_{l}[\psi(l)]\bG_{2}\bH[c_{1,j}],
\end{align}
\end{subequations}
where $\bG_{1},\bG_{2}$ are given in \eqref{G1G2} and $\bJ,\bF,\bH$ are defined in \eqref{J-def}, \eqref{F-def} and \eqref{H-def}.

$\bullet$~~Case-II.
\begin{subequations}
\begin{align}
&\bK_{1}=\bJ[k,1],\quad\bK_{2}=\bJ[l,-1],\quad \mathrm{with}\quad k\neq l,\\
&\br_{1}=\bF_{k}[\phi(k)]\be_{N},\quad\br_{2}=\bF_{-l}[\psi(l)]\be_{N},\\
&\tc_{1}=\be^{\st}_{N}\bH[c_{1,j}],\quad\tc_{2}=\be^{\st}_{N}\bH[c_{2,j}],\\
&\bM_{1}=\bF_{k}[\phi(k)]\bG^{'}\bH[c_{2,j}],\quad\bM_{2}=-\bF_{-l}[\psi(l)]\bG^{'}\bH[c_{1,j}],\\
&\bG^{'}=(G^{'}_{i,j})_{N\times N},\quad G^{'}_{i,j}=\frac{(-1)^{i+j}C^{i-1}_{i+j-2}}{(k-l)^{i+j-1}}.
\end{align}
\end{subequations}
\end{theorem}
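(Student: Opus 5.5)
Since the dhGI model \eqref{dhGI1} was derived from SDE-II \eqref{SDE-II-1} alone, using only the Sylvester equation, the dispersion relations for $\br_1,\br_2$ and the invertibility of $\bI+\bM$, it suffices to show that each parameter choice in Case-I and Case-II satisfies \eqref{SDE-II-1}. Then $(U_3,\mV)$ in \eqref{U3V-def} automatically solves \eqref{dhGI1}. The plan therefore has two steps for each case: verify the dispersion relations for $\br_1,\br_2$, and verify the two decoupled Sylvester equations for $\bM_1,\bM_2$.

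\textbf{Case-I: dispersion relations.} I use that $\bF_{k}[f(k)]$ is a lower triangular Toeplitz matrix commuting with $\bJ[k,1]$. Moreover, for any analytic $g$ one has $g(\bJ[k,1])=\bF_k[g(k)]$, and $\bF_k[f]\bF_k[g]=\bF_k[fg]$. Since $\phi(k)=\left(\frac{p+k}{p-k}\right)^n\left(\frac{q+k}{q-k}\right)^m\phi^0(k)$, this gives
\[
\wt{\br}_1=\bF_k\!\left[\tfrac{p+k}{p-k}\right]\bF_k[\phi(k)]\be_N=(p\bI-\bK_1)^{-1}(p\bI+\bK_1)\br_1,
\]
which is the required relation, and the $\wh{\phantom{a}}$-shift is handled the same way. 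For $\br_2$ the same argument applies with $\psi$ and $\bJ[l,1]$.

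\textbf{Case-I: Sylvester equations.} I write $\bM_1=\bF_k[\phi(k)]\bG_1\bH[c_{2,j}]$ and use two facts: $\bF_k$ commutes with $\bK_1$, and $\bH\bJ[l,1]=\bJ[l,1]^{\st}\bH$, which follows from the symmetry remark after \eqref{H-def}. With these, $\bK_1\bM_1-\bM_1\bK_2=\br_1\tc_2$ reduces to the constant identity
\[
\bJ[k,1]\bG_1-\bG_1\bJ[l,1]^{\st}=\be_N\be_N^{\st}.
\]
This identity is checked entrywise with Pascal's rule for $C^{i-1}_{i+j-2}$. It is the standard computation behind the dGI multiple-pole theorem of Subsection \ref{3-3}, and I would simply import it. The equation for $\bM_2$ is identical with $k$ and $l$ exchanged.

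\textbf{Case-II.} Here $\bK_2=\bJ[l,-1]$, and the matching Toeplitz family is $\bF_{-l}[\cdot]$, built from derivatives in $-l$, which commutes with $\bJ[l,-1]$. So $\psi(\bJ[l,-1])=\bF_{-l}[\psi(l)]$, and the dispersion relation for $\br_2$ follows exactly as before. The Sylvester equation for $\bM_1$ now reduces to
\[
\bJ[k,1]\bG'-\bG'\bJ[l,-1]^{\st}=\be_N\be_N^{\st},
\]
and the one for $\bM_2$ reduces to $\bJ[l,-1](-\bG')+\bG'\bJ[k,1]^{\st}=\be_N\be_N^{\st}$.

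I expect the main obstacle to be the entrywise verification of these Case-II identities. The check is the $(i,j)$ recursion
\[
(k-l)G'_{i,j}+G'_{i-1,j}+G'_{i,j-1}=\delta_{i1}\delta_{j1}.
\]
It comes down to Pascal's identity together with tracking the extra factor $(-1)^{j}$ produced by $\delta=-1$. I would first test it for $N=2$ to fix the signs, then prove it by induction on $i+j$. The same identity also needs to be confirmed with $\bG'$ in the equation for $\bM_2$, using the symmetry $G'_{i,j}=G'_{j,i}$ and the relation $\bH\bJ^{\st}=\bJ\bH$ to move $\bH$ across.
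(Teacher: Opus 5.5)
Your proposal is correct and follows the argument the paper relies on; the paper gives no separate proof for this theorem. Since the dhGI models are consequences of SDE-II, it suffices to check that the Jordan-block data satisfy \eqref{SDE-II-1}, and via $g(\bJ[k,\pm1])=\bF_{\pm k}[g(k)]$ and $\bH\bJ=\bJ^{\st}\bH$ this reduces to entrywise Pascal-rule identities for $\bG_1,\bG_2,\bG'$, all of which do hold, including your Case-II recursion. Two small corrections: to move $\bH$ across in the $\bM_2$ equation you again need $\bH\bJ[k,1]=\bJ[k,1]^{\st}\bH$, not $\bH\bJ^{\st}=\bJ\bH$ (which already fails for $N=2$), and besides $k\neq l$ you need $k,l\neq 0$ so that $\bK^{-1}$, and hence $\bV$ and $\mV$, is defined.
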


\subsection{Continuum limits: from dhGI to the hGI system}\label{4-4}

We now study the continuum limits of the dhGI models. The two-step limiting procedure employed here mirrors that of 
Subsection 3.4: first one lattice direction is contracted to obtain a semi-discrete model, then the second direction 
is contracted to reach the continuous system. However, a crucial difference arises: in the dGI case the coordinate 
transformation involves $\partial_\tau = \frac{1}{2a}\partial_t + \partial_x$, which produces the second-order GI system, 
whereas in the present case the transformation $\partial_\tau = \frac{1}{12p^2}\partial_t + \partial_x$ naturally gives rise 
to the third-order hGI system. This difference in the scaling reflects the distinct dispersive orders encoded in SDE-I and SDE-II.

\subsubsection{The sdhGI model}

In the first step we let both $q$ and $m$ tend to infinity but keep $m/q$ finite. Introduce
\begin{equation}
\xi=m/q,
\end{equation}
where $1/q$ serves as the grid parameter in the discretization of $\xi$. With this setting
we reinterpret the dependent variables $U_{3}$ and $\mV$ as
\begin{align}
U_{3}(n,m)=U_{3}(n,\xi),\quad\mV(n,m)=\mV(n,\xi),
\end{align}
without making confusions,
while the shifted variables in $m$-direction give rise to
\begin{subequations}
\begin{align}
\wh{U}_{3}=&U_{3}(n,\xi+1/q)=U_{3}+U_{3,\xi}/q+U_{3,\xi\xi}/(2q^{2})+\cdots,\\
\wh{\wt{U}}_{3}=&U_{3}(n+1,\xi+1/q)=\wt{U}_{3}+\wt{U}_{3,\xi}/q+\wt{U}_{3,\xi\xi}/(2q^{2})+\cdots,
\end{align}
\end{subequations}
and a similar formula  for $\wh{\mV}$ and $\wh{\wt{\mV}}$.
Inserting them into the system \eqref{U3U1} and \eqref{VU1}, then
from the leading term (in terms of $\mathcal{O}(q^0)$) we obtain a semi-discrete system
\begin{subequations}
\label{sd-U3VU1}
\begin{align}
&U_{3,\xi}+\wt{U}_{3,\xi}+2(p+U_{1}-\wt{U}_{1})(U_{3}-\wt{U}_{3})=0,\\
&\mV_{\xi}+\wt{\mV}_{\xi}+2(p+U_{1}-\wt{U}_{1}+\mathrm{i}U_{3}\mV-\mathrm{i}\wt{U}_{3}\wt{\mV})(\mV-\wt{\mV})=0.
\end{align}
\end{subequations}
It is observed that this semi‑discrete model also contains a term involving $U_{1}$, which we need to eliminate. With the aid of equations \eqref{B1}, 
the semi‑discrete model \eqref{sd-U3VU1} simplifies to a form that involves only $U_{3}$ and $\mV$:
\begin{subequations}
\label{sd-U3V}
\begin{align}
&U_{3,\xi}+\wt{U}_{3,\xi}+(P^{'}+\mathrm{i}(U_{3}+\wt{U}_{3})(\wt{\mV}-\mV))(U_{3}-\wt{U}_{3})=0,\\
&\mV_{\xi}+\wt{\mV}_{\xi}+(P^{'}-\mathrm{i}(\mV+\wt{\mV})(\wt{U}_{3}-U_{3}))(\mV-\wt{\mV})=0,
\end{align}
where
\begin{align}
P^{'}:=P^{'}(n,\xi)=\big[ 4p^2-(U_{3}+\wt{U}_{3})^2(\mV+\wt{\mV})^2\big]^{\frac{1}{2}}.
\end{align}
\end{subequations}
We refer to the system \eqref{sd-U3V} as the sdhGI model. To obtain its solutions, we examine how the above limit affects $\phi(z)$ and $\psi(z)$ 
as defined in \eqref{phi-psi}. Under this limit, $\phi(z)$ and $\psi(z)$ reduce to
\begin{align}
\label{sd-phi}
\phi(z)=\left(\frac{p+z}{p-z}\right)^{n}
e^{2z\xi}\phi^{0}(z),\quad
\psi(z)=\left(\frac{p-z}{p+z}\right)^{n}
e^{-2z\xi}\psi^{0}(z).
\end{align}
In this way, the solution of the sdhGI model \eqref{sd-U3V} can also be obtained via the Cauchy matrix scheme, leading to the following proposition.
\begin{prop}
Formulae \eqref{U3V-def} provide solutions to the sdhGI
model \eqref{sd-U3V}, through the set of determining equations
\begin{subequations}
\begin{align}
&\bK_{1}\bM_{1}-\bM_{1}\bK_{2}=\br_{1}\tc_{2},\quad \bK_{2}\bM_{2}-\bM_{2}\bK_{1}=\br_{2}\tc_{1}, \\
&(p\bI-\bK_{1})\wt{\br}_{1}=(p\bI+\bK_{1})\br_{1},\quad \br_{1,\xi}=2\bK_{1}\br_{1},\\
&(p\bI+\bK_{2})\wt{\br}_{2}=(p\bI-\bK_{2})\br_{2},\quad \br_{2,\xi}=-2\bK_{2}\br_{2}.
\end{align}
\end{subequations}
\end{prop}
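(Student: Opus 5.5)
The plan is to redo the derivation of the sdhGI model directly at the level of the master functions, with the $m$-direction now continuous. Everything is rebuilt from the semi-discrete determining equations, so that it does not depend on the formal $q\to\infty$ limit. First we check that the stated system is the exact continuum version of \eqref{SDE-II-1}. Writing the $\wh{\phantom{a}}$-relation as $\wh{\br}_1=(q\bI-\bK_1)^{-1}(q\bI+\bK_1)\br_1=(\bI+2\bK_1/q+\mathcal{O}(q^{-2}))\br_1$ and using $\xi=m/q$ gives $\br_{1,\xi}=2\bK_1\br_1$. Similarly $\br_{2,\xi}=-2\bK_2\br_2$. In block form this reads $\br_\xi=2\bK\br\ba$. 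The plane wave factors \eqref{sd-phi} solve this together with the unchanged $\wt{\phantom{a}}$-relations. From the Sylvester equation we get $\bK\bM_\xi-\bM_\xi\bK=2(\bK\br\ba\tc+\br\ba\tc\bK)$ (note $\ba\tc=\tc$-structure commutes appropriately with the block form). By Remark \ref{Re-2-1}, this yields $\bM_\xi=2\br\ba\tc$ up to the block convention, which we verify using \eqref{KMrc}.

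Next we derive the $\xi$-evolution of the master functions. From $\bS^{(i,j)}=\tc\bK^j(\bI+\bM)^{-1}\bK^i\br$ and $\bM_\xi=2\br\ba\tc$ we obtain
$\bS^{(i,j)}_\xi=2\bS^{(i+1,j)}\ba-2\bS^{(0,j)}\ba\bS^{(i,0)}$.
We combine this with the recurrence \eqref{Sij-re=1-po} and keep the $\wt{\phantom{a}}$-shift relations \eqref{Sij1-t1}--\eqref{Sij1-t2} unchanged, since they only use the $n$-direction part of SDE-II. Taking $(i,j)=(0,0)$ in both the $\xi$-relation and the $\wt{\phantom{a}}$-relations, we eliminate $\bS^{(1,0)}$ and $\bS^{(0,1)}$, as was done in deriving \eqref{bU}. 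The $(2,1)$-entry of the result should give the first equation of \eqref{sd-U3VU1}. The companion equation for $\mV$ comes from the $\xi$-derivative of $\bS^{(-1,0)}$, namely $\bV_\xi=2(\bU\ba\bV-\dots)$, i.e. from $(i,j)=(-1,0)$ together with \eqref{VW}. This plays the role of the identity $A_2+\wt{A}_2=B_2+\wh{B}_2$, and one reads off the $(1,2)$-entry divided by $V_4$, using $\det\bV=1$ and \eqref{W}.

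Finally, the $U_1$ terms are eliminated. The $n$-direction relations \eqref{Bi}, \eqref{Bi-V} and \eqref{B1} hold verbatim, because they only used the $\wt{\phantom{a}}$-shift. Substituting $2(p+B_1)=\mathrm{i}B_3(\wt{\mV}-\mV)+P'$ into \eqref{sd-U3VU1} produces exactly \eqref{sd-U3V}. The sign of the square root is fixed by $p>0$ and by consistency with the vacuum $\bU=0$, where $P'=2p$. Then \eqref{U3V-def} follows from \eqref{si}, as in the discrete case.

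The main obstacle is the $\mV$ equation. We must show that the $\xi$-flow of $V_2/V_4$ closes onto $U_1$, $U_3$ and $\mV$ only, with the correct combination $\mathrm{i}U_3\mV-\mathrm{i}\wt{U}_3\wt{\mV}$. This requires the semi-discrete analogue of the chain \eqref{A2-1}--\eqref{Ai-V}: in the continuous direction, $A_2$, $A_3$ and $q+A_1$ become $2U_2$, $2U_3$ and the derivative data. We will carry this out by differentiating $V_2=-\mathrm{i}\mV V_4$ and using $\det\bV=1$. If that gets too messy, the fallback is to justify the proposition as the rigorous $q\to\infty$ limit of the discrete theorem. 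The solutions are smooth in $1/q$ for $|q|>\max|k_i|,|l_i|$, so the $\mathcal{O}(q^0)$ terms of \eqref{U3U1}, \eqref{VU1} and \eqref{B1} vanish identically along them.
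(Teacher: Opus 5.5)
\textbf{Gap: the $\xi$-flow of $\bM$ and of $\bS^{(i,j)}$ is wrong.} In the proposition $\tc$ is constant, so differentiating the Sylvester equation gives $\bK\bM_\xi-\bM_\xi\bK=\br_\xi\tc=2\bK\br\ba\tc$, not $2(\bK\br\ba\tc+\br\ba\tc\bK)$. By the uniqueness in Remark~\ref{Re-2-1}, the solution is $\bM_{1,\xi}=2\bK_1\bM_1$, $\bM_{2,\xi}=-2\bK_2\bM_2$. This is not $2\br\ba\tc$: the off-diagonal blocks differ by $2\bM_1\bK_2$ and $-2\bM_2\bK_1$. Moreover, $2\br\ba\tc$ does not even satisfy the commutator equation you wrote.

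Differentiating $\tc\bK^j(\bI+\bM)^{-1}\bK^i\br$ with the correct $\bM_\xi$ and with $\br_\xi=2\bK\br\ba$ gives
\[
\bS^{(i,j)}_\xi=\bS^{(i+1,j)}\ba-\ba\bS^{(i,j+1)}-\bS^{(0,j)}\ba\bS^{(i,0)},
\]
which is exactly the $\mathcal{O}(q^0)$ part of \eqref{Sij1-h1} and \eqref{Sij1-h2}. Your formula is not equivalent to this. At $(i,j)=(0,0)$ yours gives $U_{3,\xi}=2(\bS^{(1,0)})_{21}-2U_3(U_1-U_4)$. The correct flow, together with \eqref{Sij-re=1-po}, gives $U_{3,\xi}=2(\bS^{(1,0)})_{21}-2U_1U_3$. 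The two differ by $2U_3U_4$, which is nonzero already for a one-soliton solution. If you push your version through the elimination with \eqref{Sij1-t1} and \eqref{Sij1-t2}, you get the first equation of \eqref{sd-U3VU1} with an extra term $-2(U_3U_4+\wt{U}_3\wt{U}_4)$ on the left-hand side. So the step you say ``should give'' that equation fails.

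\textbf{With the corrected flow, your direct route works.} It is a genuine alternative to the paper, which gives no separate argument. The paper reads \eqref{sd-U3VU1} off the $\mathcal{O}(q^0)$ terms of \eqref{U3U1} and \eqref{VU1}, eliminates $U_1$ via \eqref{B1}, and takes the limit \eqref{sd-phi} of the plane waves. Your fallback is exactly that formal argument; the direct route upgrades it to an exact verification. The corrected steps are as follows.
\begin{itemize}
\item At $(0,0)$, the $(2,1)$-entry reads $U_{3,\xi}=(\bS^{(1,0)})_{21}+(\bS^{(0,1)})_{21}-U_3(U_1-U_4)$. Add the $(2,1)$-entries of \eqref{Sij1-t1} and \eqref{Sij1-t2} at $(0,0)$ and use $\wt{U}_4-U_4=U_1-\wt{U}_1$. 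This gives the first equation of \eqref{sd-U3VU1}.
\item At $(-1,0)$, \eqref{VW} turns the flow into $\bV_\xi=(\ba\bU-\bU\ba)\bV$. Hence $V_{2,\xi}=2U_2V_4$ and $V_{4,\xi}=-2U_3V_2$, so $\mV_\xi=2\mathrm{i}U_2-2\mathrm{i}U_3\mV^2$, with no need for $\det\bV=1$.
\item Computing $2\mathrm{i}(U_2+\wt{U}_2)$ once from this and once from \eqref{Bi-V} gives the second equation.
\end{itemize}
So the step you flagged as the main obstacle is short; the real problem was the flow formula. Your final substitution of \eqref{B1} is correct.
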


\subsubsection{Continuous hGI system}
Now, we apply the second continuum limit to the sdhGI model \eqref{sd-U3V} to obtain the continuous hGI system. To this end, we set
\begin{align}
n\rightarrow\infty,\quad p\rightarrow\infty,\quad\text{while}\quad \tau:=n/p  \quad\text{finite},
\end{align}
along with
\begin{align}
U_{1}(n,\xi)=U_{1}(\tau,\xi),\quad U_{3}(n,\xi)=U_{3}(\tau,\xi),\quad\mV(n,
\xi)=\mV(\tau,\xi),
\end{align}
where $\tau$ is another continuous coordinate. To obtain the nontrivial system, 
we introduce new coordinates defined in terms of their derivatives
\begin{align}
\partial_{\tau}=\frac{1}{12p^2}\partial_{t}+\partial_{x},\quad\partial_{\xi}=\partial_{x}.
\end{align}
We may expand
\begin{align}
\begin{aligned}
\wt{U}_{3}=&U_{3}+U_{3,\tau}/p+U_{3,\tau\tau}/(2p^2)+U_{3,\tau\tau\tau}/(3p^3)+\cdots\\
=&U_{3}+U_{3,x}/p+U_{3,xx}/(2p^2)+(U_{3,t}+2U_{3,xxx})/(12p^3)+\cdots,
\end{aligned}
\end{align} 
and a similar formula  for $\wt{U}_{1}$ and $\wt{\mV}$. Inserting them into \eqref{B1} and \eqref{sd-U3VU1}, 
the leading order (in terms of $\mathcal{O}(p^{-1})$ and $\mathcal{O}(p^{-2})$) yield a coupled system
\begin{subequations}
\begin{align}
&U_{1,x}+\mathrm{i}U_{3}\mV_{x}-2U_{3}^2\mV^2=0,\\
&U_{3,t}-U_{3,xxx}-12U_{3,x}U_{1,x}=0,\\
&\mV_{t}-\mV_{xxx}-12\mV_{x}(U_{1,x}+\mathrm{i}U_{3}\mV_{x}+\mathrm{i}U_{3,x}\mV)=0,
\end{align}
\end{subequations}
which imply
\begin{subequations}
\label{ctn-U3V}
\begin{align}
&U_{3,t}-U_{3,xxx}+12\mathrm{i}U_{3}U_{3,x}\mV_{x}-24U_{3}^2\mV^2U_{3,x}=0,\\
&\mV_{t}-\mV_{xxx}-12\mathrm{i}\mV\mV_{x}U_{3,x}-24U_{3}^2\mV^2\mV_{x}=0.
\end{align}
\end{subequations}
The system \eqref{ctn-U3V} is the well‑known hGI system.

Solving the hGI system \eqref{ctn-U3V} requires studying the variation of the plane wave factor \eqref{sd-phi}. We expand \eqref{sd-phi} as
\begin{align}
\begin{aligned}
\phi(z)=&\left(\frac{p+z}{p-z}\right)^{n}e^{2z\xi}\phi^{0}(z)\\
=&\exp\left\lbrace 2z\xi+p\tau\left[ \ln\left(1+ \frac{z}{p}\right)-\ln\left(1- \frac{z}{p}\right)\right]  \right\rbrace\phi^{0}(z) \\
=&\exp\left\lbrace 2z(\xi+\tau)+\frac{2z^3\tau}{3p^2}+\tau\mathcal{O}(p^{-4})\right\rbrace\phi^{0}(z) \\
=&\exp\left\lbrace 2zx+8z^3t+\mathcal{O}(p^{-2})\right\rbrace\phi^{0}(z),
\end{aligned}
\end{align}
an analogous expansion form for $\psi(z)$ is given by
\begin{align}
\psi(z)=\exp\left\lbrace -2zx-8z^3t+\mathcal{O}(p^{-2})\right\rbrace\psi^{0}(z).
\end{align}
Consequently, solutions for the hGI system \eqref{ctn-U3V} can also be obtained from
the Cauchy matrix scheme.
\begin{prop}
Formulae \eqref{U3V-def} provide solutions to the hGI
system \eqref{ctn-U3V}, through the set of determining equations
\begin{subequations}
\begin{align}
&\bK_{1}\bM_{1}-\bM_{1}\bK_{2}=\br_{1}\tc_{2},\quad \bK_{2}\bM_{2}-\bM_{2}\bK_{1}=\br_{2}\tc_{1}, \\
&\br_{1,x}=2\bK_{1}\br_{1},\quad\br_{1,t}=8\bK_{1}^3\br_{1},\\
&\br_{2,x}=2\bK_{2}\br_{2},\quad\br_{2,t}=8\bK_{2}^3\br_{2}.
\end{align}
\end{subequations}
\end{prop}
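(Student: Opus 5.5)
The plan is to verify the claim directly in the continuous setting rather than by passing to the limit. The limit route does not fit the statement as worded. The sdhGI plane wave $\psi(z)$ in \eqref{sd-phi} carries $e^{-2z\xi}$, whereas the proposition prescribes $\br_{2,x}=2\bK_2\br_2$ and $\br_{2,t}=8\bK_2^3\br_2$, the same signs as for $\br_1$. So I would work only from the determining equations exactly as stated. In block form they read $\br_x=2\bK\br$ and $\br_t=8\bK^3\br$, together with the Sylvester equation \eqref{SE}.

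\textbf{Step 1: dynamics of $\bM$.} Differentiate \eqref{SE} in $x$ and substitute $\br_x=2\bK\br$. This gives $\bK\bM_x-\bM_x\bK=2\bK\br\tc=2\bK(\bK\bM-\bM\bK)$. Since $\bK(2\bK\bM)-(2\bK\bM)\bK$ equals the same right-hand side, $\bK(\bM_x-2\bK\bM)-(\bM_x-2\bK\bM)\bK=\bm 0$. The matrix $\bM_x-2\bK\bM$ is block off-diagonal, and $\bK_1$, $\bK_2$ have no common eigenvalues (Remark \ref{Re-2-1}). Hence it vanishes and $\bM_x=2\bK\bM$. The same argument with \eqref{ksm} at $s=3$ gives $\bM_t=8\bK^3\bM$.

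\textbf{Step 2: evolution of the master functions.} Insert these into \eqref{Sij}. Using $(\bI+\bM)^{-1}\bK\bM(\bI+\bM)^{-1}$ and the recurrences \eqref{Sij-re}, I expect a closed formula of the form
$$\bS^{(i,j)}_x=2\bS^{(i+1,j)}-2\,\bS^{(i+1,j)}+2\,\bS^{(0,j)}\bS^{(i+1,0)}\text{-type corrections},$$
together with an analogous cubic formula for $\bS^{(i,j)}_t$, with all terms recorded explicitly. I would specialise these to $\bU=\bS^{(0,0)}$ and to $\bV=\bI-\bS^{(-1,0)}$. The relations \eqref{VW}, $\bV\bW=\bI$ and $\det\bV=1$ are purely algebraic consequences of \eqref{SE}, so they remain available. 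They are used to express $\bS^{(1,0)}$, $\bS^{(0,1)}$ and the $\bS^{(2,\cdot)}$ terms through $\bU$, $\bU_x$, $\bU_{xx}$ and $\bV$.

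\textbf{Step 3: reduction to scalar components.} Take the relevant scalar entries, namely the $(2,1)$ entry for $U_3$ and the ratio $V_2/V_4$ for $\mV$. Then derive the constraint linking $U_{1,x}$ to $U_3$ and $\mV$, the continuous analogue of \eqref{B1}. Eliminating $U_1$ should give the pair \eqref{ctn-U3V}.

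The main obstacle is Step 3, and specifically the sign bookkeeping. With $\br_2$ evolving with $+2\bK_2$, the diagonal block $\bK_2$ enters with the opposite relative sign compared with the SDE-II limit. The resulting cubic and quintic terms must therefore be checked to reproduce exactly the coefficients $\pm12\mathrm{i}$ and $-24$ in \eqref{ctn-U3V}. I would first test this on the one-soliton case $N=1$, where $\bM_1$ and $\bM_2$ are scalars and $U_3$, $\mV$ are explicit exponential rationals, before committing to the general matrix computation.
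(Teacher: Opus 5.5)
Your Step 1 argument is fine. The genuine gap is the way you resolved the sign discrepancy.

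\paragraph{The literal statement is false.} The mismatch you spotted is real. The printed $\br_{2,x}=2\bK_2\br_2$, $\br_{2,t}=8\bK_2^3\br_2$ disagrees with the paper's own limit $\psi(z)=\exp\{-2zx-8z^3t+\cdots\}\psi^0(z)$, and with the sdhGI rule $\br_{2,\xi}=-2\bK_2\br_2$. But this is a misprint in the statement, and the literal version you committed to is false, so Steps 2--3 cannot be completed.

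Your planned $N=1$ test exposes this. Take $\bK_1=k$, $\bK_2=l$ and $E=c_1c_2\phi\psi$. Formula \eqref{U3V-def} gives $U_3=c_1\phi(k-l)^2/\big((k-l)^2+E\big)$ and $\mV=-\mathrm{i}c_2\psi(k-l)^2/\big(l(k-l)^2+kE\big)$. Hence $U_3=c_1\phi-c_1^2c_2\phi^2\psi/(k-l)^2+\cdots$ and $\mV=-\mathrm{i}c_2\psi/l+\cdots$.

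The linear part cannot distinguish the signs, because $\partial_t-\partial_x^3$ annihilates $e^{\pm(2lx+8l^3t)}$ alike. The difference first appears at order $c_1^2c_2$ in the first equation of \eqref{ctn-U3V}.
\begin{itemize}
\item With your signs, $(\partial_t-\partial_x^3)(\phi^2\psi)=-48k(k+l)^2\phi^2\psi$, and $12\mathrm{i}U_3U_{3,x}\mV_x$ contributes $+48k\,c_1^2c_2\phi^2\psi$. The residual $96k(k^2+l^2)(k-l)^{-2}c_1^2c_2\phi^2\psi$ is generically nonzero.
\item With $\psi\propto e^{-2lx-8l^3t}$, the two contributions are $+48k$ and $-48k$ and cancel.
\end{itemize}

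The failure also shows in your Step 2. From $\bM_x=2\bK\bM$ you get $\bS^{(i,j)}_x=2\tc\bK^j(\bI+\bM)^{-1}\bK(\bI+\bM)^{-1}\bK^i\br$. This is a double resolvent, and it does not close on the master functions. Your displayed formula, whose first two terms cancel, is a placeholder for a relation that is not there. The reason is structural. Symmetrising with \eqref{SE} gives $\bM_x=\bK\bM+\bM\bK+\br\tc$, an anticommutator, whereas closure needs a commutator with $\bI+\bM$ plus a rank-two term.

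\paragraph{The fix.} Read the determining equations as $\br_x=2\bK\br\ba$ and $\br_t=8\bK^3\br\ba$, that is, $\br_{2,x}=-2\bK_2\br_2$ and $\br_{2,t}=-8\bK_2^3\br_2$. This is exactly what the paper's justification supports: it passes the plane-wave factors \eqref{sd-phi} through the second continuum limit and inherits the solutions.

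With this sign your direct route becomes a valid alternative. It also buys an independent verification that does not rely on the formal interchange of limits. Step 1 now gives $\bM_{1,x}=2\bK_1\bM_1$ and $\bM_{2,x}=-2\bK_2\bM_2$. Since $\mathrm{diag}(\bI,-\bI)$ commutes with $\bK$ and anticommutes with $\bM$, $\bM_x$ becomes a commutator with $\bI+\bM$ plus $\br\ba\tc$. One obtains
\[
\bS^{(i,j)}_x=\bS^{(i+1,j)}\ba-\ba\bS^{(i,j+1)}-\bS^{(0,j)}\ba\bS^{(i,0)},\qquad
\bS^{(i,j)}_t=4\Big(\bS^{(i+3,j)}\ba-\ba\bS^{(i,j+3)}-\sum_{l=0}^{2}\bS^{(2-l,j)}\ba\bS^{(i,l)}\Big),
\]
which is the continuum form of \eqref{Sij1-t1}.

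Combine the first relation with $\bS^{(0,1)}=\bS^{(1,0)}-\bU^2$ and $\bS^{(-1,1)}=\bU\bV$. This gives $U_{1,x}=2U_2U_3$ and $\bV_x=(\ba\bU-\bU\ba)\bV$. Hence $\mV_x=2\mathrm{i}U_2-2\mathrm{i}U_3\mV^2$, so $U_{1,x}+\mathrm{i}U_3\mV_x-2U_3^2\mV^2=0$. This is the continuum analogue of \eqref{B1} that your Step 3 needs.
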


\subsection{Local and nonlocal reductions}\label{4-5}

Finally, we investigate reductions of the dhGI models. In Subsection 3.5, the dGI models only admitted the local conjugate reduction 
$\nu = u_3^*$ (with nonlocal reductions being absent), which reduced each system to a scalar dGI equation. The present higher-order case 
is considerably richer. The local reduction $\mV=U_{3}^*$ still applies and produces scalar dhGI equations. More importantly, unlike the 
dGI models, suitable pairwise recombinations of the dhGI equations allow nonlocal reductions of the form $\mV = \mathrm{i}U_{3,-1}^*$\footnote{For a given discrete
function $f:=f_{n,m}$, we define $f_{-1}:= f_{-n,-m}, \wt{f}_{-1}:= f_{-n-1,-m}, \wh{f}_{-1}:= f_{-n,-m-1}$, and $\wh{\wt{f}}_{-1}:=f_{-n-1,-m-1}$.},
involving evaluations at reflected lattice points. This additional reduction mechanism is a structural feature specific to the 
higher-order hierarchy: it originates from the real-valued lattice parameters and the symmetric role of the sign matrix $\ba$ in SDE-II, 
which together create the algebraic conditions necessary for nonlocal symmetries to emerge.

\subsubsection{Local reduction}

For the dhGI model \eqref{dhGI1}, we observe that setting $\mV=U_{3}^*$ reduces the two equations in the model to a single equation, 
which we refer to as the dhGI equation. This reduction can also be realized at the solution level by imposing appropriate constraints. 
Models \eqref{dhGI2}, \eqref{dhGI3}, and \eqref{dhGI4} remain valid under this reduction. As a representative example, the reduction of model \eqref{dhGI1} gives
\begin{align}
\label{eq-hGI-1}
& (U_{3}(\wt{U}_{3}^*-\wh{U}_{3}^*)+U_{3}^*(\wh{U}_{3}-\wt{U}_{3})-|\wh{U}_{3}|^2+|\wt{U}_{3}|^2-\mathrm{i}P_{1}^{'}+\mathrm{i}Q_{1}^{'}) (\wh{\wt{U}}_{3}-U_{3}) \nn \\
& \qquad +(\wh{U}_{3}(\wh{\wt{U}}_{3}^*-U_{3}^*)+\wh{U}_{3}^*(U_{3}-\wh{\wt{U}}_{3})+|\wh{\wt{U}}_{3}|^2-|U_{3}|^2
-\mathrm{i}\wh{P}^{'}_{1}-\mathrm{i}Q^{'}_{1}) (\wt{U}_{3}-\wh{U}_{3})=0,
\end{align}
where
\begin{align}
\label{P'Q'-1}
P^{'}_{1}=[4p^2-|U_{3}+\wt{U}_{3}|^4]^{\frac{1}{2}},\quad
Q^{'}_{1}=[4q^2-|U_{3}+\wh{U}_{3}|^4]^{\frac{1}{2}}.
\end{align}
The remaining three dhGI equations, obtained from models \eqref{dhGI2}--\eqref{dhGI4} under the same reduction, are listed in Appendix \ref{App-C}.
We demonstrate the validity of this reduction at the solution level in the following theorem.
\begin{theorem}
The solution \eqref{U3V-def} admits the conjugate reduction $\mV=U_{3}^{*}$ under constraints
\begin{align}
\bK_{2}=-\bK_{1}^{*},\quad \br_{2}=\mathrm{i}\bK_{2}\br_{1}^{*},\quad \tc_{2}=\tc_{1}^{*}.
\end{align}
\end{theorem}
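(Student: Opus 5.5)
The plan is to follow the proof of Theorem \ref{T-3-3} closely. First we check that the constraints are compatible with the SDE-II \eqref{SDE-II-1}, then we derive a relation between $\bM_1$ and $\bM_2$, and finally we verify the identity $U_3^*V_4=\mathrm{i}V_2$ directly from the explicit formulae \eqref{U3V-def}. This identity is exactly $\mV=\mathrm{i}V_2/V_4=U_3^*$.

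\emph{Step 1 (compatibility).} Since $p,q\in\mathbb{R}$, conjugating $(p\bI-\bK_1)\wt{\br}_1=(p\bI+\bK_1)\br_1$ and using $\bK_1^*=-\bK_2$ gives $(p\bI+\bK_2)\wt{\br}_1^*=(p\bI-\bK_2)\br_1^*$. Left-multiplying by $\mathrm{i}\bK_2$, which commutes with $p\bI\pm\bK_2$, shows that $\br_2=\mathrm{i}\bK_2\br_1^*$ satisfies the $\wt{\phantom{a}}$-dispersion relation for $\br_2$. The $\wh{\phantom{a}}$-relation follows in the same way with $q$ in place of $p$. The plane wave factors $\phi,\psi$ in \eqref{phi-psi} are consistent with this for real $p,q$.

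\emph{Step 2 ($\bM_1$ versus $\bM_2$).} Conjugating the first Sylvester equation in \eqref{SDE-II-1} and inserting $\bK_2=-\bK_1^*$ and $\tc_2=\tc_1^*$ gives
\begin{align*}
\bK_2\bM_1^*-\bM_1^*\bK_1=-\br_1^*\tc_1 .
\end{align*}
We then check that $-\mathrm{i}\bK_2\bM_1^*$ satisfies $\bK_2X-X\bK_1=\mathrm{i}\bK_2\br_1^*\tc_1=\br_2\tc_1$, which is the equation for $\bM_2$. By Remark \ref{Re-2-1} this solution is unique, so
\begin{align*}
\bM_2=-\mathrm{i}\bK_2\bM_1^*,\qquad \bM_2^*=-\mathrm{i}\bK_1\bM_1,\qquad \bM_1^*=\mathrm{i}\bK_2^{-1}\bM_2 .
\end{align*}
It follows that $\bM_1^*\bM_2^*=\bK_2^{-1}\bM_2\bK_1\bM_1$.

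\emph{Step 3 (the identity).} Starting from \eqref{U3V-def}, we write
\begin{align*}
U_3^*V_4=\tc_1^*(\bI-\bM_1^*\bM_2^*)^{-1}\br_1^*\,\bigl[1+\tc_1\bM_1(\bI-\bM_2\bM_1)^{-1}\bK_2^{-1}\br_2\bigr],
\end{align*}
and substitute $\tc_1^*=\tc_2$ and $\br_1^*=-\mathrm{i}\bK_2^{-1}\br_2$. The scalar bracket is absorbed into a matrix factor, giving
\begin{align*}
U_3^*V_4=-\mathrm{i}\,\tc_2(\bI-\bM_1^*\bM_2^*)^{-1}\bigl(\bI-\bM_2\bM_1+\bK_2^{-1}\br_2\tc_1\bM_1\bigr)(\bI-\bM_2\bM_1)^{-1}\bK_2^{-1}\br_2 .
\end{align*}
The Sylvester equation gives $\bK_2^{-1}\br_2\tc_1=\bM_2-\bK_2^{-1}\bM_2\bK_1$. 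Hence the middle factor equals $\bI-\bK_2^{-1}\bM_2\bK_1\bM_1=\bI-\bM_1^*\bM_2^*$, and it cancels the first inverse. What remains is $-\mathrm{i}\,\tc_2(\bI-\bM_2\bM_1)^{-1}\bK_2^{-1}\br_2$, and by \eqref{si} this is precisely $\mathrm{i}V_2$.

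The computation is routine once Step 2 is established. The main point of care is the sign and factor-of-$\mathrm{i}$ bookkeeping in Step 2. The constraints differ from \eqref{redu} by $\bK_2=-\bK_1^*$ and the factor $\mathrm{i}$ in $\br_2$, and the middle factor in Step 3 collapses exactly to $\bI-\bM_1^*\bM_2^*$ only if both are tracked correctly. If a sign fails at that point, the first thing I would re-derive is the candidate $\bM_2=-\mathrm{i}\bK_2\bM_1^*$.
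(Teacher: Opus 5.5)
Your proposal is correct and follows the paper's route. The paper only says to adapt the proof of Theorem \ref{T-3-3}, and your adaptation works: you obtain $\bM_2=-\mathrm{i}\bK_2\bM_1^*$, hence $\bM_1^*\bM_2^*=\bK_2^{-1}\bM_2\bK_1\bM_1$, and use the Sylvester equation to collapse $U_3^*V_4$ to $\mathrm{i}V_2$, with the signs and factors of $\mathrm{i}$ tracked correctly; your relation for $\bM_2$ also agrees with the reduced formula $U_3=\tc_1(\bI-\mathrm{i}\bM_1\bK_1^*\bM_1^*)^{-1}\br_1$ given afterward. Your explicit check that $\br_2=\mathrm{i}\bK_2\br_1^*$ is compatible with the SDE-II dispersion relations for real $p,q$, and your appeal to uniqueness of the Sylvester solution, are details the paper leaves implicit.
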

See the proof of Theorem \ref{T-3-3} for a similar argument.
The following theorem presents soliton and multipole solutions to the dhGI equation \eqref{eq-hGI-1}.
\begin{theorem}
The function 
\begin{align}
U_{3}=\tc_{1}(\bI-\mathrm{i}\bM_{1}\bK^{*}_{1}\bM^{*}_{1})^{-1}\br_{1},
\end{align}
provides i)  $N$-soliton solutions for the dhGI equation \eqref{eq-hGI-1} with
\begin{subequations}
\begin{align}
&\bK_{1}=\mathrm{diag}(k_{1},k_{2},\cdots,k_{N}),\quad k_{i}\in\mathbb{C}, \quad i=1,2,\cdots,N,\\
&\br_{1}=(\phi(k_{1}),\phi(k_{2}),\cdots,\phi(k_{N}))^{\st},\quad\tc_{1}=(c_{1,1},c_{1,2},\cdots,c_{1,N}),\\
&\bM_{1}=(M_{1,ij})_{N\times N},\quad M_{1,ij}=\frac{\phi(k_{i})c_{1,j}^*}{k_{i}+k_{j}^{*}};
\end{align}
\end{subequations}
and ii) the multiple-pole solutions for the dhGI equation \eqref{eq-hGI-1} with
\begin{subequations}
\begin{align}
&\bK_{1}=\bJ[k,1],\quad\br_{1}=\bF_{k}[\phi(k)]\be_{N},\quad\tc_{1}=\be_{N}^{\st}\bH[c_{1,j}],\\
&\bM_{1}=\bF_{k}[\phi(k)]\bG\bH^{*}[c_{1,j}], \quad
\bG=(G_{i,j})_{N\times N},\quad G_{i,j}=\frac{(-1)^{i-1}C^{i-1}_{i+j-2}}{(k+k^{*})^{i+j-1}},
\end{align}
\end{subequations}
where the discrete plane wave factor $\phi(z)$ is defined in \eqref{phi-psi},
and the matrices $\bJ,\bF,\bH$ are defined in \eqref{J-def}, \eqref{F-def} and \eqref{H-def}.
\end{theorem}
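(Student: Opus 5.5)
The plan is to derive the statement from the general solution theorems for \eqref{dhGI1} (soliton and multiple-pole cases), specialised by the conjugate reduction theorem just proved. By that theorem, imposing $\bK_2=-\bK_1^*$, $\br_2=\mathrm{i}\bK_2\br_1^*$, $\tc_2=\tc_1^*$ on \eqref{U3V-def} gives $\mV=U_3^*$. Then $(U_3,U_3^*)$ solves \eqref{dhGI1}, so $U_3$ solves the scalar equation \eqref{eq-hGI-1}. Two things remain: to rewrite $U_3$ in the stated closed form, and to check that the concrete data i) and ii) satisfy the reduced determining equations \eqref{SDE-II-1}.

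\textbf{Closed form.} I would first derive the relation between $\bM_2$ and $\bM_1$, imitating the proof of Theorem \ref{T-3-3}.
\begin{itemize}
\item Conjugate the second Sylvester equation in \eqref{SDE-II-1}.
\item Substitute the constraints.
\item Use uniqueness of the solution of the Sylvester equation. This holds since $\bK_1$ and $-\bK_1^*$ share no eigenvalues, which requires $k_i+k_j^*\neq0$.
\end{itemize}
The outcome should be $\bM_2=\mathrm{i}\bK_2\bM_1^*=-\mathrm{i}\bK_1^*\bM_1^*$, up to the sign dictated by the $\mathrm{i}$ in $\br_2$. Substituting this into $U_3=\tc_1(\bI-\bM_1\bM_2)^{-1}\br_1$ gives $\tc_1(\bI-\mathrm{i}\bM_1\bK_1^*\bM_1^*)^{-1}\br_1$, matching the stated formula once the signs are fixed. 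The consistency conditions on the plane waves must also be checked.
\begin{itemize}
\item The relation $\br_2=\mathrm{i}\bK_2\br_1^*$ has to be compatible with the shift relations for $\br_2$ in \eqref{SDE-II-1}.
\item Because $p,q$ are real, $\phi(z)^*=\phi^*(z^*)$.
\item With $\bK_2=-\bK_1^*$, the relation for $\br_2$ becomes the conjugate of that for $\br_1$, and $\psi(-k^*)$ is proportional to $\phi(k)^*$. This fixes $\psi^0$ in terms of $\phi^0$.
\end{itemize}

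\textbf{Soliton case i).} Take $l_j=-k_j^*$ and $c_{2,j}=c_{1,j}^*$ in the soliton theorem. This gives $M_{1,ij}=\phi(k_i)c_{1,j}^*/(k_i+k_j^*)$ immediately.

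\textbf{Multiple-pole case ii).} The constraint $\bK_2=-\bK_1^*$ turns $\bJ[k,1]$ into $\bJ[-k^*,-1]$. This is Case-II of the multiple-pole theorem with $l=-k^*$. I would proceed as follows.
\begin{itemize}
\item Verify that $\bF_{-l}[\psi(l)]$ at $l=-k^*$ equals, up to the factor $\mathrm{i}\bK_2$, $\bF_{k^*}[\phi^*]=\bF_k[\phi(k)]^*$. This uses that $\partial_{-l}$ at $l=-k^*$ acts as $\partial_{k^*}$.
\item Show that $\bG'$ with $k-l=k+k^*$ reduces to the stated $\bG$. The difference is the sign factor $(-1)^{i+j}$ against $(-1)^{i-1}$. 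I expect this to be absorbed by the Hankel structure $\bH$ and the symmetry $\bH\bF=\bF^{\st}\bH$ noted in the remark after the dGI multiple-pole theorem.
\end{itemize}
Alternatively, and more robustly, I would verify directly that $\bM_1=\bF_k[\phi(k)]\bG\bH^*[c_{1,j}]$ solves $\bK_1\bM_1+\bM_1\bK_1^*=\br_1\tc_1^*$. Since $\bF_k$ commutes with $\bJ[k,1]$, this reduces to $\bJ[k,1]\bG\bH^*+\bG\bH^*\bJ[k,1]^*=\be_N\be_N^{\st}\bH^*$. Moving $\bJ^*$ through $\bH^*$ by the Hankel symmetry gives $\bH^*\bJ^*=(\bJ^*)^{\st}\bH^*$, so the condition becomes a binomial (Pascal-type) identity for $\bG$ with denominator $k+k^*$.

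\textbf{Main obstacle.} I expect the hardest part to be the multiple-pole bookkeeping: reconciling the $\delta=-1$ Jordan block and its $\bF_{-l}$ with conjugation, tracking the factor $\mathrm{i}\bK_2$ acting on $\br_1^*$ (which is not diagonal here), and checking the $\bG$ identity. The soliton case and the closed form should be routine.
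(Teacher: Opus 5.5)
Your overall route is the one the paper intends (it gives no separate proof): impose $\bK_2=-\bK_1^*$, $\br_2=\mathrm{i}\bK_2\br_1^*$, $\tc_2=\tc_1^*$, then specialise the general soliton and multiple-pole theorems for \eqref{dhGI1}. The closed form and part i) go through, but fix the sign you left open. Under the constraints, the second Sylvester equation conjugates to $\bK_1\bM_2^*+\bM_2^*\bK_1^*=-\mathrm{i}\bK_1\br_1\tc_1^*$. The matrix $-\mathrm{i}\bK_1\bM_1$ solves the same equation, so uniqueness gives $\bM_2=\mathrm{i}\bK_1^*\bM_1^*=-\mathrm{i}\bK_2\bM_1^*$, not $+\mathrm{i}\bK_2\bM_1^*$. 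This yields exactly $\tc_1(\bI-\mathrm{i}\bM_1\bK_1^*\bM_1^*)^{-1}\br_1$.

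\textbf{The genuine gap is in part ii).} The sign discrepancy between Case-II's $\bG'$ at $l=-k^*$ and the printed $\bG$ cannot be absorbed. The direct check you propose as the more robust route actually refutes the printed $\bG$. Use that $\bF_k$ commutes with $\bJ[k,1]$ and that $\bH^*\bJ[k^*,1]=\bJ[k^*,1]^{\st}\bH^*$. Then the reduced Sylvester equation $\bK_1\bM_1+\bM_1\bK_1^*=\br_1\tc_1^*$ becomes
\[(k+k^*)G_{i,j}+G_{i-1,j}+G_{i,j-1}=\delta_{i1}\delta_{j1},\qquad G_{0,j}=G_{i,0}=0.\]
Both shift terms carry a plus sign, because $-\bK_1^*=\bJ[-k^*,-1]$ has $\delta=-1$. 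The unique solution is $G_{i,j}=(-1)^{i+j}C^{i-1}_{i+j-2}/(k+k^*)^{i+j-1}$, which is precisely $\bG'$ at $l=-k^*$. The printed $G_{i,j}=(-1)^{i-1}C^{i-1}_{i+j-2}/(k+k^*)^{i+j-1}$ already fails at $(i,j)=(1,2)$, where the left side equals $2/(k+k^*)\neq 0$.

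The Hankel symmetry cannot rescue it. The printed matrix equals $\bG' D$ with $D=\mathrm{diag}(1,-1,1,\dots)$, and pushing $D$ through uses $D\bJ[k^*,1]^{\st}D=\bJ[k^*,-1]^{\st}$. So the printed $\bM_1$ solves the Sylvester equation with $\bK_2=\bJ[-k^*,1]=-D\bK_1^*D$, not with $\bK_2=-\bK_1^*$. The relation $\bH\bF=\bF^{\st}\bH$ only transposes; it cannot flip the sign of the subdiagonal. The $(-1)^{i-1}$ pattern is the correct one for $\bK_2=+\bK_1^*$, as in the dGI local and dhGI nonlocal theorems with denominator $k-k^*$, and appears to have been carried over here by mistake.

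Consequently, your argument proves part ii) as printed only for $N=1$. For $N\ge 2$ it establishes the corrected statement with $(-1)^{i+j}$ in place of $(-1)^{i-1}$. That corrected version is obtained from Case-II with $l=-k^*$ and $c_{2,j}=c_{1,j}^*$, with $\psi^0$ chosen to absorb the factor $-\mathrm{i}z$ coming from $\mathrm{i}\bK_2$.
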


\subsubsection{Nonlocal reduction}

We recombine the lattice equations in models \eqref{dhGI1}, \eqref{dhGI2}, \eqref{dhGI3} and \eqref{dhGI4} pairwise so 
that the resulting four new models admit nonlocal reductions. After careful consideration, we pair \eqref{dhGI1-1} 
with \eqref{dhGI4-2}, \eqref{dhGI2-1} with \eqref{dhGI3-2}, \eqref{dhGI3-1} with \eqref{dhGI2-2}, and \eqref{dhGI4-1} with \eqref{dhGI1-2}, 
thereby obtaining four new families of dhGI models, which we denote as $(N.1)$, $(N.2)$, $(N.3)$, and $(N.4)$, respectively. 
These models themselves admit nonlocal reduction 
\begin{align}
\mV=\mathrm{i}U_{3,-1}^*,
\end{align}
then from $(N.1)-(N.4)$ one gets the four distinct nonlocal dhGI equations. As a representative example, 
the reduction of model $(N.1)$ gives
\begin{align}
\label{non-eq-hGI-1}
\begin{aligned}
&(U_{3}(\wt{U}_{3,-1}^*-\wh{U}_{3,-1}^*)+U_{3,-1}^*(\wh{U}_{3}-\wt{U}_{3})-\wh{U}_{3}\wh{U}_{3,-1}^*+\wt{U}_{3}\wt{U}_{3,-1}^*-P_{2}^{'}+Q_{2}^{'}) (\wh{\wt{U}}_{3}-U_{3})\\&+
(\wh{U}_{3}(\wh{\wt{U}}_{3,-1}^*-U_{3,-1}^*)+\wh{U}_{3,-1}^*(U_{3}-\wh{\wt{U}}_{3})+\wh{\wt{U}}_{3}\wh{\wt{U}}_{3,-1}^*-U_{3}U_{3,-1}^*-\wh{P}^{'}_{2}-Q^{'}_{2}) (\wt{U}_{3}-\wh{U}_{3})=0,
\end{aligned}
\end{align}
where
\begin{align}
\label{P'Q'-2}
P^{'}_{2}=[4p^2+(U_{3}+\wt{U}_{3})(U_{3,-1}^*+\wt{U}_{3,-1}^*)]^{\frac{1}{2}},\quad
Q^{'}_{2}=[4q^2+(U_{3}+\wh{U}_{3})(U_{3,-1}^*+\wh{U}_{3,-1}^*)]^{\frac{1}{2}}.
\end{align}
The remaining three nonlocal dhGI equations are listed in Appendix \ref{App-D}.

\begin{theorem}
The solution \eqref{U3V-def} admits the conjugate nonlocal reduction $\mV=\mathrm{i}U_{3,-1}^*,$ under constraints
\begin{align}
\bK_{2}=\bK_{1}^{*},\quad \br_{2}=-\bK_{2}\br_{1,-1}^{*},\quad \tc_{2}=\tc_{1}^{*}.
\end{align}
\end{theorem}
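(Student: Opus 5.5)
The plan follows the proof of Theorem \ref{T-3-3}. The only change is that every complex conjugation is combined with the reflection $(n,m)\mapsto(-n,-m)$. The target identity is $U_{3,-1}^*V_4=V_2$ at every lattice point; it gives $\mV=\mathrm{i}V_2/V_4=\mathrm{i}U_{3,-1}^*$ directly.

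\emph{Step 1: consistency of the constraints with \eqref{SDE-II-1}.} From $(p\bI-\bK_1)\wt{\br}_1=(p\bI+\bK_1)\br_1$ we get $\br_1(n+1)=(p\bI-\bK_1)^{-1}(p\bI+\bK_1)\br_1(n)$. Evaluating at $(-n-1,-m)$ gives
\[
\br_1(-n-1,-m)=(p\bI+\bK_1)^{-1}(p\bI-\bK_1)\br_1(-n,-m).
\]
Since $p\in\mathbb{R}$, conjugation turns this into $(p\bI+\bK_1^*)\wt{\br}_{1,-1}^*=(p\bI-\bK_1^*)\br_{1,-1}^*$, and the $m$-direction works the same way with $q\in\mathbb{R}$. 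With $\bK_2=\bK_1^*$, the vector $\br_2=-\bK_2\br_{1,-1}^*$ therefore satisfies the $\br_2$-dispersion relations in \eqref{SDE-II-1}, because $\bK_2$ commutes with $p\bI\pm\bK_2$ and $q\bI\pm\bK_2$. So the constraints are compatible with SDE-II. This is exactly where the real lattice parameters are used, which also explains why SDE-I admits no such reduction.

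\emph{Step 2: relation between $\bM_1$ and $\bM_2$.} Substituting the constraints, the first Sylvester equation reads $\bK_1\bM_1-\bM_1\bK_1^*=\br_1\tc_1^*$. The second reads $\bK_1^*\bM_2-\bM_2\bK_1=-\bK_1^*\br_{1,-1}^*\tc_1$. Reflecting and conjugating the second equation gives
\[
\bK_1(-\bK_1^{-1}\bM_{2,-1}^*)-(-\bK_1^{-1}\bM_{2,-1}^*)\bK_1^*=\br_1\tc_1^*.
\]
By Remark \ref{Re-2-1}, $\bK_1$ and $\bK_1^*$ share no eigenvalues, so this Sylvester equation has a unique solution. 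Comparing with the first equation yields
\[
\bM_{2,-1}^*=-\bK_1\bM_1,\qquad \bM_{1,-1}^*=-\bK_2^{-1}\bM_2,\qquad \bM_{1,-1}^*\bM_{2,-1}^*=\bK_2^{-1}\bM_2\bK_1\bM_1.
\]

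\emph{Step 3: the computation.} Start from $U_{3,-1}^*=\tc_1^*(\bI-\bM_{1,-1}^*\bM_{2,-1}^*)^{-1}\br_{1,-1}^*$. Replace $\tc_1^*=\tc_2$ and $\br_{1,-1}^*=-\bK_2^{-1}\br_2$, and multiply by $V_4=1+\tc_1\bM_1(\bI-\bM_2\bM_1)^{-1}\bK_2^{-1}\br_2$. This gives
\[
U_{3,-1}^*V_4=-\tc_2(\bI-\bM_{1,-1}^*\bM_{2,-1}^*)^{-1}\bigl(\bI-\bM_2\bM_1+\bK_2^{-1}\br_2\tc_1\bM_1\bigr)(\bI-\bM_2\bM_1)^{-1}\bK_2^{-1}\br_2 .
\]
The Sylvester equation $\bK_2\bM_2-\bM_2\bK_1=\br_2\tc_1$ gives $\bK_2^{-1}\br_2\tc_1\bM_1=\bM_2\bM_1-\bK_2^{-1}\bM_2\bK_1\bM_1$. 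Hence the middle factor equals $\bI-\bK_2^{-1}\bM_2\bK_1\bM_1=\bI-\bM_{1,-1}^*\bM_{2,-1}^*$ by Step 2. It cancels the first inverse, leaving $-\tc_2(\bI-\bM_2\bM_1)^{-1}\bK_2^{-1}\br_2=V_2$ by \eqref{si}. Thus $\mV=\mathrm{i}U_{3,-1}^*$.

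The main obstacle is Step 1 together with the bookkeeping of reflected shifts. One must check that the reflection reverses the shift direction in exactly the way that, combined with conjugation, turns the $\br_1$-dispersion into the $\br_2$-dispersion. One must also check that all quantities in Step 3 are evaluated consistently: reflected objects appear only through $\bM_{i,-1}^*$ and $\br_{1,-1}^*$, which Step 2 re-expresses at $(n,m)$. After that, the algebra is identical to Theorem \ref{T-3-3}.
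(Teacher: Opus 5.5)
Your proposal is correct and is essentially the paper's argument. The paper only says to adapt the proof of Theorem \ref{T-3-3}, and your Steps 2--3 do exactly that, with conjugation replaced by conjugation composed with the reflection $(n,m)\mapsto(-n,-m)$: this gives $\bM_{2,-1}^*=-\bK_1\bM_1$ and the cancellation $U_{3,-1}^*V_4=V_2$, and your Step 1 additionally makes explicit the check, left implicit in the paper, that $\br_2=-\bK_2\br_{1,-1}^*$ is compatible with the SDE-II dispersion relations because $p,q\in\mathbb{R}$.
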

See the proof of Theorem \ref{T-3-3} for a similar argument.
The following theorem presents soliton and multiple-pole solutions to the nonlocal dhGI equation \eqref{non-eq-hGI-1}.
\begin{theorem}
The function 
\begin{align}
U_{3}=\tc_{1}(\bI+\bM_{1}\bK^{*}_{1}\bM^{*}_{1,-1})^{-1}\br_{1},
\end{align}
provides i)  $N$-soliton solutions for the nonlocal dhGI equation \eqref{non-eq-hGI-1} with
\begin{subequations}
\begin{align}
&\bK_{1}=\mathrm{diag}(k_{1},k_{2},\cdots,k_{N}),\quad k_{i}\in\mathbb{C}, \quad i=1,2,\cdots,N,\\
&\br_{1}=(\phi(k_{1}),\phi(k_{2}),\cdots,\phi(k_{N}))^{\st},\quad\tc_{1}=(c_{1,1},c_{1,2},\cdots,c_{1,N}),\\
&\bM_{1}=(M_{1,ij})_{N\times N},\quad M_{1,ij}=\frac{\phi(k_{i})c_{1,j}^*}{k_{i}-k_{j}^{*}};
\end{align}
\end{subequations}
and ii) the multiple-pole solutions for the nonlocal dhGI equation \eqref{non-eq-hGI-1} with
\begin{subequations}
\begin{align}
&\bK_{1}=\bJ[k,1],\quad\br_{1}=\bF_{k}[\phi(k)]\be_{N},\quad\tc_{1}=\be_{N}^{\st}\bH[c_{1,j}],\\
&\bM_{1}=\bF_{k}[\phi(k)]\bG\bH^{*}[c_{1,j}], \quad
\bG=(G_{i,j})_{N\times N},\quad G_{i,j}=\frac{(-1)^{i-1}C^{i-1}_{i+j-2}}{(k-k^{*})^{i+j-1}},
\end{align}
\end{subequations}
where the discrete plane wave factor $\phi(z)$ is defined in \eqref{phi-psi},
and the matrices $\bJ,\bF,\bH$ are defined in \eqref{J-def}, \eqref{F-def} and \eqref{H-def}, respectively.
\end{theorem}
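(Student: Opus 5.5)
The plan is to reduce the statement to the general solution formula \eqref{U3V-def} under the constraints of the preceding nonlocal reduction theorem, $\bK_2=\bK_1^*$, $\br_2=-\bK_2\br_{1,-1}^*$, $\tc_2=\tc_1^*$. By that theorem, $\mV=\mathrm{i}U_{3,-1}^*$ holds for the pair $(U_3,\mV)$. We also know that this pair solves the recombined model $(N.1)$: it consists of \eqref{dhGI1-1} and \eqref{dhGI4-2}, and both equations are satisfied by the solutions of the unreduced dhGI models given in Subsection \ref{4-3}. Hence $U_3$ solves the nonlocal equation \eqref{non-eq-hGI-1}. What remains is to verify three things for the specific data: the constraints are consistent with SDE-II, the reduced Sylvester data produce the stated $\bM_1$, and $U_3$ takes the stated closed form.

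First I check the plane waves. With $\bK_2=\bK_1^*$ we need $\br_2=-\bK_1^*\br_{1,-1}^*$ to satisfy $(p\bI+\bK_2)\wt{\br}_2=(p\bI-\bK_2)\br_2$. For $p,q$ real, conjugating and reflecting $\phi(k)$ gives
\[
\phi^*(k)_{-1}=\Big(\frac{p+k^*}{p-k^*}\Big)^{-n}\Big(\frac{q+k^*}{q-k^*}\Big)^{-m}\phi^{0*}(k),
\]
which is exactly $\psi(k^*)$ up to a constant. So $\br_2$ has the required dispersion, and $\br_2=-k^*\psi(k^*)$-type entries. In the Jordan case, $\bF_k[\phi(k)]$ conjugates to $\bF_{k^*}[\phi^*]$, which commutes with $\bJ[k^*,1]=\bK_1^*$, so the same argument applies.

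Next I mimic the proof of Theorem \ref{T-3-3}. Conjugate and reflect the second Sylvester equation, and multiply it by $-\bK_1^{-1}$. Uniqueness then gives $\bM_{2,-1}^*=-\bK_1\bM_1$, that is, $\bM_2=-\bK_1^*\bM_{1,-1}^*$. Substituting into $U_3=\tc_1(\bI-\bM_1\bM_2)^{-1}\br_1$ yields $U_3=\tc_1(\bI+\bM_1\bK_1^*\bM_{1,-1}^*)^{-1}\br_1$.

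For the data, the diagonal case is direct. Solving $\bK_1\bM_1-\bM_1\bK_1^*=\br_1\tc_1^*$ entrywise gives $M_{1,ij}=\phi(k_i)c_{1,j}^*/(k_i-k_j^*)$, assuming $k_i\neq k_j^*$. For the Jordan case, I take $\bM_1=\bF_k[\phi(k)]\bG\bH^*[c_{1,j}]$. Using $\bF\bJ=\bJ\bF$, together with $\bH^*\bJ[k^*,1]=\bJ^{\st}[k^*,1]\bH^*$ and $\tc_1^*=\be_N^{\st}\bH^*$, the equation reduces to the constant matrix identity
\[
\bJ[k,1]\bG-\bG\bJ^{\st}[k^*,1]=\be_N\be_N^{\st}.
\]
This is the $l=k^*$ instance of the identity behind $\bG_1$ in \eqref{G1G2}. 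I would quote it rather than reprove it.

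The step I expect to be the main obstacle is confirming that the reduction is compatible with the recombined pairing $(N.1)$ rather than with \eqref{dhGI1} itself. Under $\mV=\mathrm{i}U_{3,-1}^*$, equation \eqref{dhGI1-1} must map onto the reflected conjugate of \eqref{dhGI4-2}. Here the shifts $\wt{\phantom{a}},\wh{\phantom{a}}$ become backward shifts, and the square-root terms $P',Q'$ must map to $\wh{P}',\wt{Q}'$-type terms at reflected points. I would check this by comparing the two equations term by term after the substitution $n\to-n-1$, $m\to-m-1$. The radicals become $P_2',Q_2'$ because $(\mV+\wt{\mV})^2=-(U_{3,-1}^*+\wt{U}^*_{3,-1})^2$; consistency then requires choosing the branch of the square root compatibly. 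That branch choice is the delicate point I would state explicitly as an assumption.
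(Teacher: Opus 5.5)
Your proposal is correct and follows the route the paper intends. The paper states this theorem without a separate proof: one uses the nonlocal reduction theorem (proved as in Theorem~\ref{T-3-3}), substitutes $\bM_{2}=-\bK_{1}^{*}\bM_{1,-1}^{*}$ into $U_{3}=\tc_{1}(\bI-\bM_{1}\bM_{2})^{-1}\br_{1}$, and solves $\bK_{1}\bM_{1}-\bM_{1}\bK_{1}^{*}=\br_{1}\tc_{1}^{*}$ for diagonal and Jordan $\bK_{1}$, exactly as you do.

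The step you flag as the main obstacle is not needed for this statement: after substituting $\mV=\mathrm{i}U_{3,-1}^{*}$, equation \eqref{dhGI1-1} is exactly $\mathrm{i}$ times \eqref{non-eq-hGI-1}, so the comparison with \eqref{dhGI4-2} only concerns the consistency of the reduced model $(N.1)$, not whether $U_{3}$ solves \eqref{non-eq-hGI-1}. (Your radicand computation also shows that $P^{'}_{2},Q^{'}_{2}$ should contain squared factors such as $(U_{3}+\wt{U}_{3})^{2}(U_{3,-1}^{*}+\wt{U}_{3,-1}^{*})^{2}$, whereas the paper's definition omits the squares.)
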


\section{Conclusions}\label{5}

In this paper, we have systematically constructed discrete integrable analogues of the GI system and the hGI system by means of the Cauchy matrix framework. 
The main results of this work are summarized as follows.

Starting from the Sylvester equation with block matrix structures, we introduced two distinct sets of dispersion relations, which define different 
discrete dispersion relations for the plane wave vectors. Within SDE-I, we derived the shift dynamics of 
the master functions and, by specializing to low-order cases and eliminating auxiliary variables, 
we obtained closed lattice systems involving only two scalar dependent variables $(\mu,\nu)=(u_3, v_2/v_4)$. 
A notable feature of this construction is the non-uniqueness of the elimination step: since the expressions for the shift differences 
$(u_1-\wh{\wt{u}}_1)$ and $(\wt{u}_1-\wh{u}_1)$ admit multiple valid algebraic representations, 
we derived four conjugate-symmetric families of dGI models. Similarly, within SDE-II, we constructed four families of 
dhGI models with dependent variables $(U_3, \mV)$. This multiplicity of discretizations for a single continuous equation 
is an intrinsic phenomenon associated with derivative-type integrable systems and reflects the richness of the underlying algebraic structure.

For all discrete models obtained, we provided explicit exact solutions via the Cauchy matrix method. When the spectral 
matrices $\bK_1$ and $\bK_2$ are chosen as diagonal matrices, the construction yields $N$-soliton solutions expressed 
through discrete plane wave factors. When Jordan-block spectral matrices are employed, $N$-th order multiple-pole solutions
are obtained, expressed in terms of lower triangular Toeplitz matrices and skew-triangular Hankel matrices. The similarity invariance of the master 
functions ensures that these two canonical choices exhaust all relevant cases.

We verified the consistency of all discrete models through a two-step continuum limit procedure. In the first step, one lattice direction is 
contracted, reducing each fully discrete model to a sdGI (or sdhGI) model. In the second step, the remaining lattice direction is contracted to yield 
the continuous GI system \eqref{ctn-u3v} or the continuous hGI system \eqref{ctn-U3V}. Crucially, all four dGI models share the same continuum limit, 
namely the GI equation, and all four dhGI models converge to the same hGI equation. This universality of the continuous limit, despite the different 
discrete forms, confirms that the four models in each family are genuinely distinct discretizations of the same continuous integrable equation.

We also investigated symmetry reductions at both the equation level and the solution level. The conjugate symmetry of all discrete models allows 
the imposition of the local reduction $\nu = u_3^*$ (for dGI) or $\mV = U_3^*$ (for dhGI), which reduces each two-component system to a scalar equation. 
At the solution level, this reduction is realized by explicit constraints on the spectral data, namely $\bK_2 = \bK_1^*$ with appropriate conditions on the 
vectors $\br_2$ and $\tc_2$. In the higher-order case, we further discovered that pairwise recombinations of the dhGI lattice equations, obtained by pairing 
one equation from one model with one from another, admit nonlocal reductions of the form $\mV=\mathrm{i}U_{3,-1}^*$. These nonlocal reductions produce nonlocal dhGI 
equations together with their explicit soliton and multiple-pole solutions. This nonlocal reduction mechanism is specific to the higher-order hierarchy and is
absent in the dGI case, highlighting a fundamental structural difference between the two levels of the GI hierarchy.

Several directions for future research emerge naturally from this work. First, it would be valuable to derive Lax pairs for the discrete models 
obtained here and to investigate their integrability properties, such as B\"{a}cklund transformations and conservation laws, directly at the lattice 
level. Second, the non-uniqueness phenomenon observed in the discretization process raises the question of whether 
there exists a classification principle, perhaps based on symmetry or variational structure, that distinguishes among the four discrete models in each family 
\cite{HJN2016}. Third, extending the present approach to other derivative-type integrable equations, such as the Kaup--Newell \cite{KN1978} 
and Chen--Lee--Liu \cite{CLL1979} equations, could reveal whether similar multiplicities arise in their discretizations. 
Finally, the nonlocal reductions discovered here motivate a deeper study of nonlocal integrable lattice equations and their physical 
applications \cite{AblowitzMusslimani2013,ZhangZhao2022nonlocal}.

\vskip 20pt
\section*{Acknowledgments}
This project is supported by National Natural Science Foundation of China (No. 12071432) 
and Natural Science Foundation of Zhejiang Province (No. LZ24A010007).

\vskip 20pt
\section*{Data Availability Statement}
Data sharing not applicable to this article as no datasets were generated or
analyzed during the current study.

\vskip 20pt
\section*{Conflict of interest}
There are no conflicts of interest to declare.

\begin{appendix}

\titleformat{\section}{\Large\bfseries }{Appendix \thesection }{0.5em}{ }{}

\section{KP-type Cauchy matrix scheme for the dhGI models}

More precisely, the dhGI models in the main text are obtained via the AKNS-type Cauchy matrix scheme. However, the same models can also be derived 
using the KP-type Cauchy matrix scheme. Below, we present this KP-type construction in the form of a proposition.
\begin{prop}
Assume that $\bK,\bL$ are  invertible constant matrices in $\mathbb{C}^{N\times N}$. Let
\begin{align}
\bM=\bM_{1}+\bM_{2},\quad \br=(\br_{1},\br_{2}),\quad \bs=(\bs_{1},\bs_{2}),
\end{align}
where $\bM_{\iota}\in \mathbb{C}^{N\times N}[n,m]$,
$\br_{\iota}, \bs_{\iota} \in \mathbb{C}^{N\times 1}[n,m]$ are matrix functions of $(n,m)\in \mathbb{Z}^2$. With $\ba=\mathrm{diag}(1,-1)$, suppose that
\begin{subequations}
\begin{align}
&\bK\bM-\bM\bL=\br\bs^{\st},\\
&p\wt{\br}=p\br+\bK\br\ba,\quad q\wh{\br}=q\br+\bK\br\ba,\\
&p\bs=p\wt{\bs}+\bL^{\st}\wt{\bs}\ba,\quad q\bs=q\wh{\bs}+\bL^{\st}\wh{\bs}\ba,
\end{align}
\end{subequations}
or expanded into two separated systems:
\begin{subequations}
\begin{align}
&\bK\bM_{\iota}-\bM_{\iota}\bL=\br_{\iota}\bs_{\iota}^{\st},\\
&p\wt{\br}_{\iota}=p\br_{\iota}+(-1)^{\iota-1}\bK\br_{\iota},\quad q\wh{\br}_{\iota}=q\br_{\iota}+(-1)^{\iota-1}\bK\br_{\iota},\\
&p\wt{\bs}_{\iota}+(-1)^{\iota-1}\bL^{\st}\wt{\bs}_{\iota}=p\bs_{\iota},\quad q\wh{\bs}_{\iota}+(-1)^{\iota-1}\bL^{\st}\wh{\bs}_{\iota}=q\bs_{\iota},
\end{align}
\end{subequations}
with $\iota=1,2$. Then the master function
\begin{align}
\bS^{(i,j)}=\bs^{\st}\bL^{j}\bM^{-1}\bK^{i}\br
\end{align}
also satisfies relations \eqref{Sij1-ht}, and the functions $(U_{3},\mV)$ defined via this $\bS^{(i,j)}$ remain 
solutions to the dhGI models \eqref{dhGI1}, \eqref{dhGI2}, \eqref{dhGI3}, and \eqref{dhGI4}.
\end{prop}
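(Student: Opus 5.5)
The plan is to show that the KP-type master function $\bS^{(i,j)}=\bs^{\st}\bL^{j}\bM^{-1}\bK^{i}\br$ obeys the same shift relations \eqref{Sij1-ht}. Once these hold, everything from \eqref{bU} to the models \eqref{dhGI1}--\eqref{dhGI4} follows with no further work, since that derivation used only \eqref{Sij1-ht}, the recurrence \eqref{Sij-re=1-ne}, and the identities \eqref{VW}.

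The first step is the KP-type analogue of the recurrence relations. From $\bK\bM-\bM\bL=\br\bs^{\st}$ we get $\bM^{-1}\bK-\bL\bM^{-1}=\bM^{-1}\br\bs^{\st}\bM^{-1}$. Sandwiching this between $\bs^{\st}\bL^{j}$ and $\bK^{i}\br$ gives $\bS^{(i+1,j)}-\bS^{(i,j+1)}=\bS^{(0,j)}\bS^{(i,0)}$, which is \eqref{Sij-re=1-po}. The same manipulation with $\bK^{-1},\bL^{-1}$, using invertibility of $\bK$ and $\bL$, gives \eqref{Sij-re=1-ne}. It then follows that $\bV=\bI-\bS^{(-1,0)}$ and $\bW=\bI+\bS^{(0,-1)}$ satisfy $\bV\bW=\bI$, $\bS^{(1,-1)}=\bW\bU$ and $\bS^{(-1,1)}=\bU\bV$.

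The second step, which I expect to be the main obstacle, is the shift dynamics of $\bM$.
\begin{itemize}
\item Shift the Sylvester equation. Using $p\wt{\br}=p\br+\bK\br\ba$ and $p\bs=p\wt{\bs}+\bL^{\st}\wt{\bs}\ba$, show that $p(\wt{\bM}-\bM)=\bK\br\ba\wt{\bs}^{\st}\cdot(\text{scalar}/p)$, or more precisely $p\wt{\bM}-p\bM=\br\ba\wt{\bs}^{\st}$ up to the sign conventions carried by $\ba$.
\item To do this, multiply the candidate relation $p\wt{\bM}-p\bM=\br\ba\wt{\bs}^{\st}$ on the left by $\bK$ and on the right by $\bL$.
\item Check that the difference of the two sides solves a homogeneous Sylvester equation $\bK\bX-\bX\bL=\boldsymbol{0}$.
\item Conclude $\bX=\boldsymbol{0}$, assuming $\bK$ and $\bL$ share no eigenvalues. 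This spectral assumption is implicit and I would state it, exactly as Remark \ref{Re-2-1} does for the AKNS-type scheme.
\end{itemize}
The companion relation has the form $p\wt{\bM}-p\bM=\wt{\br}\ba\bs^{\st}$, again modulo how $\bK$ enters. Working componentwise with $\iota=1,2$ and the factors $(-1)^{\iota-1}$ should fix the precise placement of $\ba$, because $\br\ba\bs^{\st}=\br_1\bs_1^{\st}-\br_2\bs_2^{\st}$. Getting these two rank-two update formulas exactly right is where the care is needed.

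The final step derives \eqref{Sij1-ht} from these formulas. Write $\wt{\bM}^{-1}-\bM^{-1}=-\wt{\bM}^{-1}(\wt{\bM}-\bM)\bM^{-1}$ and insert the rank-two formula. Then substitute $\wt{\br}$ in terms of $\br$ and $\bK\br$, and $\bs$ in terms of $\wt{\bs}$ and $\bL^{\st}\wt{\bs}$. After sandwiching with $\wt{\bs}^{\st}\bL^{j}$ and $\bK^{i}\br$, the terms regroup into $p\wt{\bS}^{(i,j)}+\ba\wt{\bS}^{(i,j+1)}$ on one side and $p\bS^{(i,j)}+\bS^{(i+1,j)}\ba-\bS^{(0,j)}\ba\wt{\bS}^{(i,0)}$ on the other. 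Here the factor $\ba$ appears on the left because $\bs^{\st}=\wt{\bs}^{\st}+p^{-1}\ba\wt{\bs}^{\st}\bL$, and on the right from $\wt{\br}$. The other three relations follow in the same way, with $(p,\wt{\phantom{a}})\to(q,\wh{\phantom{a}})$ and the roles of the two shifted objects exchanged.

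One detail remains: $\det\bV=1$ must be rechecked, since \eqref{W} uses it. I would use the Weinstein–Aronszajn formula as in \eqref{det-v}, obtaining $\det\bV=\det(\bM^{-1}\bK^{-1}\bM\bL)=\det(\bL)/\det(\bK)$. This equals $1$ only if $\det\bK=\det\bL$. Otherwise I would note that the derivation of \eqref{A1} needs only $\bV\bW=\bI$ together with a constant determinant, and I would adjust by that constant scaling.
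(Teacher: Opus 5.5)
\textbf{The gap: the second shift relation.} The problem is with the second relation of each pair, \eqref{Sij1-t2} and \eqref{Sij1-h2}. The dhGI derivation really needs it: \eqref{pV2} is \eqref{Sij1-t2} at $(i,j)=(-1,0)$, and the derivation of \eqref{bU} uses it as well.

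You plan to obtain it from a companion update $p\wt{\bM}-p\bM=\wt{\br}\ba\bs^{\st}$ ``with the roles exchanged''. No update of this type exists. Already for $N=1$, write $\bK=k$, $\bL=l$ and take $\br_2=0$, so $\bM=\bM_1$. Then $p(\wt{M}-M)=r_1\wt{s}_1$, whereas $\wt{r}_1s_1=p^{-2}(p+k)(p+l)\,r_1\wt{s}_1$. Only the first component is present, so no placement of $\ba$ can repair this.

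The reason is structural. In SDE-II the Cayley-type dispersion $(p\bI-\bK_1)\wt{\br}_1=(p\bI+\bK_1)\br_1$ turns into the backward shift under $p\to-p$. That is exactly why the two relations in \eqref{Sij1-ht} are mirror images. The one-sided dispersion $p\wt{\br}=p\br+\bK\br\ba$ of the KP-type scheme has no such symmetry, so the second relation is not a role-swapped copy of the first.

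What does hold is an identity quadratic in $\bK$ and $\bL$:
\[ \wt{\bM}(p^2\bI-\bL^2)-(p^2\bI-\bK^2)\bM=p\,\wt{\br}\ba\bs^{\st}. \]
It follows from three ingredients:
\begin{itemize}
\item the update $p(\wt{\bM}-\bM)=\br\ba\wt{\bs}^{\st}$;
\item the identity $\bK^2\bM-\bM\bL^2=\bK\br\bs^{\st}+\br\bs^{\st}\bL$;
\item the dispersion relations.
\end{itemize}
The dispersion relations also give $p\bs^{\st}-\ba\bs^{\st}\bL=p^{-1}\wt{\bs}^{\st}(p^2\bI-\bL^2)$ and $p\wt{\br}-\bK\wt{\br}\ba=p^{-1}(p^2\bI-\bK^2)\br$. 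Multiply the quadratic identity by $\wt{\bM}^{-1}$ on the left and $\bM^{-1}$ on the right. Then sandwich it between $\wt{\bs}^{\st}\bL^{j}$ and $\bK^{i}\br$. This yields exactly \eqref{Sij1-t2}, and the $q$-version is identical.

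\textbf{The first relation as written does not close.} With $\wt{\bM}^{-1}-\bM^{-1}=-\wt{\bM}^{-1}(\wt{\bM}-\bM)\bM^{-1}$ and the sandwich $\wt{\bs}^{\st}\bL^{j}(\cdot)\bK^{i}\br$, you only produce mixed objects such as $\wt{\bs}^{\st}\bL^{j}\wt{\bM}^{-1}\br$, which are not master functions.

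You need the other factorization, $\wt{\bM}^{-1}-\bM^{-1}=-\bM^{-1}(\wt{\bM}-\bM)\wt{\bM}^{-1}$, sandwiched between $\bs^{\st}\bL^{j}$ and $\bK^{i}\wt{\br}$. Then:
\begin{itemize}
\item $p\bs^{\st}\bL^{j}=p\wt{\bs}^{\st}\bL^{j}+\ba\wt{\bs}^{\st}\bL^{j+1}$ turns the $\wt{\bM}^{-1}$-term into $p\wt{\bS}^{(i,j)}+\ba\wt{\bS}^{(i,j+1)}$;
\item $p\bK^{i}\wt{\br}=p\bK^{i}\br+\bK^{i+1}\br\ba$ turns the $\bM^{-1}$-term into $p\bS^{(i,j)}+\bS^{(i+1,j)}\ba$;
\item the correction term is $-\bS^{(0,j)}\ba\wt{\bS}^{(i,0)}$, as required.
\end{itemize}
This is what your own description of where the factors $\ba$ come from actually requires.

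\textbf{What is sound.} The following steps are correct:
\begin{itemize}
\item the recurrences and the identities $\bV\bW=\bI$, $\bS^{(1,-1)}=\bW\bU$, $\bS^{(-1,1)}=\bU\bV$;
\item the update $p(\wt{\bM}-\bM)=\br\ba\wt{\bs}^{\st}$ via a homogeneous Sylvester equation. This needs $\bK$ and $\bL$ to have no common eigenvalue, which the statement leaves implicit.
\end{itemize}
Your observation that $\det\bV=\det\bL/\det\bK$ rather than $1$ is also correct, and it is not addressed in the paper. Rescale $\bV$ by $(\det\bK/\det\bL)^{1/2}$ and $\bW$ by the inverse factor. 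This leaves the products in \eqref{th-UVW} and the ratio $\mV$ unchanged, so \eqref{h-UiVi}--\eqref{A1} go through.
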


\section{Remaining scalar dGI equations}\label{App-B}
The following three scalar dGI equations are obtained by applying the reduction $\nu=u_{3}^*$ to \eqref{dGI-2}, \eqref{dGI-3} and \eqref{dGI-4}, respectively:
\begin{subequations}
\label{d-gi-rest}
\begin{align}
& 2|p|^2(q(\wh{u}_{3}-\wh{\wt{u}}_{3})+q^*(\wt{u}_{3}-u_{3}))+2|q|^2(p(\wh{\wt{u}}_{3}-\wt{u}_{3})+p^*(u_{3}-\wh{u}_{3}))+(p^*q\wh{u}_{3}-pq^*\wt{u}_{3})\times \nn \\
&(q^*-q+p^*-p+\theta_{q}^{-1}\wh{u}_{3}u_{3}^*-\theta_{q}u_{3}\wh{u}_{3}^*+\theta_{p}^{-1}\wh{\wt{u}}_{3}\wh{u}_{3}^*-\theta_{p}\wh{u}_{3}\wh{\wt{u}}_{3}^*
+|\wh{\wt{u}}_{3}|^2-|u_{3}|^2-Q_{1}-\wh{P}_{1}) \nn \\
&+(p^*q^*u_{3}-pq\wh{\wt{u}}_{3})(q^*-q+p-p^*-\theta_{p}^{-1}\wh{\wt{u}}_{3}\wh{u}_{3}^*+\theta_{p}\wh{u}_{3}\wh{\wt{u}}_{3}^*+\theta_{q}^{-1}\wh{\wt{u}}_{3}\wt{u}_{3}^*
-\theta_{q}\wt{u}_{3}\wh{\wt{u}}_{3}^* \nn \\
& +|\wh{u}_{3}|^2-|\wt{u}_{3}|^2-\wt{Q}_{1}+\wh{P}_{1})=0, \\
& 2|p|^2(q(\wh{u}_{3}-\wh{\wt{u}}_{3})+q^*(\wt{u}_{3}-u_{3}))+2|q|^2(p(\wh{\wt{u}}_{3}-\wt{u}_{3})+p^*(u_{3}-\wh{u}_{3}))+(p^*q\wh{u}_{3}-pq^*\wt{u}_{3})\times \nn \\
& (q^*-q+p^*-p+\theta_{q}^{-1}\wh{\wt{u}}_{3}\wt{u}_{3}^*-\theta_{q}\wt{u}_{3}\wh{\wt{u}}_{3}^*+\theta_{p}^{-1}\wt{u}_{3}u_{3}^*-\theta_{p}u_{3}\wt{u}_{3}^*
+|\wh{\wt{u}}_{3}|^2-|u_{3}|^2-\wt{Q}_{1}-P_{1}) \nn \\
& +(p^*q^*u_{3}-pq\wh{\wt{u}}_{3})(q^*-q+p-p^*-\theta_{p}^{-1}\wt{u}_{3}u_{3}^*+\theta_{p}u_{3}\wt{u}_{3}^*+\theta_{q}^{-1}\wh{u}_{3}u_{3}^*
-\theta_{q}u_{3}\wh{u}_{3}^* \nn \\ 
& +|\wh{u}_{3}|^2-|\wt{u}_{3}|^2-Q_{1}+P_{1})=0, 
\end{align}
\begin{align}
& 2|p|^2(q(\wh{u}_{3}-\wh{\wt{u}}_{3})+q^*(\wt{u}_{3}-u_{3}))+2|q|^2(p(\wh{\wt{u}}_{3}-\wt{u}_{3})+p^*(u_{3}-\wh{u}_{3}))+(p^*q\wh{u}_{3}-pq^*\wt{u}_{3})\times \nn \\
&(q^*-q+p^*-p+\theta_{q}^{-1}\wh{\wt{u}}_{3}\wt{u}_{3}^*-\theta_{q}\wt{u}_{3}\wh{\wt{u}}_{3}^*+\theta_{p}^{-1}\wt{u}_{3}u_{3}^*-\theta_{p}u_{3}\wt{u}_{3}^*
+|\wh{\wt{u}}_{3}|^2-|u_{3}|^2-\wt{Q}_{1}-P_{1}) \nn \\ 
&+(p^*q^*u_{3}-pq\wh{\wt{u}}_{3})(q^*-q+p-p^*-\theta_{p}^{-1}\wh{\wt{u}}_{3}\wh{u}_{3}^*+\theta_{p}\wh{u}_{3}\wh{\wt{u}}_{3}^*
+\theta_{q}^{-1}\wh{\wt{u}}_{3}\wt{u}_{3}^*-\theta_{q}\wt{u}_{3}\wh{\wt{u}}_{3}^* \nn \\ 
& +|\wh{u}_{3}|^2-|\wt{u}_{3}|^2-\wt{Q}_{1}+\wh{P}_{1})=0,
\end{align}
\end{subequations}
where $P_{1}$ and $Q_{1}$ are defined as the equation \eqref{PQ-1}.

\section{Remaining scalar dhGI equations (local reduction)}\label{App-C}

The following three scalar dhGI equations are obtained by applying the local reduction $\mV=U_{3}^*$ to models \eqref{dhGI2}, \eqref{dhGI3} and \eqref{dhGI4}, respectively:
\begin{subequations}
\label{eq-hGI-rest}
\begin{align}
& (U_{3}(\wt{U}_{3}^*-\wh{U}_{3}^*)+U_{3}^*(\wh{U}_{3}-\wt{U}_{3})-|\wh{U}_{3}|^2+|\wt{U}_{3}|^2-\mathrm{i}P^{'}_{1}+\mathrm{i}Q_{1}^{'}) (\wh{\wt{U}}_{3}-U_{3}) \nn \\
& \qquad +(\wt{U}_{3}(\wh{\wt{U}}_{3}^*-U_{3}^*)+\wt{U}_{3}^*(U_{3}-\wh{\wt{U}}_{3})+|\wh{\wt{U}}_{3}|^2-|U_{3}|^2-\mathrm{i}P_{1}^{'}-\mathrm{i}\wt{Q}_{1}^{'}) (\wt{U}_{3}-\wh{U}_{3})=0,\\
&  (\wh{\wt{U}}_{3}(\wt{U}_{3}^*-\wh{U}_{3}^*)+\wh{\wt{U}}_{3}^*(\wh{U}_{3}-\wt{U}_{3})-|\wh{U}_{3}|^2+|\wt{U}_{3}|^2-\mathrm{i}\wh{P}_{1}^{'}+\mathrm{i}\wt{Q}_{1}^{'})
(\wh{\wt{U}}_{3}-U_{3}) \nn \\ 
& \qquad +(\wh{U}_{3}(\wh{\wt{U}}_{3}^*-U_{3}^*)+\wh{U}_{3}^*(U_{3}-\wh{\wt{U}}_{3})+|\wh{\wt{U}}_{3}|^2-|U_{3}|^2-\mathrm{i}\wh{P}_{1}^{'}-\mathrm{i}Q_{1}^{'}) (\wt{U}_{3}-\wh{U}_{3})=0, \\
& (\wh{\wt{U}}_{3}(\wt{U}_{3}^*-\wh{U}_{3}^*)+\wh{\wt{U}}_{3}^*(\wh{U}_{3}-\wt{U}_{3})-|\wh{U}_{3}|^2+|\wt{U}_{3}|^2-\mathrm{i}\wh{P}_{1}^{'}+\mathrm{i}\wt{Q}_{1}^{'}) (\wh{\wt{U}}_{3}-U_{3}) \nn \\
& \qquad +(\wt{U}_{3}(\wh{\wt{U}}_{3}^*-U_{3}^*)+\wt{U}_{3}^*(U_{3}-\wh{\wt{U}}_{3})+|\wh{\wt{U}}_{3}|^2-|U_{3}|^2-\mathrm{i}P_{1}^{'}-\mathrm{i}\wt{Q}_{1}^{'}) (\wt{U}_{3}-\wh{U}_{3})=0,
\end{align}
\end{subequations}
where $P^{'}_{1}$ and $Q^{'}_{1}$ are defined as the equation \eqref{P'Q'-1}.

\section{Remaining nonlocal dhGI equations}
\label{App-D}

The following three nonlocal dhGI equations are obtained from models $(N.2)$, $(N.3)$ and $(N.4)$ under the nonlocal reduction $\mV=\mathrm{i}U_{3,-1}^*$:
\begin{subequations}
\label{non-eq-hGI-rest}
\begin{align}
& \big(U_{3}(\wt{U}_{3,-1}^*-\wh{U}_{3,-1}^*)+U_{3,-1}^*(\wh{U}_{3}-\wt{U}_{3})-\wh{U}_{3}\wh{U}_{3,-1}^*+\wt{U}_{3}\wt{U}_{3,-1}^*-P^{'}_{2}+Q_{2}^{'}\big)
(\wh{\wt{U}}_{3}-U_{3}) \nn \\ 
& +\big(\wt{U}_{3}(\wh{\wt{U}}_{3,-1}^*-U_{3,-1}^*)+\wt{U}_{3,-1}^*(U_{3}-\wh{\wt{U}}_{3})+\wh{\wt{U}}_{3}\wh{\wt{U}}_{3,-1}^*-U_{3}U_{3,-1}^*
-P_{2}^{'}-\wt{Q}_{2}^{'}\big) (\wt{U}_{3}-\wh{U}_{3})=0, \\
& \big(\wh{\wt{U}}_{3}(\wt{U}_{3,-1}^*-\wh{U}_{3,-1}^*)+\wh{\wt{U}}_{3,-1}^*(\wh{U}_{3}-\wt{U}_{3})-\wh{U}_{3}\wh{U}_{3,-1}^*
+\wt{U}_{3}\wt{U}_{3,-1}^*-\wh{P}_{2}^{'}+\wt{Q}_{2}^{'}\big) (\wh{\wt{U}}_{3}-U_{3}) \nn \\
& + \big(\wh{U}_{3}(\wh{\wt{U}}_{3,-1}^*-U_{3,-1}^*)+\wh{U}_{3,-1}^*(U_{3}-\wh{\wt{U}}_{3})+\wh{\wt{U}}_{3}\wh{\wt{U}}_{3,-1}^*-U_{3}U_{3,-1}^*
-\wh{P}_{2}^{'}-Q_{2}^{'}\big) (\wt{U}_{3}-\wh{U}_{3})=0, \\
& \big(\wh{\wt{U}}_{3}(\wt{U}_{3,-1}^*-\wh{U}_{3,-1}^*)+\wh{\wt{U}}_{3,-1}^*(\wh{U}_{3}-\wt{U}_{3})-\wh{U}_{3}\wh{U}_{3,-1}^*+\wt{U}_{3}\wt{U}_{3,-1}^*-\wh{P}_{2}^{'}
+\wt{Q}_{2}^{'}\big) (\wh{\wt{U}}_{3}-U_{3}) \nn \\
& +\big(\wt{U}_{3}(\wh{\wt{U}}_{3,-1}^*-U_{3,-1}^*)+\wt{U}_{3,-1}^*(U_{3}-\wh{\wt{U}}_{3})+\wh{\wt{U}}_{3}\wh{\wt{U}}_{3,-1}^*-U_{3}U_{3,-1}^*
-P_{2}^{'}-\wt{Q}_{2}^{'}\big) (\wt{U}_{3}-\wh{U}_{3})=0,
\end{align}
\end{subequations}
where $P^{'}_{2}$ and $Q^{'}_{2}$ are defined as the equation \eqref{P'Q'-2}.

\end{appendix}

\end{document}